\DeclareOldFontCommand{\it}{\normalfont\itshape}{\mathit}
\definecolor{darkred}{rgb}{0.6,0.0,0.1}
\definecolor{darkgreen}{rgb}{0,0.5,0}
\definecolor{darkblue}{rgb}{0,0,0.5}
\numberwithin{equation}{section}
\newtheorem{theorem}{Theorem}[section]
\newtheorem{corollary}{Corollary}[section]
\newtheorem{example}{Example}[section]
\newtheorem{ass}{Assumption}
\newtheorem{proposition}{Proposition}[section]
\newtheorem{lemma}[proposition]{Lemma}
\newtheorem{remark}{Remark}[section]
\theoremstyle{definition}
\DeclareMathOperator{\E}{\mathbb{E}}
\def\Var{\mathop{{\mathbb V}ar}\nolimits}%
\newcommand{\R}{\mathbb{R}}
\renewcommand{\Pr}{\mathbb{P}}
\newcommand{\set}[1]{{\left\lbrace #1\right\rbrace }}
\DeclareMathOperator*{\argmin}{arg\,min}
\def\1{\mathop{\mathbbm 1}\nolimits}
\renewcommand{\cite}{\citet}
\begin{document}
\title{Varying Random Coefficient Models\thanks{
\, I would like to thank editor Elie Tamer and two anonymous referees for their comments.
I am also thankful to seminar participants at CREST Paris, Universit\"at G\"ottingen, the Berlin IRTG workshop,  the Workshop on Inverse Problems in Heidelberg, Universit\"at Bonn, the Econometric Study Group Conference in Bristol and Universit\"at Konstanz for their helpful suggestions.
Financial support by Deutsche Forschungsgemeinschaft through CRC TRR 190 is gratefully acknowledged.}}
\author{\textsc{Christoph Breunig}\thanks{Department of Economics, Emory University, Rich Memorial Building, Atlanta, GA 30322, USA, e-mail:
\url{	christoph.breunig@emory.edu}}\\
{\small \textit{Emory University}}
}

\date{\today }
\maketitle
\begin{abstract}
This paper analyzes unobserved heterogeneity 
when observed characteristics are modeled nonlinearly. The proposed model builds on {\it varying random coefficients} (VRC) that are determined by  nonlinear functions of observed regressors and additively separable unobservables. This paper proposes a novel  estimator of the VRC density based on weighted sieve minimum distance. The main example of sieve bases are Hermite functions which yield a numerically stable estimation procedure. This paper shows inference results that go beyond what has been shown in ordinary RC models.
 We provide in each case rates of convergence  and also establish pointwise limit theory of linear functionals, where a prominent example is the density of potential outcomes. In addition, a multiplier bootstrap procedure is proposed to construct uniform confidence bands. A Monte Carlo study examines finite sample properties of the estimator and shows that it performs well even when the regressors associated to RC are far from being heavy tailed. 
Finally, the methodology is applied to analyze heterogeneity in income elasticity of demand for housing.
\end{abstract}

\begin{tabbing}
\noindent \emph{Keywords:}  Random Coefficients, Varying Coefficients,  Sieve Minimum Distance,\\
Hermite  Functions, Rate of convergence, Bootstrap uniform confidence bands. 
\end{tabbing}
\begin{tabbing}
\noindent \emph{JEL classification:}  C14, C21\\[.2ex]
\end{tabbing}
\newpage
\section{Introduction}
Heterogeneity in individual behavior is a common source of variation in microeconometric applications.
Thus, in recent years it  became increasingly popular to explicitly model unobserved heterogeneity, for instance, by introducing random coefficients.
Yet identification in random coefficient models requires functional form restrictions. This paper shows that one can be much more flexible with respect to observed characteristics which can essentially influence the shape of the density of random coefficients. 
Specifically we extend the ordinary random coefficient model to allow for nonlinearities in observed heterogeneity, captured by varying coefficients.

The {\it varying random coefficient} (VRC) model is given by
\begin{align}
Y&=B_0+B_1'X,  \label{mod:gen}
\end{align}%
where $Y$ is a scalar dependent variable and $X$ is a vector of covariates of dimension $d-1$. 
The VRC vector $B=(B_0,B_1')'$ satisfies
\begin{align}
B_0=g_0(W)+A_0\quad \text{and}\quad 
B_{1,l}&= g_l(W)+A_{1,l}\quad\text{ where } 1\leq l\leq d-1\label{mod:gen:rc}
\end{align}%
for some covariates $W$ and unobservables $A=(A_0,A_{1,1}, \dots, A_{1,d-1})'$. The varying coefficient functions $g_0(\cdot),\dots,g_{d-1}(\cdot)$  are unknown and capture nonlinearities in observed heterogeneity.
 The vectors $X$ and $W$ may have elements in common but, to ensure identification of the varying coefficient functions in general, we rule out that $X$ is a subvector of $W$.
In this model, the {\it varying random slope} (VRS) given by $B_1$ represents observed and unobserved heterogeneity in the dependence of $Y$ on $X$. The VRC  is thus more general than the ordinary random coefficient (RC) model where all varying coefficient functions vanish and hence, $B=A$.  Identification of the VRC model is based on full independence of $A$ and $X$ but only conditional mean independence of $A$ and $(X,W)$.  While identification of the model requires $X$ to have enough variation, our setup permits $W$ to be discrete.

This paper is concerned with inference on the density of the VRC vector $B$ holding observed characteristics $W$ fixed. This density contains all the information of the underlying heterogeneity in the model and many functionals of it are of interest, e.g.,  the distribution of potential outcomes. 
A density estimator based on  weighted sieve minimum distance is proposed that builds on a conditional characteristic function equation of the model. 
The estimation criterion is minimized over a finite dimensional sieve space which is also convenient to impose shape restrictions on the estimator. The estimator is of closed form if no constraints are imposed on the sieve space and then coincides with a double series least squares estimator. An initial weighting step is used in the estimation criterion to stabilize the procedure. 
This is important, as it is well known that estimation of the joint RC density in  ordinary RC models leads to an ill-posed inverse problem. 
One insight of this paper is that our procedure allows us to separate estimation of the VRS density from estimation of the varying random intercept $B_0$. In particular, we show that estimation of the VRS density does not suffer from the ill-posed estimation problem once we impose finite dimensional restrictions on the density of $B_0$. 

For the sieve minimum distance estimator, inference results are established that go beyond what has been obtained in ordinary RC models. 
The rate of convergence of the estimator is derived, which coincides with the usual ill-posed rate of convergence when estimating the joint VRC density and corresponds to the usual well-posed, nonparametric  rate when only the density of  VRS is of interest and semiparametric restrictions on the random intercept are imposed.  Many important objects of interest, such as the distribution of the potential outcome, are  functionals of the density of $B$.  For a plug-in estimator of such linear functionals we establish pointwise asymptotic normality.
This paper also provides a bootstrap procedure to construct uniform confidence bands of the estimator. The inference results in this paper make explicit how the marginal distribution of $X$ affects the asymptotic behavior of the estimator.

Identification of the model requires a large  support condition of $X$ in general. Yet under mild assumptions, identification can be also achieved under bounded support via  extrapolation. 
This motivates the use of global basis functions for the sieve minimum distance estimator, such as the Hermite functions. When the sieve space is spanned by Hermite functions, we see that the estimator considerably simplifies as these basis functions form eigenfunctions of the Fourier transform. The estimator performs well in finite samples even when covariates $X$ are far from being  heavy tailed.  This is demonstrated in Monte Carlo simulations that also clarify how the variance of $X$ affects the mean integrated squared error of the estimator in finite samples. 

The estimation procedure is also applied to analyze heterogeneity in income elasticity of housing demand using German survey data. In our specification, estimated observed housing characteristics exhibit a nonlinear shape. The estimated density of heterogeneous income elasticity is unimodal with mode close to zero and  positively skewed. Uniform confidence bands allow to make significant statements about the shape of the estimated density. The empirical application demonstrates that our proposed methodology can be useful to analyze complex heterogeneity using cross sectional data.

Nonparametric identification and estimation of ordinary RC models is considered by \cite{beran1992}, \cite{beran1996}, and \cite{hoderlein2010}. For testing of qualitative features of the ordinary RC models see \cite{dunker2017}.
\cite{lewbel2017unobserved} generalize these models to allow for nonlinear index functions and in the next section we will provide a more detailed comparison of it to the VRC model. 
The ordinary RC models can be extended to {\it conditional random coefficient models} that assume model equation \eqref{mod:gen} together with the condition that $X$ is independent of $B$ conditional on $W$. This model is more flexible than $(\ref{mod:gen}$--$\ref{mod:gen:rc})$, but requires that the conditional density of $X$ given $W$ satisfies a large support condition for all realizations of $W$. Beyond this more restrictive support restriction,  estimation in conditional RC models also  suffers from the curse of dimensionality of $W$. This is why functional form restrictions are typically employed rather than considering the more general conditional RC models, see for instance, \cite[p. 1120]{lewbel2017unobserved}.
Recently,  {\it correlated random coefficient models} were studied in the literature which allow for full dependence of random coefficients and covariates $X$. 
In this setting, instruments are available that drive the covariates $X$ but not the random coefficients in \eqref{mod:gen}. These types of models are analyzed by
\cite{masten2014} and \cite{hoderlein2014}. 
While such a model is clearly more general than the VRC model, identification of the CRC models can be challenging with more restrictive exclusion assumptions and large support conditions on the instruments.

The methodology of sieve estimation became increasingly popular in recent years. For sieve estimation of  conditional moment restrictions models see  \cite{NP03econometrica} and \cite{AC03econometrica}. The VRC model does not fall into this category.
 For sieve estimation of ordinary RC models with discrete outcome see \cite{fox2011simple} and \cite{fox2016simple}. In binary choice models,  \cite{gautier2011} proposed an estimator  based on needlet thresholding. In the context of specification testing, a sieve approach was used by \cite{breunig2016}. In the literature on varying coefficients, series estimators were analyzed by  \cite{xia1999}, \cite{fan2003}, \cite{xue2012}, or \cite{ma2015}. 

The remainder of the paper is organized as follows. Section \ref{sec:model} provides the setup, motivating examples, and sufficient conditions for nonparametric identification. In Section \ref{sec_inference}, the estimation procedure based on sieve minimum distance is introduced and its asymptotic properties are established. Section \ref{sec_MC} is concerned with  the finite sample properties of the estimator analyzed via Monte Carlo simulation and an empirical illustration. Section \ref{sec_Conclusion} concludes. All proofs can be found in the appendix.

\section{The Model and Identification}\label{sec:model}
This section consists of two subsections. Subsection \ref{sub_sec:vrc} recalls the  varying random coefficients (VRC) model, outlines its key properties, and provides motivating examples for it. 
Subsection \ref{sub_sec:ident} provides an identification result of the joint density of the VRC vector $B$. 
\subsection{The Varying Random Coefficient Model}\label{sub_sec:vrc}
Consider again equations $(\ref{mod:gen}$--$\ref{mod:gen:rc})$, the VRC model is given by
\begin{align}
Y&=B_0+B_1'X, \tag{\ref{mod:gen}}\\
B_0&=g_0(W)+A_0\quad \text{and}\quad B_{1,l}= g_l(W)+A_{1,l}\quad\text{ where } 1\leq l\leq d-1.\tag{\ref{mod:gen:rc}}
\end{align}
As stated above $X$ and $W$ may have elements in common. Yet without further functional form restrictions on the varying coefficient functions $g_0,\dots,g_{d-1}$ we need to rule out that $X$ has only elements that are contained in $W$ (see also Assumption \ref{ass:ind} and the discussion thereafter). 
The covariates $X$ are assumed to be independent of $A$ and  the vector of covariates $(X',W')'$ is restricted to be mean independent of $A$, i.e., $\E[A|X, W]=0$ (see Assumption \ref{ass:ind} below). All the results in this paper will hold if there were no varying coefficients, i.e., model $(\ref{mod:gen}$--$\ref{mod:gen:rc})$ is the ordinary RC model. While identification of the model requires $X$ to have enough variation, our setup permits $W$ to be discrete.

Under conditional mean independence of $A$ and $(X,W)$,  model $(\ref{mod:gen}$--$\ref{mod:gen:rc})$ implies the varying  coefficient model
\begin{align}\label{mod:vcim}
\E[Y|X,W]= g_0(W)+\sum_{l=1}^{d-1} g_l(W)X_l=:g(S),
\end{align}
using the notation $S=(X',W')'$.
In the conditional mean restriction \eqref{mod:vcim}, the varying coefficient functions $g_l$ are identified through a rank condition, see also below. For an overview article of varying coefficient models see \cite{park2015varying}.
We emphasize that the varying coefficients specification in \eqref{mod:vcim} also derives from the additive separability of the random coefficients $A$ in equation \eqref{mod:gen:rc}. Without imposing such an additively separable structure, our model is in general not identified, see \cite[Corollary 2]{masten2014}.
On the other hand, under further assumptions, \cite{lewbel2017unobserved} establish identification when equation \eqref{mod:gen} is replaced by $Y=B_0+\sum_{l=1}^{d-1}G_l(B_{1,l}X_l)$ for unknown functions $G_l$  and $A$ is independent of $(X,W)$ (it is thus non-nested to our VRC model). While the sieve minimum distance approach in this paper allows for estimation of additional nonlinear index functions, we do not consider this extension. 

Identification of the varying coefficient functions $g_0,\dots,g_{d-1}$ in the VRC model $(\ref{mod:gen}$--$\ref{mod:gen:rc})$ can be obtained in the following two scenarios: First,  under a full rank condition on $\E[XX'|W=w]$  or, second, if $g_1\equiv\dots\equiv g_{d-1}\equiv0$. In the first case, the VRC model has the interpretation similar to a conditional random coefficient model but, instead of leaving the distribution of $X$ and $W$ unrestricted,  it imposes more structure on it. 
Also similarly to conditional random coefficient models, the role of $W$ is to ensure that independence between $A$ and $X$ is more plausible. The variables $W$ can also serve as control function residuals, which then allows $X$ to be endogenous, i.e., $A$ is unconditionally correlated with $X$.\footnote{Following \cite{imbens2009} assume that endogenous regressors $X$ are related to observed instruments $Z$ via $X=\psi(Z,W)$ where $\psi$ is strictly monotonic in scalar $W$ and $Z\perp(B,W)$. The model implies independence of $X$ and  $A$ conditional on $W=F_{X|Z}(X|Z)$.}
In the second case, only $g_0$ differs from the zero function and thus the VRC model has the interpretation of a nonlinear model in $X$, when $X=W$,  but with random coefficients only on the linear term. 

This paper is concerned with estimation of the VRC density holding the observed characteristics fixed at some potential realization $w$, i.e., the density $f_B(\cdot,w)$ of 
\begin{align*}
B^w= \mathbf g(w)+A
\end{align*}
where $\mathbf g(\cdot)=(g_0(\cdot),\dots,g_{d-1}(\cdot))'$ denotes the vector of varying coefficient functions.
In the case where $X$ and $W$ have no common elements, $B^w$ captures heterogeneous marginal effects. If $X$ and $W$ have joint elements then, to obtain marginal effects, replace $\mathbf g(w)$ with the vector of partial derivatives of $\mathbf g(w)$ w.r.t. $x$. The results in this paper are still valid in this case but it is not made explicit in order to keep the notation simple. 
By holding observed characteristics $W$ fixed, the density $f_B(\cdot,w)$ contains all information on unobserved heterogeneity.
Many objects of interest, such as the density of potential outcomes of $Y$, are linear functionals of  $f_B(\cdot,w)$.

Economic theory and empirical findings suggest nonlinearities in many applications of interest. For instance, through a nonparametric analysis of Engel curves to analyze nonlinearities in total expenditure, \cite{banks1997} suggest  quadratic terms in the logarithm of total expenditure.
While a random coefficient version of their nonlinear model is not identified one might still account for unobserved heterogeneity by allowing only the coefficient for the linear term to vary among individuals. The following two examples provide a relation of VRC to measurement error models and show that ignoring nonlinearities in the varying coefficients may have severe consequences in a standard Monte Carlo exercise setting.

\begin{example}[Measurement Error Models]
Consider a regression model with interaction term and measurement error in one covariate:
\begin{align}\label{mod:me}
Y= \beta_0+\beta_1 X_1 + \beta_2 X_1 X_2^*+ \beta_3 X_2^* + U,
\end{align}
where the variable $X_2^*$ is observed only with measurement error, i.e., $X_2=X_2^*+V$.
The deterministic parameters $\beta_l$, $1\leq l\leq 3$, are unknown, and  $(U,V)$ are unobservables with zero mean. Assume that an additional variable $W$, the instrumental variable, is available which satisfies $X_2^*=m(W)+\widetilde A$ where  $W$ is conditional mean independent of $\widetilde A$ and $V$ (see, for instance, \cite{hausman1991}, \cite{schennach2007instrumental}, or \cite{ben2017identification}). The conditional mean restriction identifies the parameters $\beta_0,\dots, \beta_3$ since the function $m(W)=\E[X_2|W]$ is identified. 
Using the instrumental variables approach we can rewrite the model \eqref{mod:me} as
\begin{align*}
Y= \underbrace{\beta_0+\beta_3m(W)}_{g_0(W)}+\underbrace{\beta_3\widetilde A+U}_{A_0}+
\big(\underbrace{\beta_1+ \beta_2m(W)}_{g_1(W)}+\underbrace{\beta_2\widetilde A}_{A_1}\big)    X_1
\end{align*}
and thus, is a special case of the VRC model $(\ref{mod:gen}$--$\ref{mod:gen:rc})$.
\end{example}

\begin{example}[Monte Carlo Simulation under misspecification]\label{example:MC}
 This finite sample example shows that falsely assuming linearity of varying coefficient functions may lead to severe biases that go beyond  typical approximation errors. 
Here, draws of regressors $(X,W)$ are generated from the bivariate standard normal distribution.  Let $Y=A_0+B_1X$ where $B_1=2W^2+A_1$ and  random coefficients $(A_0,A_1)$ are generated independently of $(X,W)$ as follows: $A_1$ is drawn from a mixture of normal distributions, i.e., $\mathcal{N}\left( -1.5,2\right) $ and $\mathcal{N}\left( 1.5,1\right) $ with weights $0.5$, and independently from $A_0\sim\mathcal{N}( 0,1)$. 
\begin{figure}[ht]
	\centering
		\includegraphics[width=15cm]{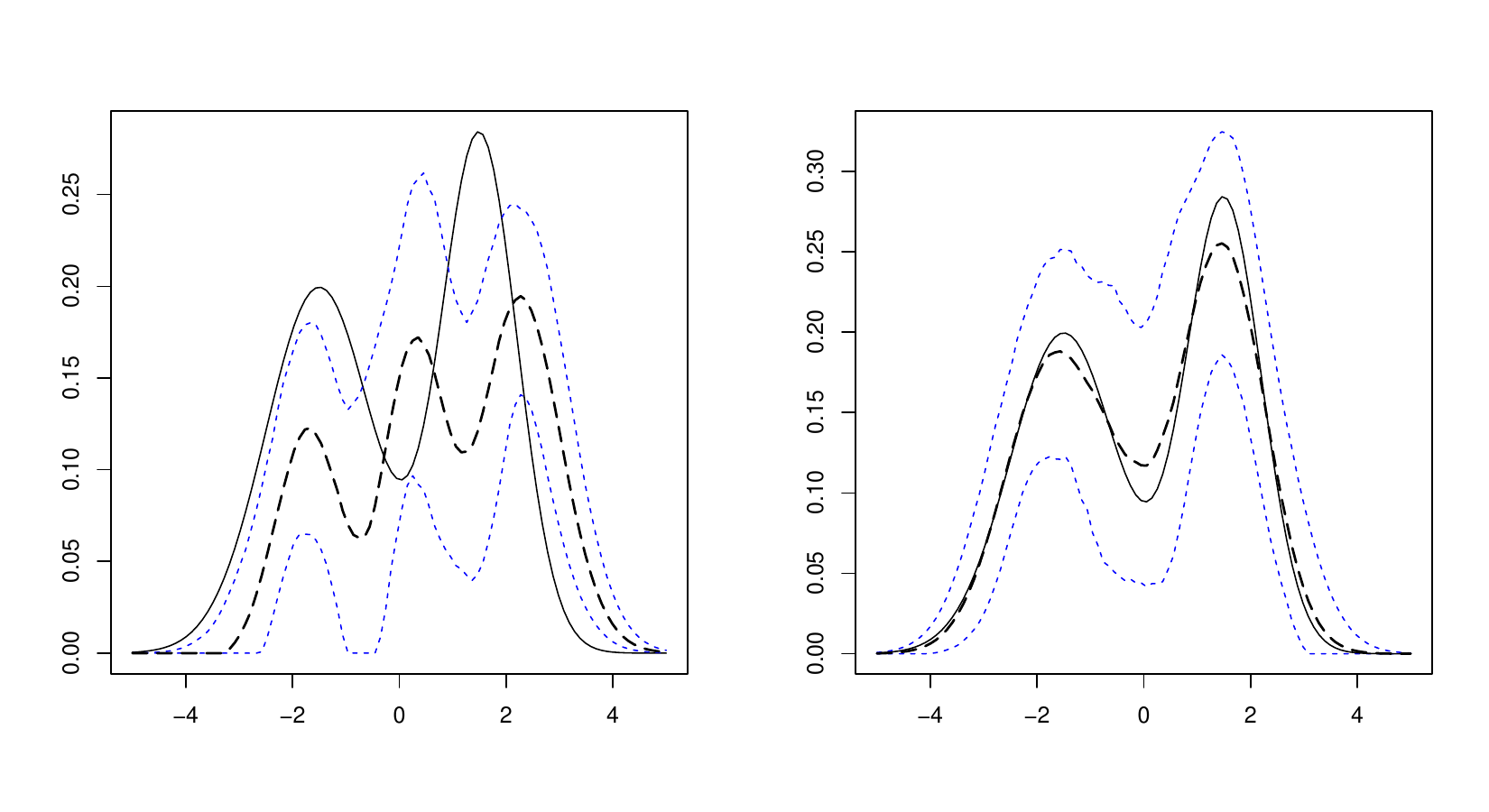}
		\vskip -1cm
	 \caption{{\small Solid lines: true density $f_{B_1}(\cdot,w)$ with $w=0$. Dotted lines: median and  pointwise $95\%$ confidence intervals. Left: Using OLS estimator of $g_1$. Right: Using B-splines estimator of $g_1$. Sample Size: $n=1000$. Monte Carlo iterations: $1000$. }}\label{misspec}
\end{figure}

We implement our estimator using Hermite functions, as described in the Monte Carlo section, but estimate the function $g_1(w)=2w^2$ once via ordinary least squares (OLS) and once via quadratic B-splines with three interior knots. Figure \ref{misspec} shows on the left the resulting estimator when $g_1$ is estimated via OLS and on the right when $g_1$ is estimated via B-splines.  
From Figure \ref{misspec} we see that ignoring the nonlinearity in $g_1$ can imply additional nonlinearities in the resulting estimator of the density of $B_1$.
\end{example}

\subsection{Identification}\label{sub_sec:ident}
This section provides assumptions under which the density $f_B(\cdot,w)$ of $B^w$ is identified for any value $w$ in the support of $W$. We now impose restrictions on the observed and unobserved variables of our model. 

\begin{ass}
\label{ass:ind} (i) $X$ is independent of $A$. (ii) $(X',W')$ is mean independent of $A$, i.e.,  $\E[A|X,W]=0$. (iii) $\mathbf g(\cdot)=(g_0(\cdot),\dots,g_{d-1}(\cdot))'$ is identified through the conditional moment equation \eqref{mod:vcim}.
\end{ass}

An independence assumption similar to Assumption \ref{ass:ind} is common in the literature on RC models with cross-sectional data (see, for instance, \cite{beran1993}, \cite{beran1996}, \cite{hoderlein2010}). It should be also emphasized that the independence assumption can often be justified in our model if the information in $W$ about heterogeneity is rich enough and in this sense is milder than in the ordinary RC model where $\mathbf g\equiv 0$. Clearly, if $W$ contains only elements in $X$ and $\E[A]=0$ then Assumption \ref{ass:ind} (ii) is implied by (i). 

Assumption \ref{ass:ind} (iii) is automatically satisfied if $g_l\equiv 0$ for all $1\leq l\leq d-1$.  Otherwise, Assumption \ref{ass:ind} (iii) is satisfied if for all $w$ in the support of $W$, the smallest
eigenvalue of $\E[XX'|W=w]$ is bounded away from zero. This rank condition is commonly imposed in the varying coefficient literature. 
In particular, it rules out that the vector of regressors $X$ has only values that are also contained in the vector $W$. It is also possible to relax the rank condition by imposing functional form restrictions on $g$, see \cite{fan2003}. 
Throughout the paper, the conditional characteristic function of $Y-g(S)$ given $X$ is denoted by $h(x,t;g)=\E\big[\exp\big(it(Y-g(S))\big)\big|X=x\big]$.
\begin{ass}\label{large:supp}
(i) $X$ has full support $\mathbb R^{d-1}$.
(ii) $\int_{\mathbb R^{d-1}} \int_{\mathbb R}  |t|^{d-1} |h(x,t;g)|dt\, dx<\infty$.
\end{ass}
While large support conditions are often required in econometrics to ensure identification, Assumption \ref{large:supp} (i) can be relaxed. If the distribution of $A$ has finite absolute moments and is uniquely determined by its moments, then identification with bounded support of $X$ can be achieved by extrapolation, see \cite{masten2014} and \cite{hoderlein2014}. Assumption \ref{large:supp} (ii) imposes a mild regularity assumption on the conditional characteristic function $h$. 

The next result establishes identification of the VRC density. We make use of identification of the varying coefficients through the conditional mean restriction \eqref{mod:vcim}. Consequently, by employing the relation $f_{ B}(b,w)= f_A(b-\mathbf g(w))$ the following result is due to Fourier inversion.  
\begin{lemma}\label{thm:ident:me}
Let Assumptions \ref{ass:ind}--\ref{large:supp} be satisfied. Then, for all $w$ in the support of $W$, the density of $B^w$ satisfies
\begin{align*}
f_{B}(b,w)=\frac{1}{(2\pi)^{d}} \int_{\mathbb R^{d-1}}\int_{\mathbb R} |t|^{d-1}\exp\big(-i t(1,x')(b- \mathbf g(w))\big) h(x,t;g) dt\,dx.
\end{align*}
\end{lemma}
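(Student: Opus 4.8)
The plan is to reduce the claim to the multivariate Fourier inversion formula for the density of $A$, after rewriting everything in terms of $A$ and performing a single change of variables.

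\emph{Step 1: rewrite the residual $Y-g(S)$ in terms of $A$.} By Assumption \ref{ass:ind}(ii)--(iii) the identified regression function equals $g(S)=\E[Y|X,W]=g_0(W)+\sum_{l=1}^{d-1}g_l(W)X_l$, so model $(\ref{mod:gen}$--$\ref{mod:gen:rc})$ gives the pointwise identity $Y-g(S)=A_0+\sum_{l=1}^{d-1}A_{1,l}X_l=(1,X')A$. Hence $h(x,t;g)=\E[\exp(it(1,X')A)\mid X=x]$, and the independence of $X$ and $A$ in Assumption \ref{ass:ind}(i) lets the conditioning drop out: $h(x,t;g)=\E[\exp(it(1,x')A)]=\varphi_A(t(1,x')')$, where $\varphi_A(u):=\E[\exp(iu'A)]$ is the characteristic function of $A$. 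Full support of $X$ (Assumption \ref{large:supp}(i)) ensures that this identity holds for every $x\in\R^{d-1}$, i.e. that $\varphi_A$ is pinned down on $\{u\in\R^d: u_0\neq0\}$, which is all of $\R^d$ up to a Lebesgue-null set.

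\emph{Step 2: change of variables and existence of $f_A$.} Consider the map $(t,x)\mapsto u=t(1,x')'$ from $\R\times\R^{d-1}$ onto $\{u\in\R^d:u_0\neq0\}$; off $\{t=0\}$ it is a bijection with inverse $t=u_0$, $x=(u_1,\dots,u_{d-1})'/u_0$, and its Jacobian determinant equals $t^{d-1}$, so that $\mathrm du=|t|^{d-1}\,\mathrm dt\,\mathrm dx$. Combined with Step 1, Assumption \ref{large:supp}(ii) reads exactly $\int_{\R^d}|\varphi_A(u)|\,\mathrm du<\infty$. Since an integrable characteristic function implies that the underlying law admits a bounded continuous Lebesgue density, $f_A$ exists and the Fourier inversion theorem yields $f_A(a)=(2\pi)^{-d}\int_{\R^d}\exp(-iu'a)\varphi_A(u)\,\mathrm du$ for all $a\in\R^d$.

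\emph{Step 3: translate back to $B^w$.} Since $B^w=\mathbf g(w)+A$ is a deterministic translate of $A$, we have $f_B(b,w)=f_A(b-\mathbf g(w))$. Substituting $a=b-\mathbf g(w)$ in the inversion formula, undoing the change of variables of Step 2, and replacing $\varphi_A(t(1,x')')$ by $h(x,t;g)$ using Step 1 gives
\[
f_B(b,w)=\frac{1}{(2\pi)^d}\int_{\R^{d-1}}\int_{\R}|t|^{d-1}\exp\!\big(-it(1,x')(b-\mathbf g(w))\big)\,h(x,t;g)\,\mathrm dt\,\mathrm dx,
\]
which is the assertion.

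There is no deep obstacle here --- the argument is essentially a repackaging of Fourier inversion --- but the steps needing care are: (a) justifying that an integrable characteristic function produces a genuine bounded continuous density, so that $f_A$ and hence $f_B(\cdot,w)$ are well defined; (b) computing the Jacobian of the radial parametrization $u=t(1,x')'$ correctly, including the $|t|^{d-1}$ weight and the fact that the omitted hyperplane $\{u_0=0\}$ is Lebesgue-null; and (c) keeping track of where each hypothesis enters --- independence of $X$ and $A$ in Step 1, the rank condition of Assumption \ref{ass:ind}(iii) to make $g$ (and therefore $h(\cdot,\cdot;g)$) an identified object, full support of $X$ to recover $\varphi_A$ off a null set, and the integrability condition of Assumption \ref{large:supp}(ii) to license the inversion.
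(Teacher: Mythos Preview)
Your proof is correct and follows essentially the same route as the paper: identify $g$ via the conditional mean restriction, use independence of $X$ and $A$ to write $h(x,t;g)=(\mathcal F f_A)(t,tx)$, apply Fourier inversion together with the change of variables $u=t(1,x')'$ (Jacobian $|t|^{d-1}$), and then translate to $f_B(\cdot,w)=f_A(\cdot-\mathbf g(w))$. You are somewhat more explicit than the paper about the Jacobian computation and about the fact that the integrability hypothesis guarantees existence of a bounded continuous density, but the argument is the same.
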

Lemma \ref{thm:ident:me} shows that the density $f_B(\cdot,w)$ can be written as a transform of varying coefficient functions. Besides the shift of the conditional characteristic function $\E[\exp(itY)|X=x]$ to $h(x,t;g)$ there is also a location shift by $\mathbf g(w)$. This corresponds to shifts in frequency and time domain for the Fourier transform. 

\section{Estimation and Inference}\label{sec_inference}
This section presents an estimator of the VRC density based on sieve minimum distance and establishes its asymptotic properties. Subsection \ref{sub:sec:est} introduces the weighted sieve minimum distance estimator and motivates the use of Hermite functions as sieve basis. In Subsection \ref{sub:sec:rate}, the rate of convergence of the estimator of $f_B$ is derived. 
Subsection \ref{sub:sec:pw:inference} establishes the pointwise asymptotic distribution of linear functionals of $f_B$, where the density of potential outcome is one particular example of interest.  Subsection \ref{sec:bootstrap:conf} presents a Bootstrap procedure to construct uniform confidence bands. 

\subsection{The Sieve Minimum Distance Estimator}\label{sub:sec:est}
Estimation builds on the conditional characteristic function equation induced by the model $(\ref{mod:gen}$--$\ref{mod:gen:rc})$.
We denote the Fourier transform by $(\mathcal{F}\phi )(u):= \int_{\mathbb R^d} \exp
(iu'z)\phi (z)dz$ for any absolutely integrable function $\phi$.
Recall the notation of the conditional characteristic function $h(x,t;g)=\E[\exp(it(Y-g(S)))|X=x]$. 
Independence of $X$ and $A$, as imposed in Assumption \ref{ass:ind}, immediately implies
\begin{align*}
\E\big[\exp\big(it(A_0+A_1'X)\big)\big|X=x\big]=\int_{\mathbb R^d} \exp\big(it(1,x')a\big) f_A (a)da=\big(\mathcal F f_A\big)(t,tx)
\end{align*}
and hence the relation
\begin{align}\label{est:key:cond}
\big(\mathcal F f_A\big)(t,tx)
=h(x,t;g)
\end{align}
for all $t\in\mathbb R$ and $x\in\R^{d-1}$.
Moreover, relation \eqref{est:key:cond} leads to
\begin{align}\label{est:key:cond:L2}
\int_{\R^{d-1}}\int_\R\Big|\big(\mathcal F f_A\big)(t,tx)-h(x,t;g)\Big|^2 d\nu(t)dx=0,
\end{align}
for some measure $\nu$. Using this $L^2$ criterion we construct a sieve minimum distance estimator of $f_A$ and use a plug-in approach to estimate the VRC density $f_B$. Below, we show that the choice of the log-normal distribution as weighting measure $\nu$ is well suited for our estimation problem. 

The proposed sieve minimum distance estimator of the VRC density $f_{B}(\cdot,w)$ is based on the relation $f_{B}(b,w)= f_A(b-\mathbf g(w))$ for any $w$ in the support of $W$:
Consider the plug-in estimator
\begin{align}\label{est:denst:A}
\widehat f_{B}(b,w)= \widehat f_A(b-\mathbf {\widehat g}(w))
\end{align}
where $\widehat f_A$ is a
sieve minimum distance estimator of $f_A$ given by
\begin{equation*}
\widehat f_{A}\in\argmin_{f\in\mathcal{A}_K}\Big\{%
\int_{\R^{d-1}}\int_\mathbb R\big|(\mathcal{F}
f)(t, tx) - \widehat h(x,t;\widehat g)\big|^2d\nu(t)dx\Big\}
\end{equation*}
and $\mathcal{A} _K$ is a sieve space of dimension $K=K(n)<\infty $ with basis
functions $\{q_k\}_{k\geq 1}$. The sieve dimension $K=K(n)$ grows slowly with sample size $n$. 
Sieve estimation is also convenient to impose additional constraints on the unknown functions. These constraints, such as positivity, can be directly imposed on the sieve space $\mathcal A_K$. When the constraints are not binding, we may consider $\mathcal A_K$ without constraints,  which then coincides with the linear sieve space $\mathcal{A} _K=\big\{\phi(\cdot)=\beta' q^K(\cdot):\,\beta\in\R^K\big\}$ where $q^K(\cdot)=\big(q_1(\cdot),\dots,q_K(\cdot)\big)'$.

The unknown conditional characteristic function $h$ is replaced by the plug-in series least squares estimator
\begin{equation}\label{est:h}
\widehat{h}(x,t; \widehat g)= p^K(x)'\widehat P^{-1}\frac{1}{n}\sum_{j=1}^n\exp\big(it(Y_j-\widehat g(S_j))\big)p^K(X_j),
\end{equation}
where $p^K(\cdot)=\big(p_1(\cdot),\dots,p_K(\cdot)\big)'$ is a vector of basis functions and  we use  the notation $\widehat P= n^{-1}\sum_{j=1}^np^K(X_j)p^K(X_j)'$. Thus, the same sieve dimension $K$ (as for $\mathcal A_K$) is used to approximate the conditional characteristic function $h$. 
The estimator $\widehat g$ of  the regression function $g$ is based on the conditional mean restriction \eqref{mod:vcim}. We do not impose an explicit form of this estimator but rather impose a rate condition on the estimator  $\widehat g$  to obtain our asymptotic results. In particular, $\widehat g$ can account for generated regressors when $W$ are control function residuals. 
Below, Example \ref{exmp:series:vc} provides an illustration of estimating $g$ via series least squares. Although estimation of the density $f_B$ involves two preliminary steps (estimation of $h$ and $g$) it should be emphasized that the estimation procedure is equivalent to a one-step sieve minimum distance estimator, which, additionally involves the conditional characteristic equation $h(X,t;g)=\E\big[\exp\big(it(Y-g(S))\big)\big|X\big]$ and the conditional moment equation \eqref{mod:vcim}.

When no constraints are imposed, the sieve minimum distance estimator \eqref{est:denst:A} is of closed form. In this case, $\widehat f_{ B}$ coincides with the {\it double series least squares estimator}
\begin{align*}
\widehat f_{ B}(b,w)=q^K(b-\widehat{\mathbf g}(w))'Q^{-1}\int_{\R^{d-1}}\int_\R \big(\mathcal F q^K\big)(-t, -tx)\, 
\widehat
h(x,t; \widehat g)d\nu(t)\,dx
\end{align*}
where
\begin{align}\label{def:mat:Q}
Q=\int_{\R^{d-1}}&\int_\R\big(\mathcal F q^K\big)(-t, -tx)\big(\mathcal F q^K\big)(t, tx)'d\nu(t)dx
\end{align}
is assumed to be nonsingular (at least for $K$ sufficiently large). Nonsingularity of $Q$ is satisfied by Hermite functions (under mild conditions) which, as the following example illustrates, are a convenient choice of bases for the sieve space $\mathcal A_K$. 

\begin{example}[Hermite Functions]\label{exmp:hermite}
Consider a linear sieve space $\mathcal{A} _K$ spanned by Hermite
functions (which are orthonormalized Hermite polynomials) given for $k= 0,1,2,\dots$ by 
\begin{align*}
q_{k+1}(t)=\frac{(-1)^k}{\sqrt{2^k k!\sqrt\pi}}\exp(t^2/2)\frac{d^k}{dt^k}%
\exp(-t^2).
\end{align*}
These functions form an orthonormal basis in $L^2(\mathbb{R})=\set{\phi:\int_\R \phi^2(t)dt<\infty}$ and are convenient in our framework: Hermite functions are eigenfunctions of the Fourier transform satisfying 
$( \mathcal{F }q_{k+1})(t)/\sqrt{2\pi}= i^k q_{k+1}(t)=:\widetilde q_{k+1}(t)$.
Thus, the double series least squares estimator of $f_ B$ given in \eqref{est:denst:A}
simplifies to 
\begin{equation}\label{est:denst:hermite}
\widehat f_B(\cdot,w)=q^K(\cdot-\widehat{\mathbf g}(w))'Q^{-1}\int_{\R^{d-1}}\int_\R \widetilde q^K(-t, -tx)\widehat
h(x,t; \widehat g) d\nu(t)\,dx,
\end{equation}
where $Q=(2\pi)^{d/2}\int_{\R^{d-1}}\int_\R \widetilde  q^K(-t, -tx)\widetilde  q^K(t, tx)'d\nu(t)dx$. 
The implementation of this estimator is straightforward. In the finite sample analysis, we also estimate the conditional characteristic function $h$ using Hermite functions. 
\end{example}
We now consider the case of tensor product basis functions. We make use of the notation $\text{I}_{K_1}$ for the $K_1$-dimensional identity matrix and $\otimes$ for the Kronecker product.
\begin{lemma}\label{lem:vrs:ident}
Let $\mathcal A_K$ be a linear sieve space  spanned by tensor product Hermite functions $q^K(t, u)=q^{K_0}(t)\otimes q^{K_1}(u)$ where $K=K_0K_1$.
Then, the matrix $Q$ given in \eqref{def:mat:Q} simplifies to
\begin{align*}
Q=Q_0\otimes \text{I}_{K_1}\quad  \text{ where }\quad Q_0=(2\pi)^{d/2}\int_\R  |t|^{1-d} \widetilde q^{K_0}(-t) \widetilde q^{K_0}(t)'d\nu(t).
\end{align*}
\end{lemma}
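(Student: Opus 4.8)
The plan is to integrate out the $d-1$ slope coordinates in the double integral defining $Q$, collapsing it to a one‑dimensional integral tensored with $\text{I}_{K_1}$. First I would rewrite $Q$ in the Hermite‑specialized form $Q=(2\pi)^{d/2}\int_{\R^{d-1}}\int_\R\widetilde q^K(-t,-tx)\widetilde q^K(t,tx)'\,d\nu(t)\,dx$ recorded in Example \ref{exmp:hermite}, obtained from \eqref{def:mat:Q} by inserting the Fourier‑eigenfunction identity $(\mathcal F q_{k+1})=\sqrt{2\pi}\,\widetilde q_{k+1}$ and collecting the powers of $2\pi$. Since the sieve basis is a tensor product, $q^K(t,u)=q^{K_0}(t)\otimes q^{K_1}(u)$, the eigenfunction vector factorizes as well, $\widetilde q^K(v_0,v_1)=\widetilde q^{K_0}(v_0)\otimes\widetilde q^{K_1}(v_1)$ — componentwise this is just that a product of Hermite functions is an eigenfunction of the multivariate Fourier transform. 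Evaluating at $(\pm t,\pm tx)$ and applying the Kronecker mixed‑product rule $(A\otimes B)(C\otimes D)=(AC)\otimes(BD)$, the integrand becomes $\big[\widetilde q^{K_0}(-t)\widetilde q^{K_0}(t)'\big]\otimes\big[\widetilde q^{K_1}(-tx)\widetilde q^{K_1}(tx)'\big]$.

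Next I would carry out the inner integral $\int_{\R^{d-1}}\widetilde q^{K_1}(-tx)\widetilde q^{K_1}(tx)'\,dx$ for fixed $t\neq0$ (the point $t=0$ carries no $\nu$‑mass). The substitution $v=tx$ produces the Jacobian factor $|t|^{-(d-1)}=|t|^{1-d}$ — exactly the power of $|t|$ that appears in $Q_0$, and the place where the number $d-1$ of slope coordinates enters — leaving $\int_{\R^{d-1}}\widetilde q^{K_1}(-v)\widetilde q^{K_1}(v)'\,dv$. The crux is to show this matrix equals $\text{I}_{K_1}$. For this I would combine the parity of Hermite functions, $q_{k+1}(-v)=(-1)^kq_{k+1}(v)$, with $\widetilde q_{k+1}=i^kq_{k+1}$ from Example \ref{exmp:hermite} to get $\widetilde q_{k+1}(-v)=(-i)^kq_{k+1}(v)=\overline{\widetilde q_{k+1}(v)}$; hence the $(j,k)$ entry is a unit‑modulus phase times $\int q_jq_k$, which equals $\delta_{jk}$ by orthonormality of the tensor‑product Hermite system in $L^2(\R^{d-1})$.

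Assembling,
\begin{align*}
Q&=(2\pi)^{d/2}\int_\R|t|^{1-d}\,\big[\widetilde q^{K_0}(-t)\widetilde q^{K_0}(t)'\big]\otimes\text{I}_{K_1}\,d\nu(t)\\
&=\Big((2\pi)^{d/2}\int_\R|t|^{1-d}\,\widetilde q^{K_0}(-t)\widetilde q^{K_0}(t)'\,d\nu(t)\Big)\otimes\text{I}_{K_1},
\end{align*}
since $|t|^{1-d}$ is scalar and $\text{I}_{K_1}$ does not depend on $t$, so both pull through the Kronecker product and out of the $\nu$‑integral; this is the asserted identity $Q=Q_0\otimes\text{I}_{K_1}$. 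All the steps are elementary, so the only genuine obstacles are bookkeeping: tracking the powers of $2\pi$ and the $i$‑phases so they reconcile with the normalization of $Q$ in Example \ref{exmp:hermite}; justifying the interchange of the $x$‑integral with the $\nu$‑integral by Tonelli, using the $e^{-t^2/2}$‑type decay of the Hermite functions together with the fact that the log‑normal weight $\nu$ vanishes at the origin fast enough to absorb the singular factor $|t|^{1-d}$; and checking the change of variables $v=tx$ uniformly in the sign of $t$.
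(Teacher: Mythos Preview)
Your proposal is correct and follows essentially the same route as the paper's proof: both start from the Hermite-specialized form of $Q$ in Example~\ref{exmp:hermite}, apply the mixed-product rule for Kronecker products, substitute $v=tx$ to extract the Jacobian $|t|^{1-d}$, and then reduce $\int_{\R^{d-1}}\widetilde q^{K_1}(-v)\widetilde q^{K_1}(v)'\,dv$ to $\text{I}_{K_1}$ via orthonormality. Your write-up is in fact more explicit than the paper's on the phase cancellation $(-i)^k i^k=1$ that makes the diagonal entries equal to one, and on the Tonelli justification; these are welcome clarifications but not a different argument.
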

Lemma \ref{lem:vrs:ident} shows that given tensor product Hermite functions in a linear sieve space only the dimension parameter $K_0$ used to approximate the random intercept induces potentially small eigenvalues of $Q$ and hence,  slow accuracy of estimators. Thus, if we are willing to restrict the complexity of sieve estimation in terms of $K_0$, for instance, by imposing a fixed dimension $K_0$, then the rate of convergence does not suffer from an ill-posed inverse problem. This semiparametric specification provides a numerically stable estimation procedure as we see in our Monte Carlo simulation section. 
%

\begin{remark}[VRS Density Estimation]\label{remark:ind}
The proposed methodology implies closed form estimators of densities of subvectors of the VRC vector.
Consider the setup of Lemma \ref{lem:vrs:ident} with a linear sieve space and $q^K(t, u)=q^{K_0}(t)\otimes q^{K_1}(u)$ are Hermite functions.
The double series least squares estimator \eqref{est:denst:hermite} yields the estimator of the VRS density $ f_{B_1}(\cdot,w)$ given by
\begin{align}\label{VRS:estimator}
\widehat f_{B_1}(\cdot,w)
=q^{K_1}(\cdot-\widehat{\mathbf g}_1(w))' \int_{\R^{d-1}}\int_\R b_{K_0}(t)\widetilde q^{K_1}(-tx)\widehat
h(x,t; \widehat g) d\nu(t)\,dx
\end{align}
where
\begin{align*}
b_{K_0}(t):=\int_\R q^{K_0}(a)'da\,Q_0^{-1}\widetilde q^{K_0}(-t).
\end{align*}
Here, $K_1$ coincides with the dimension of basis functions used for the estimator $\widehat h$.
Note that we impose restrictions on the weighting measure $\nu$ below such that $\int_\R |t|^{1-d} d \nu(t)$ is bounded above and away from zero and hence, $Q_0$ is well defined.
\end{remark}

\begin{example}[Series estimation of the varying coefficients]\label{exmp:series:vc}
We provide an explicit estimator of the mean regression function $g(s)=\E[Y|S=s]$. Series estimation of $g$ is convenient, in particular, as an additive structure of $g$ can be easily imposed.
To do so, consider  the basis functions $p_k$, $k\geq 1$, and introduce the vector $p_d^K(s)=\big(p^K(w)',x_1p^K(w)',\dots,x_{d-1}p^K(w)'\big)'\in\mathbb R^{dK}$.
The series least squares estimator of $g$ is given by
\begin{align*}
\widehat g(s)= p_d^K(s)'\widehat P_d^{-1}\sum_{j=1}^nY_jp_d^K(S_j),
\end{align*}
where $\widehat P_d= n^{-1}\sum_{j=1}^n p^K_d(S_j)p^K_d(S_j)'$. 
 For instance, when using B-Splines as basis functions, for asymptotic properties of such estimators see \cite{xia1999}, \cite{fan2003},  \cite{xue2012}, or \cite{ma2015}. 
\end{example}

\paragraph{Notation.} 
Introduce the norms $\|\phi \| _{\nu }=\big(\int_{\R} \int_{\R^{d-1}}|\phi
(x,t)|^{2}dx d\nu(t)\big)^{1/2}$ and $\| \phi\|_{\R^d}=\big(\int_{\R^d}|\phi
(u)|^{2}du\big)^{1/2}$ with associated Hilbert spaces $L^2_\nu=\set{\phi:\|\phi\|_\nu<\infty}$ and $L^2(\R^d)=\set{\phi:\|\phi\|_{\R^d}<\infty}$.
 Moreover, $\|\cdot\|$ and $\|\cdot\|_\infty$ denote the $\ell_2$--norm and the supremum norm.  For a matrix $A$, 
the operator norm is denoted by $\|A\|$. For a random vector $V$ the corresponding calligraphic capital letter $\mathcal V$ denotes its support. Let $\Pi_K:L^2(\R^d)\to\mathcal A_K$ denote the  least squares projection onto $\mathcal A_K$, i.e., $\Pi_K f=\argmin_{\phi\in\mathcal A_K}\|\phi-f\|_{\R^d}$ for all $f\in L^2(\R^d)$. Further, define the function $\gamma:\R\to\R^K$ given by
$\gamma(t)=P^{-1}\E[\exp\big(it(Y-g(S))\big)p^K(X)]$ where we recall  $P=\E[ p^K(X)p^K(X)']$.
We use the notation $a_n\sim b_n$ for $c b_n\leq a_n\leq C b_n$ given two
constant $c,C>0$ and all $n\geq 1$.
\subsection{Rate of Convergence}\label{sub:sec:rate}
This subsection provides rates of convergence of the proposed estimators. In particular, we see that only the rate of the VRC density hinges on the minimal eigenvalues of $Q$. On the other hand, it is shown below that the rate of convergence of the estimator of the varying random slope $B_1$ is not affected by the choice of $\nu$.

We introduce an assumption required for our asymptotic results. Below, $\lambda_{\max}(M)$ denotes the maximal eigenvalue of a matrix $M$. Throughout the paper, $(\tau _k)_{k\geq 1}$ and $(\lambda _{k})_{k\geq 1}$ denote nonincreasing sequences. 
\begin{ass}
\label{Ass:bas} (i) A random sample $\set{(Y_{i},X_{i},W_i)}_{i=1}^n $ 
of $(Y,X,W)$ is observed. 
(ii) The   measure $\nu$ satisfies $0<\int_\R |t|^{1-d} d\nu(t)<\infty$,  $\int_\R t^2 d\nu(t)<\infty$ and $\|h(\cdot,\cdot;g)\|_\nu<\infty$.
(iii) $\lambda_{\max}\big(\int_{\R^{d-1}}p^K(x)p^K(x)'dx\big)=O(1)$ and $\lambda_{\max}\big(\int_{\R^{d}}q^K(a)q^K(a)'da\big)=O(1)$.
(iv) $\lambda_{\max}(\lambda_KP^{-1})=O(1)$ and $\sup_{x\in \R^{d-1}}\| p^K(x)\|^2=O(K)$ satisfying 
$K\log(n)=o(n\lambda _K)$.
(v)   $\lambda_{\max}(\tau _K Q^{-1})=O(1)$ and  $\|\mathcal{F}
(\Pi_K f_A-f_A)\|_\nu^2=O( \tau _K\|\Pi_K f_A-f_A\|_{\mathbb R^d}^2)$.
\end{ass}
Assumption \ref{Ass:bas} (ii) imposes a mild restriction on the weighting measure $\nu$ and is satisfied, for instance, if $\nu$ is the log-normal distribution.
The maximal eigenvalue restriction imposed in Assumption \ref{Ass:bas} (iii)  is automatically satisfied if  $\{p_k\}_{k\geq 1}$ and $\{q_k\}_{k\geq 1}$ form orthonormal bases in $L^2(\R^{d-1})$ and $L^2(\R^d)$, respectively.
Assumption \ref{Ass:bas} (iv) allows the minimal eigenvalues of $P$ to tend to zero, see also  \cite{chen2015}, who consider a similar restriction on the growth of basis functions. This condition holds for Hermite functions, which are uniformly bounded, but also for polynomial
splines, Fourier series and wavelet bases, see also \cite{belloni2012} for further discussion and extensions of \cite{newey1997}. 
 It is well known, that the smallest eigenvalue of $P$ is uniformly bounded away from zero if $\set{p_l}_{l\geq 1}$ forms an orthonormal basis and $f_X$ is uniformly bounded away from zero on the support of $X$. Thus, Assumption \ref{Ass:bas} (iv) is particularly suited in our case where a large support assumption of $X$ is required for identification. Here, $\lambda_K$ has the interpretation of a truncation parameter in estimation problems to ensure that the denominator is bounded away from zero,  see \cite{breunig2016}.
Assumption \ref{Ass:bas} (v) imposes a rate restriction on the minimal eigenvalue of the
matrix  $Q$ relative to $(\tau _k)_{k\geq 1}$. Assumption \ref{Ass:bas} (v) further specifies a link of the sieve approximation error on the RC density $f_A$ between the ``strong'' norm $\| \cdot\|_{\R^d}$ and the ``weak'' norm $\|\cdot\|_\nu$. Note that Assumption \ref{Ass:bas} (v) is only required for estimating the VRC density but not for estimating the VRS density. Such stability conditions are commonly imposed in the literature on nonparametric instrumental variable estimation, see, for instance, \cite{BCK07econometrica} or \cite{Chen08}, but rely on mapping properties of an unknown conditional expectation operator.  In addition, in VRC models it is possible to provide primitive conditions, as we see below.

\begin{proposition}
\label{prop:eig} Let $\widetilde\nu(t):=|t|^{1-d} (d\nu/d\mu)(t)$ where $\mu$ denotes the Lebesgue measure and assume that $\{q_k\}_{k\geq 1}$ is an orthonormal basis in $L^2(\mathbb R^d)$.
\begin{itemize}
\item[(i)]Suppose that, for some constant $0<c<1$,
for all $n\geq 1$ and any non-zero vector $a\in\mathbb{R}^K$ the inequality 
\begin{equation}  \label{prop:eig:cond}
\int_\mathbb R \big| a^{\prime }(\mathcal F q^K)(t)\big|^2\1\nolimits{%
\left\lbrace \widetilde\nu(t)<\tau_K\right\rbrace }\mu(dt)\leq c\int_\mathbb R \big|a^{\prime }(\mathcal F q^K)(t)\big|^2\mu(dt)
\end{equation}
holds. Then, the smallest
eigenvalue of $\tau_K^{-1}Q$ is bounded away from zero and bounded above.

\item[(ii)] 
If 
\begin{equation}\label{prop:eig:cond:bias}
\sum_{l>K}\int_\mathbb R\big|(\mathcal F q_l)(t) \big|^2\,\widetilde \nu(t)d\mu(t)=O(\tau_K)
\end{equation}
 then $\|\mathcal{F}
(\Pi_K f_A-f_A)\|_\nu^2=O\big(\tau _K\|\Pi_K f_A-f_A\|_{\mathbb R^d}^2\big)$.
\end{itemize}
\end{proposition}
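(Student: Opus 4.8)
I would prove both parts by first reducing the quadratic form of $Q$ — and, for (ii), the weighted norm $\|\cdot\|_\nu$ of a Fourier transform on the cone $\{(t,tx)\}$ — to a one-dimensional integral against $\widetilde\nu$. Fixing $a\in\mathbb R^K$ and setting $\phi_a:=a'q^K\in L^2(\mathbb R^d)$ (so $\|\phi_a\|_{\mathbb R^d}^2=\|a\|^2$ by orthonormality), and using $(\mathcal F q^K)(-u)=\overline{(\mathcal F q^K)(u)}$ (valid since $q^K$ is real), \eqref{def:mat:Q} turns into
\[
a'Qa=\int_{\mathbb R^{d-1}}\int_{\mathbb R}\big|(\mathcal F\phi_a)(t,tx)\big|^2\,d\nu(t)\,dx ;
\]
that is, $Q$ is the Gram matrix of $\{q_k\}_{k\le K}$ under the forward map of \eqref{est:key:cond}--\eqref{est:key:cond:L2}. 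Next I would substitute $v=tx$ for each $t\ne0$ (Jacobian $|t|^{d-1}$) and, consistently with the statement's notation, write $|a'(\mathcal F q^K)(t)|^2:=\int_{\mathbb R^{d-1}}|(\mathcal F\phi_a)(t,v)|^2\,dv$ for the mass of the $t$-slice (for the tensor-product Hermite basis of Lemma \ref{lem:vrs:ident} this is, up to a constant, the literal squared modulus, because Hermite functions are Fourier eigenfunctions). With $|t|^{1-d}\,d\nu(t)=\widetilde\nu(t)\,d\mu(t)$ and Plancherel this yields the two identities
\[
a'Qa=\int_{\mathbb R}\big|a'(\mathcal F q^K)(t)\big|^2\,\widetilde\nu(t)\,d\mu(t),\qquad\int_{\mathbb R}\big|a'(\mathcal F q^K)(t)\big|^2\,d\mu(t)=(2\pi)^d\|a\|^2 ,
\]
both legitimate because $\int_{\mathbb R}|t|^{1-d}d\nu(t)<\infty$ by Assumption \ref{Ass:bas}(ii) (which also forces $\nu(\{0\})=0$); the same computation applies verbatim to any $\phi\in L^2(\mathbb R^d)$.

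For part (i) I would split the first identity over $\{\widetilde\nu<\tau_K\}$ and its complement and bound $\widetilde\nu\ge\tau_K$ on the latter, obtaining
\[
a'Qa\ge\tau_K\int_{\{\widetilde\nu\ge\tau_K\}}\big|a'(\mathcal F q^K)(t)\big|^2 d\mu(t)\ge(1-c)(2\pi)^d\,\tau_K\,\|a\|^2 ,
\]
where the last step is exactly hypothesis \eqref{prop:eig:cond} together with the normalization above; since $a$ is arbitrary, $\lambda_{\min}(Q)\ge(1-c)(2\pi)^d\tau_K$, i.e.\ $\tau_K^{-1}\lambda_{\min}(Q)$ is bounded away from zero, and this is precisely where $0<c<1$ is used. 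For the matching upper bound I would test the form at a unit vector whose transform concentrates on $\{\widetilde\nu\approx\tau_K\}$ (for Hermite bases, $a=e_K$ or a near-extremal combination of $q_l$, $l\le K$), giving $\lambda_{\min}(Q)\le a'Qa=\int_{\mathbb R}|a'(\mathcal F q^K)(t)|^2\widetilde\nu(t)d\mu(t)=O(\tau_K)$ by the calibration of $(\tau_k)$ to these slice integrals (cf.\ \eqref{prop:eig:cond:bias}); hence $\tau_K^{-1}\lambda_{\min}(Q)$ is also bounded above.

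For part (ii), since $\{q_k\}$ is an orthonormal basis and $\varPi_K$ is the $L^2(\mathbb R^d)$-projection onto $\mathcal A_K=\mathrm{span}\{q_1,\dots,q_K\}$, I would write the residual as $r:=f_A-\varPi_K f_A=\sum_{l>K}c_l q_l$ with $c_l=\langle f_A,q_l\rangle$ and $\|r\|_{\mathbb R^d}^2=\sum_{l>K}c_l^2$, apply the identity above to $r$, and use Cauchy--Schwarz in $l$ pointwise in $(t,v)$ before integrating (Tonelli):
\[
\|\mathcal F r\|_\nu^2=\int_{\mathbb R}\Big(\int_{\mathbb R^{d-1}}\Big|\sum_{l>K}c_l(\mathcal F q_l)(t,v)\Big|^2 dv\Big)\widetilde\nu(t)\,d\mu(t)\le\Big(\sum_{l>K}c_l^2\Big)\sum_{l>K}\int_{\mathbb R}\big|(\mathcal F q_l)(t)\big|^2\widetilde\nu(t)\,d\mu(t) .
\]
By \eqref{prop:eig:cond:bias} the last sum is $O(\tau_K)$, so $\|\mathcal F(\varPi_K f_A-f_A)\|_\nu^2=O\big(\tau_K\,\|\varPi_K f_A-f_A\|_{\mathbb R^d}^2\big)$, as claimed.

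The main obstacle, I expect, is twofold. First is the bookkeeping behind the key identity: the substitution $v=tx$ must be arranged so that the $(d-1)$-dimensional $x$-integral and the radial weight combine into exactly $|t|^{1-d}$, and the statement's one-dimensional symbol $|a'(\mathcal F q^K)(t)|^2$ must be interpreted as the slice mass (it coincides with the honestly one-dimensional objects of Lemma \ref{lem:vrs:ident} only through the Hermite eigenfunction property and the Kronecker decoupling of the slope block). Second, within part (i), the ``bounded away from zero'' half falls out in one line from \eqref{prop:eig:cond} and $c<1$, but the ``bounded above'' half is genuinely delicate: it is not forced by \eqref{prop:eig:cond} alone and needs a test direction whose transform sits where $\widetilde\nu$ is of order $\tau_K$ — which is exactly the point at which the precise calibration of $(\tau_k)$, and a condition of the flavour of \eqref{prop:eig:cond:bias}, enters. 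Part (ii), by contrast, is short once the key identity is in hand.
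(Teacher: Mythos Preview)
Your approach is correct and matches the paper's: the substitution $v=tx$ reduces both $a'Qa$ and $\|\mathcal F(\varPi_K f_A-f_A)\|_\nu^2$ to one-dimensional integrals against $\widetilde\nu$, after which (i) follows from the split over $\{\widetilde\nu\gtrless\tau_K\}$ combined with hypothesis \eqref{prop:eig:cond}, and (ii) from Cauchy--Schwarz in the tail index $l>K$. Your worry about the ``bounded above'' half of (i) is well-placed but can be dropped: the paper's own proof in fact establishes only $\tau_K I_K\lesssim Q$ and is silent on that direction, so you need not supply an argument for it either.
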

In the case of Hermite functions, condition \eqref{prop:eig:cond} requires  $\nu$ to be sufficiently heavy tailed while condition \eqref{prop:eig:cond:bias} requires  $\int_\mathbb R\big|(\mathcal F q_l)(t) \big|^2\,\widetilde \nu(t)d\mu(t)$ to be sufficiently small for all $l>K$. Both conditions impose restrictions on the weighting measure $\nu$ and the basis functions $\{q_k\}_{k\geq 1}$ but do not  on the unknown density $f_A$ itself.
The following example discusses the  log-normal distribution as a weighting measure and provides primitive conditions for inequality \eqref{prop:eig:cond}.
\begin{example}[Log-normal weighting]\label{Exmp:LN}
The weighting measure $\nu$ is given by the distribution
$\textsl{Lognormal}(0,\sigma^2)$ for some $\sigma>0$. 
Thus, the function $\widetilde \nu$ as introduced in Proposition \ref{prop:eig} coincides with 
$\widetilde \nu(t)=\exp\big(-(\log t)^2/(2\sigma^2)\big)/(\sqrt{2\pi}\sigma \, t^d)$ for all $t\geq 0$. 
In this case, $\widetilde \nu(t)<\tau_K$ if and only if
$\exp\big(-(\log t)^2/(2\sigma^2)\big)<\tau_K \sqrt{2\pi}\sigma t^d$
which holds for all $t\geq 0$ satisfying
\begin{align*}
(\log t)^2/(2\sigma^2)+d(\log t)+\log\big(\sqrt{\pi}\sigma\tau_K\big)>0.
\end{align*}
Note that Hermite functions have most of its support close to zero when $K$ is chosen moderately and 
hence, inequality \eqref{prop:eig:cond} is always satisfied for $\tau_K$ sufficiently small. The decay of the eigenvalues of $Q$ can be directly computed via numerical approximation of the integral.
\end{example}
We impose  regularity conditions on the unknown functions of interest.
To do so, we introduce the space of square integrable functions $L_S^2=\big\{\phi:\|\phi\|_S:=\sqrt{\E\phi^2(S)}<\infty\big\}$. Further, we define the class of functions $\mathcal G:=\mathcal G(n):=\{\phi=\beta'p_d^K:\|\phi-g\|_\infty\leq \sqrt{K/n}\}$.
The covering number $N(\mathcal G , \|\cdot\|_S, \epsilon)$ is the minimal number of $L_S^2$--balls of radius $\epsilon$ needed to cover $\mathcal G$.
\begin{ass}
\label{Ass:lin} (i) $\| \Pi_K f_A-f_A\|_{\mathbb R^d}=O(K^{-\zeta/d})$ and $\|\gamma'p^K-
h\|_\nu=O(K^{-\rho/(d-1)})$.
 (ii) $f_A$ is continuously differentiable with square integrable Jacobian matrix $Df_A$.  (iii) 
$\|\widehat g-g\|_\infty^2=O_p(K/n)$ and  $\|\widehat {\mathbf g}(w)-\mathbf g(w)\|^2=O_p\big(K/(n\tau_K\lambda_K)\big)$ for all $w\in\mathcal W$. 
(iv) There exists $C_n>0$ with $C_n \sqrt{K/n}=O(1)$ such that $\int_0^1\sqrt{1+\log N(\mathcal G, \|\cdot\|_S, \epsilon)}d\epsilon\leq C_n<\infty$.
\end{ass}
Assumption \ref{Ass:lin} (i) captures regularity conditions on the density $f_A$ and the conditional characteristic function $h$ via sieve approximation
errors. Example \ref{ex:hermite:smooth} characterizes the sieve approximation condition when using  Hermite basis functions relative to the smoothness of the unknown function of interest. For further discussion and other examples of sieve bases, see \cite{Chen07}. 
 Assumption \ref{Ass:lin} (ii) imposes  a mild smoothness condition on the density $f_A$. 
Assumption \ref{Ass:lin} (iii) imposes rate restrictions on the varying coefficients estimator. 
Assumption \ref{Ass:lin} (iv) is a regularity condition on the function class $\mathcal G$ and was also imposed in the literature, for instance, in \cite[Lemma C.3 (i)]{Chen08}.
\begin{example}[Hermite functions  (cont'd)]\label{ex:hermite:smooth}
Consider again the case where the linear sieve space $\mathcal A_K$ is spanned by Hermite functions.
 If $f_A$ has $2\zeta$ derivatives such that $\int f^{(\iota)}(x)dx<\infty$ for all $\iota\leq 2\zeta$ then the sieve approximation condition $\| \Pi_K f_A-f_A\|_{\mathbb R^d}=O(K^{-\zeta/d})$ in  
 Assumption \ref{Ass:lin} (i) 
 is automatically satisfied 
due to Lemma 2 of \cite{coppejans2002}. See also \cite{bongioanni2009} where a Sobolev space for Hermite functions is constructed  (and also in the case of Laguerre polynomials). On the other hand, if $A$ has compact support and $f_A$ belongs to a Sobolev ellipsoid of smoothness $\zeta$ then the sieve approximation error $O(K^{-\zeta/d})$ is obtained for B-splines, wavelets, or trigonometric basis functions, see \cite{Chen07}.
\end{example}

Consider first estimation of the joint density of the VRC vector $B$ holding observed characteristics fixed. The following theorem provides the rate of convergence in  $L^2(\R^d)$--norm of the estimator $\widehat f_{B}(\cdot,w)$ given in \eqref{est:denst:A} for some fixed $w\in\mathcal W$.
\begin{theorem}\label{thm:rate:fdb}
Let Assumptions \ref{ass:ind}--\ref{Ass:lin} be satisfied. Then for any $w\in\mathcal W$:
\begin{align*}
\int_{\R^d}\big|\widehat f_B(b,w)-f_B(b,w)\big|^2db=O_p\left((\tau_K\lambda_K)^{-1}\Big(
n^{-1}K+K^{-2\rho/(d-1)}\Big)+K^{-2\zeta/d}\right).
\end{align*}
\end{theorem}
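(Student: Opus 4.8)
\emph{Strategy.} The plan is to use the plug-in structure to reduce the claim to a bound on $\|\widehat f_A-f_A\|_{\R^d}^2$; then to reduce that, via the minimum-distance normal equations, to the $\|\cdot\|_\nu$-accuracy of the preliminary estimator $\widehat h(\cdot,\cdot;\widehat g)$; and finally to control that accuracy through the empirical $L^2$ geometry of a series regression. For the first reduction, since $f_B(b,w)=f_A(b-\mathbf g(w))$ and the $L^2(\R^d)$ norm is translation invariant, $\|\widehat f_B(\cdot,w)-f_B(\cdot,w)\|_{\R^d}\le\|\widehat f_A-f_A\|_{\R^d}+\|f_A(\cdot-\widehat{\mathbf g}(w))-f_A(\cdot-\mathbf g(w))\|_{\R^d}$. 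By the fundamental theorem of calculus and Assumption \ref{Ass:lin}(ii), $\|f_A(\cdot-u)-f_A\|_{\R^d}\le\|u\|\,\|Df_A\|_{\R^d}$ for every $u$, so the second term is $O_p(\|\widehat{\mathbf g}(w)-\mathbf g(w)\|)=O_p\big((K/(n\tau_K\lambda_K))^{1/2}\big)$ by Assumption \ref{Ass:lin}(iii), whose square is dominated by the asserted rate. It remains to bound $\|\widehat f_A-f_A\|_{\R^d}^2$.

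\emph{Step 1 (reduction to $\widehat h$).} In the unconstrained case $\widehat f_A=\widehat\beta'q^K$ solves $Q\widehat\beta=\int_{\R^{d-1}}\int_\R(\mathcal F q^K)(-t,-tx)\widehat h(x,t;\widehat g)\,d\nu(t)\,dx$; let $\beta^*$ solve the same equation with $\widehat h(x,t;\widehat g)$ replaced by $h(x,t;g)=(\mathcal F f_A)(t,tx)$, using the identity \eqref{est:key:cond}. Write $\widehat f_A-f_A=(\widehat\beta-\beta^*)'q^K+(\beta^*-\beta_K)'q^K+(\varPi_Kf_A-f_A)$ with $\beta_K'q^K=\varPi_Kf_A$; the last summand is $O(K^{-\zeta/d})$ by Assumption \ref{Ass:lin}(i). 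For the others, note that for any $c\in\R^K$ one has $c'Qc=\int_\R\int_{\R^{d-1}}|c'(\mathcal F q^K)(t,tx)|^2\,dx\,d\nu(t)$, and that $Q^{-1}\int_{\R^{d-1}}\int_\R(\mathcal F q^K)(-t,-tx)\psi(x,t)\,d\nu(t)\,dx$ is the coefficient vector of the $\|\cdot\|_\nu$-orthogonal projection of $\psi$ onto $\mathrm{span}\{(\mathcal F q_k)(t,tx):k\le K\}$, so the corresponding $c$ obeys $c'Qc\le\|\psi\|_\nu^2$ by contractivity of projections. Applying this with $\psi=\widehat h(\cdot,\cdot;\widehat g)-h(\cdot,\cdot;g)$, and then with $\psi(x,t)=\mathcal F(f_A-\varPi_Kf_A)(t,tx)$, and using $\lambda_{\min}(Q)^{-1}\lesssim\tau_K^{-1}$ and $\lambda_{\max}(\int q^Kq^{K\prime}da)=O(1)$ (Assumption \ref{Ass:bas}(v) and (iii)), we obtain $\|(\widehat\beta-\beta^*)'q^K\|_{\R^d}^2\lesssim\tau_K^{-1}\|\widehat h(\cdot,\cdot;\widehat g)-h(\cdot,\cdot;g)\|_\nu^2$ and $\|(\beta^*-\beta_K)'q^K\|_{\R^d}^2\lesssim\tau_K^{-1}\|\mathcal F(\varPi_Kf_A-f_A)\|_\nu^2=O(K^{-2\zeta/d})$, the last equality by the stability bound in Assumption \ref{Ass:bas}(v) and Assumption \ref{Ass:lin}(i). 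When shape constraints are imposed the closed form is lost, but since $f_A$ itself satisfies the constraints a standard basic-inequality argument for minimum-distance estimators over the convex sieve $\mathcal A_K$ delivers the same bound.

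\emph{Step 2 (accuracy of $\widehat h$).} We show $\|\widehat h(\cdot,\cdot;\widehat g)-h(\cdot,\cdot;g)\|_\nu^2=O_p\big(\lambda_K^{-1}(n^{-1}K+K^{-2\rho/(d-1)})\big)$. The key device is that, for $\phi(\cdot,t)=a(t)'p^K(\cdot)$, Assumption \ref{Ass:bas}(iii) gives $\int_{\R^{d-1}}|\phi(x,t)|^2dx\lesssim\|a(t)\|^2$, while matrix concentration under Assumption \ref{Ass:bas}(iv) (with $\lambda_{\min}(P)\gtrsim\lambda_K$) gives $\lambda_{\min}(\widehat P)\gtrsim\lambda_K$ with probability approaching one, hence $\|a(t)\|^2\lesssim\lambda_K^{-1}a(t)'\widehat Pa(t)=\lambda_K^{-1}n^{-1}\sum_{j=1}^n|\phi(X_j,t)|^2$; thus $\|\phi\|_\nu^2\lesssim\lambda_K^{-1}\int_\R n^{-1}\sum_j|\phi(X_j,t)|^2d\nu(t)$. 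Split $\widehat h(\cdot,t;\widehat g)-h(\cdot,t;g)$ into $[\widehat h(\cdot,t;\widehat g)-\widehat h(\cdot,t;g)]+[\widehat h(\cdot,t;g)-\gamma(t)'p^K]+[\gamma(t)'p^K-h(\cdot,t;g)]$. The third term has $\|\cdot\|_\nu^2=O(K^{-2\rho/(d-1)})$ by Assumption \ref{Ass:lin}(i); the first two lie in $\mathrm{span}\{p_k\}$, so the device applies, and the empirical projection onto $\mathrm{span}(p^K)$ contracts $n^{-1}\sum_j|\cdot(X_j)|^2$. The second term equals this projection of $\exp(it(Y-g(S)))-\gamma(t)'p^K=r(\cdot,t)+u(t)$, with $r:=h(\cdot;g)-\gamma'p^K$ and $\E[u(t)\mid X]=0$; the $r$-part integrates to $O_p(K^{-2\rho/(d-1)})$ (via $\int_\R\E[|r(X_1,t)|^2]d\nu(t)\lesssim\|r\|_\nu^2$) and the $u$-part to $O_p(K/n)$ (via the trace identity $n^{-1}\E[|u_1(t)|^2p^K(X_1)'P^{-1}p^K(X_1)]\le4K/n$, off-diagonal cross terms vanishing by independence and $\E[u_1(t)\mid X_1]=0$, and $\nu(\R)<\infty$). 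For the first term, $|\exp(it(Y_j-\widehat g(S_j)))-\exp(it(Y_j-g(S_j)))|\le|t|\,\|\widehat g-g\|_\infty$ pathwise, so by contractivity $n^{-1}\sum_j|[\widehat h(\cdot,t;\widehat g)-\widehat h(\cdot,t;g)](X_j)|^2\le|t|^2\|\widehat g-g\|_\infty^2$; integrating against $\nu$ and using $\int t^2d\nu<\infty$ (Assumption \ref{Ass:bas}(ii)) and $\|\widehat g-g\|_\infty^2=O_p(K/n)$ (Assumption \ref{Ass:lin}(iii)) gives $O_p(K/n)$. Multiplying by $\lambda_K^{-1}$ proves the claim, and combining with Step 1 and the first reduction yields $\|\widehat f_B(\cdot,w)-f_B(\cdot,w)\|_{\R^d}^2=O_p\big((\tau_K\lambda_K)^{-1}(n^{-1}K+K^{-2\rho/(d-1)})+K^{-2\zeta/d}\big)$.

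\emph{Main obstacle.} The delicate point is the effect of $\widehat g$ on $\widehat h$: handling $p^K$ and $\widehat P^{-1}$ by crude sup-norm bounds loses a factor of order $K/\lambda_K$ and misses the rate. Routing this term (and the centered term) through the empirical $L^2$ norm, where the series projection genuinely contracts, and only afterwards paying a single factor $\lambda_K^{-1}$ to return to the Lebesgue norm $\|\cdot\|_\nu$ on the finite-dimensional sieve span, is what makes the bound tight; Assumption \ref{Ass:lin}(iv) underpins the rate $\|\widehat g-g\|_\infty^2=O_p(K/n)$ invoked here. A secondary subtlety is that the approximation bias must be anchored at the minimum-distance population coefficient $\beta^*$ rather than at $\varPi_Kf_A$; the factor $\tau_K$ gained in passing between them — precisely the stability half of Assumption \ref{Ass:bas}(v), made primitive in Proposition \ref{prop:eig}(ii) — is what keeps the bias term at $K^{-2\zeta/d}$ instead of $\tau_K^{-1}K^{-2\zeta/d}$.
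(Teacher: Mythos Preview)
Your proof is correct and arrives at the same rate, but it departs from the paper's argument in one substantive place: the term measuring the effect of replacing $g$ by $\widehat g$ inside $\widehat h$. The paper treats this by passing to a supremum over the class $\mathcal G$ and invoking an empirical-process maximal inequality with bracketing entropy (Theorem~2.14.5 of van der Vaart--Wellner together with Lemma~4.2 of Chen), which is precisely why Assumption~\ref{Ass:lin}(iv) is imposed. You instead exploit that the empirical least-squares projection onto $\mathrm{span}(p^K)$ is a contraction in the empirical $L^2$ norm, so the pathwise Lipschitz bound $|\exp(it(Y_j-\widehat g(S_j)))-\exp(it(Y_j-g(S_j)))|\le |t|\,\|\widehat g-g\|_\infty$ transfers directly to the projected difference; combined with your ``key device'' relating the Lebesgue norm $\|\cdot\|_\nu$ on $\mathrm{span}(p^K)$ to the empirical norm at the cost of a single factor $\lambda_K^{-1}$, this yields $O_p(\lambda_K^{-1}K/n)$ with no empirical-process machinery at all. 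That is a genuine simplification: your route uses only Assumption~\ref{Ass:lin}(iii) (which states $\|\widehat g-g\|_\infty^2=O_p(K/n)$ directly---your attribution of this rate to Assumption~\ref{Ass:lin}(iv) in the final paragraph is a slip) and renders the covering-number condition~(iv) superfluous for the rate result. The paper's empirical-process route, on the other hand, generalizes more readily to settings where a simple pathwise Lipschitz bound is unavailable, and its basic inequality $\|\widehat h-\mathcal F\widehat f_A\|_\nu\le\|\widehat h-\mathcal F\varPi_Kf_A\|_\nu$ covers the constrained-sieve case directly, whereas your normal-equation decomposition via $\beta^*$ is specific to the linear sieve and your one-line appeal to ``a standard basic-inequality argument'' for the constrained case would need to be spelled out.
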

We see from the previous result that the parameters $\tau_K$ and $\lambda_K$, which capture the minimal eigenvalues of $Q$ and $P$, respectively, slow down the rate of convergence. As the weighting measure $\nu$ influences the eigenvalues of $Q$ it does also affect the rate of convergence of the joint density of the varying random coefficient  $B$. Note that the optimal choice of the dimension parameter $K$ is lower than for usual nonparametric estimation problems and relative to the value of $\tau_K$ and $\lambda_K$. 

We now provide the rate of convergence in  $L^2(\R^d)$--norm of $\widehat f_B$ when the dimension parameter $K$ is chosen such that the variance part $(n\tau_K\lambda_K)^{-1}K$ and the squared bias part $K^{-2\zeta/d}$ are of the same order. For ease of exposition we assume that $\lambda_K$ is uniformly bounded away from zero, which requires a sufficiently large support of regressors $X$. We further consider the case that $n^{-1}K$ dominates the squared bias part of estimating $h$ which is $K^{-2\rho/(d-1)}$. 
The next result immediately follows from Theorem \ref{thm:rate:fdb} and hence, its proof is omitted.
\begin{corollary}\label{cor:rate:fdb}
Let Assumptions \ref{ass:ind}--\ref{Ass:lin} be satisfied. Further if $\lambda_K$ is uniformly bounded away from zero, $\tau_K\sim K^{-2\alpha/d}$, $K\sim n^{d/{(2\zeta+2\alpha+d)}}$ and if
\begin{align}\label{cond:rate:fdb}
\frac{\alpha+\zeta}{d}\leq \frac{\rho}{(d-1)}
\end{align}
holds, then for any $w\in\mathcal W$:
\begin{align*}
\int_{\R^d}\big|\widehat f_B(b,w)-f_B(b,w)\big|^2db=O_p\Big(n^{-2\zeta/(2\zeta+2\alpha+d)}\Big).
\end{align*}
\end{corollary}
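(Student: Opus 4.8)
The plan is simply to substitute the rates hypothesized in the corollary into the bound already supplied by Theorem~\ref{thm:rate:fdb} and check that the resulting expression collapses to $n^{-2\zeta/(2\zeta+2\alpha+d)}$; since Theorem~\ref{thm:rate:fdb} does all the probabilistic work, what remains is bookkeeping on exponents. Recall that the theorem gives
\[
\int_{\R^d}\big|\widehat f_B(b,w)-f_B(b,w)\big|^2db=O_p\!\left((\tau_K\lambda_K)^{-1}\big(n^{-1}K+K^{-2\rho/(d-1)}\big)+K^{-2\zeta/d}\right).
\]
First I would invoke the hypothesis that $\lambda_K$ is bounded away from zero, so $\lambda_K^{-1}=O(1)$ and the factor $\lambda_K$ may be dropped everywhere; this is where the ``sufficiently large support of $X$'' comment enters.

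Next, insert $\tau_K\sim K^{-2\alpha/d}$ and $K\sim n^{d/(2\zeta+2\alpha+d)}$. The leading variance term becomes $\tau_K^{-1}n^{-1}K\sim n^{-1}K^{(d+2\alpha)/d}$, and the choice of $K$ gives $n^{-1}\cdot n^{(d+2\alpha)/(2\zeta+2\alpha+d)}=n^{-2\zeta/(2\zeta+2\alpha+d)}$, using $-1+(d+2\alpha)/(2\zeta+2\alpha+d)=-2\zeta/(2\zeta+2\alpha+d)$. The sieve bias term for $f_A$ is $K^{-2\zeta/d}\sim n^{-2\zeta/(2\zeta+2\alpha+d)}$ by the same substitution, so the variance term and this bias term are of exactly the same order — which is precisely the motivation for the stated choice of $K$.

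Finally I would dispose of the remaining piece $\tau_K^{-1}K^{-2\rho/(d-1)}\sim K^{2\alpha/d-2\rho/(d-1)}$ by showing it is dominated by $K^{-2\zeta/d}$. Since $K\to\infty$, this holds iff $2\alpha/d-2\rho/(d-1)\le-2\zeta/d$, equivalently $(\alpha+\zeta)/d\le\rho/(d-1)$, which is exactly the maintained condition \eqref{cond:rate:fdb}. Combining the three bounds yields $O_p\big(n^{-2\zeta/(2\zeta+2\alpha+d)}\big)$. The only point requiring any care — and the only thing I would call an ``obstacle'' — is to apply condition \eqref{cond:rate:fdb} with the correct inequality direction and to recognize that its role is to render the $h$-approximation bias asymptotically negligible rather than rate-determining; every other step is a direct substitution into Theorem~\ref{thm:rate:fdb}, which is why the paper omits the proof.
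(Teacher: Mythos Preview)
Your proposal is correct and matches the paper's own treatment: the paper explicitly states that the corollary ``immediately follows from Theorem~\ref{thm:rate:fdb} and hence, its proof is omitted,'' and your substitution of $\lambda_K^{-1}=O(1)$, $\tau_K\sim K^{-2\alpha/d}$, and $K\sim n^{d/(2\zeta+2\alpha+d)}$ into the rate bound of Theorem~\ref{thm:rate:fdb} is precisely that immediate derivation. Your handling of condition~\eqref{cond:rate:fdb} to control the $h$-approximation term is also exactly what the paper intends (cf.\ the remark preceding the corollary that ``$n^{-1}K$ dominates the squared bias part of estimating $h$''), and the exponent arithmetic is accurate.
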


As we see from the previous result, the usual rate of convergence for ill-posed inverse problems is obtained if $K$ levels variance and squared bias, see also \cite{hoderlein2010} in the case of kernel estimation in ordinary RC models. Yet in contrast to \cite{hoderlein2010} (see their Section 4.3), no heavy-tailedness of covariates $X$ is required for our convergence results. In particular, \cite{hohmann2016} showed under Gaussian white noise that the rate of convergence can be much slower, even severely ill-posed, under lighter tails of $X$. 
Also note that the condition \eqref{cond:rate:fdb} ensures that $n^{-1}K$ dominates the squared bias of estimating $h$ and is only a mild restriction. Indeed, if, for instance, $d=2$ then the usual rate of convergence is obtained if $2\rho\geq \zeta+\alpha$. 

Below,  we establish a rate of convergence for estimating the density of the varying random slope $B_1$ holding observed characteristics $W$ fixed. The next results show that for estimating the VRS vector $B_1$ when we restrict the sieve dimension, to approximate the random intercept, to be bounded. In this case the rate of convergence does not depend on the eigenvalue behavior of the matrix $Q$. 
\begin{theorem}\label{thm:rate:fdb1}
Let Assumptions \ref{ass:ind}, \ref{large:supp}, \ref {Ass:bas} (i)--(iv), and \ref{Ass:lin} be satisfied where $q^K(t, u)=q^{K_0}(t)\otimes q^{K_1}(u)$ are tensor product Hermite functions with $K_0=O(1)$.
Then the closed form estimator $\widehat f_{B_1}(\cdot,w)$ given in \eqref{VRS:estimator} satisfies for any $w\in\mathcal W$:
\begin{equation*}
\int_{\R^{d-1}}\big|\widehat f_{B_1}(b_1,w)-f_{B_1}(b_1,w	)\big|^2db_1=O_p\Big(\lambda_{K_1}^{-1}\Big(n^{-1}K_1+K_1^{-2\rho/(d-1)}\Big)+K_1^{-2\zeta/(d-1)}\Big).
\end{equation*}
\end{theorem}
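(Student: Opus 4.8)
The estimator $\widehat f_{B_1}(\cdot,w)$ of \eqref{VRS:estimator} is the marginal, over the intercept coordinate, of the double series least squares estimator \eqref{est:denst:hermite} (see Example \ref{examp:ind}); since $q^K=q^{K_0}\otimes q^{K_1}$ with $K_0=O(1)$, this marginalization collapses the $K_0$-dimensional intercept block to a single fixed linear functional. Two consequences drive the proof. First, $Q=Q_0\otimes\mathrm{I}_{K_1}$ by Lemma \ref{lem:vrs:ident}, so the smallest eigenvalue of $Q$ equals that of the fixed-dimensional matrix $Q_0$ and is bounded away from zero; hence the ill-posedness parameter $\tau_K$ of Theorem \ref{thm:rate:fdb} is a constant here, and Assumption \ref{Ass:bas}(v) is automatically implied by Assumptions \ref{Ass:bas}(i)--(iv) (the norm link $\|\mathcal F(\varPi_K f_A-f_A)\|_\nu^2=O(\|\varPi_K f_A-f_A\|_{\R^d}^2)$ holds because $\sup_t(d\nu/d\mu)(t)/|t|^{d-1}<\infty$ under Assumption \ref{Ass:bas}(ii)). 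Second, only $O(1)$ many fixed linear combinations of the joint-estimator coefficients enter $\widehat f_{B_1}$, so no variance inflation occurs. The plan is to run the decomposition used for Theorem \ref{thm:rate:fdb} with $\tau_K$ replaced by a constant throughout, but with the bias term re-derived for the $(d-1)$-dimensional marginal.

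Concretely, write $\widehat c:=\int_{\R^{d-1}}\int_\R b_{K_0}(t)\widetilde q^{K_1}(-tx)\widehat h(x,t;\widehat g)\,d\nu(t)\,dx$, let $c$ be the same object with $h,g$ in place of $\widehat h,\widehat g$, and set $\overline f_{B_1}(\cdot,w):=q^{K_1}(\cdot-\mathbf g_1(w))'c$. Then $\|\widehat f_{B_1}(\cdot,w)-f_{B_1}(\cdot,w)\|_{\R^{d-1}}$ is bounded by the sum of (a) the shift error $\|q^{K_1}(\cdot-\widehat{\mathbf g}_1(w))'\widehat c-q^{K_1}(\cdot-\mathbf g_1(w))'\widehat c\|_{\R^{d-1}}$, (b) the estimation error $\|q^{K_1}(\cdot-\mathbf g_1(w))'(\widehat c-c)\|_{\R^{d-1}}$, and (c) the deterministic bias $\|\overline f_{B_1}(\cdot,w)-f_{B_1}(\cdot,w)\|_{\R^{d-1}}$. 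For (a), Assumption \ref{Ass:lin}(iii) with $\tau_K\sim1$ gives $\|\widehat{\mathbf g}_1(w)-\mathbf g_1(w)\|^2=O_p(K_1/(n\lambda_{K_1}))$, and since $b_1\mapsto q^{K_1}(b_1)'\widehat c$ is, up to negligible terms, differentiable with $L^2(\R^{d-1})$-bounded Jacobian (Assumption \ref{Ass:lin}(ii), $Df_{B_1}$ square integrable), a first-order expansion makes the squared norm in (a) of order $O_p(K_1/(n\lambda_{K_1}))$, the claimed variance order. For (c), marginalizing the Fourier inversion of Lemma \ref{thm:ident:me} over the intercept and using the definition of $b_{K_0}$, the recentered oracle $b_1\mapsto q^{K_1}(b_1)'c$ lies in the $(d-1)$-dimensional Hermite sieve $\mathrm{span}\{q^{K_1}\}$ and differs from the best $L^2(\R^{d-1})$-approximation of the marginal slope density $f_{A_1}$ by a negligible amount (here the condition number of the fixed block $Q_0$ being $O(1)$, since $K_0=O(1)$, is what replaces Assumption \ref{Ass:bas}(v)); as $f_A$ has $2\zeta$ integrable derivatives (Example \ref{ex:hermite:smooth}), its marginal $f_{A_1}$ inherits the same smoothness in $d-1$ variables, so tensor-product Hermite approximation makes the squared bias (c) of order $O(K_1^{-2\zeta/(d-1)})$.

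Term (b) is the main step. Because $\{q_k\}$ is orthonormal and $K_0=O(1)$, the linear map $\phi\mapsto q^{K_1}(\cdot-\mathbf g_1(w))'\int_{\R^{d-1}}\int_\R b_{K_0}(t)\widetilde q^{K_1}(-tx)\phi(x,t)\,d\nu(t)\,dx$ from $L^2_\nu$ to $L^2(\R^{d-1})$ has operator norm $O(1)$: $b_{K_0}$ is a fixed function and Assumption \ref{Ass:bas}(ii) together with the bounds on $\nu$ from Example \ref{examp:ind} give $\int_\R|b_{K_0}(t)|^2|t|^{d-1}\,d\nu(t)<\infty$. Hence (b) $\lesssim\|\widehat h(\cdot,\cdot;\widehat g)-h(\cdot,\cdot;g)\|_\nu$, which splits into the plug-in effect $\widehat h(\cdot,\cdot;\widehat g)-\widehat h(\cdot,\cdot;g)$, the centered term $\widehat h(\cdot,\cdot;g)-p^{K_1}(\cdot)'\gamma(\cdot)$, and the series bias $p^{K_1}(\cdot)'\gamma-h$. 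Lipschitz continuity of $a\mapsto\exp(ita)$ with constant $|t|$, $\int t^2\,d\nu<\infty$ (Assumption \ref{Ass:bas}(ii)), $\lambda_{\max}(\widehat P^{-1})=O_p(\lambda_{K_1}^{-1})$ (Assumption \ref{Ass:bas}(iii)--(iv)), $\|\widehat g-g\|_\infty^2=O_p(K_1/n)$ and the entropy bound of Assumption \ref{Ass:lin}(iv) (controlling the stochastic equicontinuity over $\mathcal G$) make the squared $\|\cdot\|_\nu$-norm of the plug-in effect of order $O_p(K_1/(n\lambda_{K_1}))$; the centered term contributes $O_p(K_1/(n\lambda_{K_1}))$ by the usual $\operatorname{tr}(\cdot)/n$ variance bound for series regression under Assumption \ref{Ass:bas}(iii)--(iv); and the series bias $\|p^{K_1}(\cdot)'\gamma-h\|_\nu=O(K_1^{-\rho/(d-1)})$ (Assumption \ref{Ass:lin}(i)) enters after passing through $\widehat P^{-1}$, contributing $O_p(\lambda_{K_1}^{-1}K_1^{-2\rho/(d-1)})$ to the squared norm in (b). Collecting (a), (b), (c) gives the asserted rate. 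The main obstacle is the operator-norm bound in (b) and the closely related claim in (c): one must verify that integrating out the intercept and inverting only the fixed block $Q_0$ leaves a uniformly bounded, hence non-ill-posed, map acting on the conditional characteristic function, so that the sampling error of $\widehat h$ enters at the well-posed nonparametric rate. This is exactly what fixing $K_0=O(1)$ buys, and it hinges on the integrability of $b_{K_0}$ against $|t|^{d-1}\,d\nu(t)$ guaranteed by Assumption \ref{Ass:bas}(ii) and Example \ref{examp:ind}.
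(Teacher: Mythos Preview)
Your proposal is correct and follows essentially the same route as the paper: both exploit that $Q=Q_0\otimes\mathrm{I}_{K_1}$ with $K_0=O(1)$ forces $\tau_K\sim 1$, then rerun the decomposition of Theorem~\ref{thm:rate:fdb} (shift error via Assumption~\ref{Ass:lin}(iii), estimation error of $\widehat h$ via the bounds for $II$--$IV$ in that proof with $K_1$ in place of $K$, and sieve bias from Assumption~\ref{Ass:lin}(i) reinterpreted in dimension $d-1$). The only substantive difference is how the bias term (your (c)) is tied down. You assert that the oracle coefficients $c$ ``differ from the best $L^2(\R^{d-1})$-approximation of $f_{A_1}$ by a negligible amount''; the paper instead proves the exact identity
\[
\int q^K(a)f_A(a)\,da \;=\; Q^{-1}\int_{\R^{d-1}}\int_\R \widetilde q^K(-t,-tx)\,h(x,t;g)\,d\nu(t)\,dx
\]
(using $(\mathcal F f_A)(t,tx)=h(x,t;g)$ and the definition of $Q$), so that after marginalization the oracle \emph{is} the projection and (c) reduces purely to $\|\varPi_K f_A-f_A\|_{\R^d}^2=O(K_1^{-2\zeta/(d-1)})$. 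Your operator-norm argument for (b) is a valid alternative to the paper's direct use of $\|Q^{-1}\|=O(1)$; both buy the same thing, namely that the map from $\widehat h-h$ to the coefficient error is bounded uniformly in $K_1$.
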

Note that the condition $K_0=O(1)$ together with Assumption \ref{Ass:lin} (i), that is, $\| \Pi_K f_A-f_A\|_{\mathbb R^d}=O(K^{-\zeta/d})$, imposes a semiparametric restriction on the  RC density $f_A$. We emphasize that the parametric restriction concerns only the random intercept $A_0$ but not the random slope $A_1$. 
Also the rate of convergence derived in the previous result depends on the dimension parameter $K_1$ only.
This is due to the definition of our estimator where the dimension of basis functions used to estimate the conditional characteristic function $h$ coincides with the dimension parameter $K_1$, see also Remark \ref{remark:ind}. Similarly to Theorem \ref{thm:rate:fdb1}, convergence rates of subvector VRS densities can be derived.

The following result provides the rate of convergence when the dimension parameter $K_1$ such that the variance part $(n\lambda_{K_1})^{-1}K$ and the squared bias $K_1^{-2\zeta/d}$ are of the same order and $n^{-1}K_1$ dominates the squared bias of estimating $h$. Again we consider the case where $\lambda_{K_1}$ is uniformly bounded away from zero. The next result immediately follows from Theorem \ref{thm:rate:fdb1} and hence, its proof is omitted.
\begin{corollary}\label{cor:rate:fdb1}
Let the conditions of Theorem \ref{thm:rate:fdb1} be satisfied. Further, if $\lambda_{K_1}$ is uniformly bounded away from zero, $K_1\sim n^{(d-1)/{(2\zeta+d-1)}}$ and $\zeta\leq \rho$
holds, then the closed form estimator $\widehat f_{B_1}(\cdot,w)$ given in \eqref{VRS:estimator} satisfies for any $w\in\mathcal W$:
\begin{align*}
\int_{\R^{d-1}}\big|\widehat f_{B_1}(b,w)-f_{B_1}(b,w)\big|^2db=O_p\Big(n^{-2\zeta/(2\zeta+d-1)}\Big).
\end{align*}
\end{corollary}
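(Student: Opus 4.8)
}
The plan is to read off the conclusion directly from the non-asymptotic bound of Theorem~\ref{thm:rate:fdb1} after simplifying it under the extra hypotheses. First I would use that $\lambda_{K_1}$ is bounded away from zero, so $\lambda_{K_1}^{-1}=O(1)$ and the bound of Theorem~\ref{thm:rate:fdb1} reduces to $O_p\big(n^{-1}K_1+K_1^{-2\rho/(d-1)}+K_1^{-2\zeta/(d-1)}\big)$. Since $\zeta\le\rho$ and $K_1\ge 1$, we have $K_1^{-2\rho/(d-1)}\le K_1^{-2\zeta/(d-1)}$, so the approximation term for $h$ is absorbed into the sieve bias term for $f_A$, leaving $O_p\big(n^{-1}K_1+K_1^{-2\zeta/(d-1)}\big)$.

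Next I would substitute $K_1\sim n^{(d-1)/(2\zeta+d-1)}$ and compute the two exponents. For the variance term, $n^{-1}K_1\sim n^{(d-1)/(2\zeta+d-1)-1}=n^{-2\zeta/(2\zeta+d-1)}$; for the bias term, $K_1^{-2\zeta/(d-1)}\sim n^{-\frac{2\zeta}{d-1}\cdot\frac{d-1}{2\zeta+d-1}}=n^{-2\zeta/(2\zeta+d-1)}$. The two contributions therefore balance, and the estimator $\widehat f_{B_1}(\cdot,w)$ from \eqref{VRS:estimator} attains $O_p\big(n^{-2\zeta/(2\zeta+d-1)}\big)$ in $L^2(\R^{d-1})$--norm, which is the claim.

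Finally I would check that this choice of $K_1$ is admissible under the standing conditions inherited from Theorem~\ref{thm:rate:fdb1}. Because $\zeta>0$ one has $(d-1)/(2\zeta+d-1)<1$, so with $K=K_0K_1\sim K_1$ and $\lambda_K=\lambda_{K_1}$ bounded away from zero we get $K\log(n)\sim n^{(d-1)/(2\zeta+d-1)}\log(n)=o(n)=o(n\lambda_K)$, verifying Assumption~\ref{Ass:bas}~$(iv)$; moreover $n^{-1}K_1\sim n^{-2\zeta/(2\zeta+d-1)}\gtrsim n^{-2\rho/(2\zeta+d-1)}\sim K_1^{-2\rho/(d-1)}$, confirming that $n^{-1}K_1$ dominates the squared bias of estimating $h$ as presumed in the statement.

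There is essentially no substantive obstacle here, since the result is a direct corollary of Theorem~\ref{thm:rate:fdb1}; the only point requiring minor care is the exponent bookkeeping together with the verification that the chosen $K_1$ respects the growth restriction $K\log(n)=o(n\lambda_K)$ and the dominance of $n^{-1}K_1$ over the $h$-bias term.
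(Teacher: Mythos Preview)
Your proposal is correct and matches the paper's approach: the paper explicitly omits the proof, stating that the result follows immediately from Theorem~\ref{thm:rate:fdb1}, which is precisely what you do by plugging the extra hypotheses into that bound and balancing variance and bias. Your additional checks on the admissibility of the chosen $K_1$ are more than the paper provides and are fine.
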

As we see from the previous result, the rate of convergence coincides with the usual optimal rate in nonparametric density of dimension $d-1$. We see that the rate for estimating the density of the varying random slope $B_1$ is not affected by the ill-posedness under the condition $K_0=O(1)$. 

\subsection{Pointwise Limit Theory for Linear Functionals}\label{sub:sec:pw:inference}
This subsection is about inference  of linear functionals $\ell(\cdot):L^2(\mathbb R^d)\to\mathbb R$ of the density $f_B(\cdot,w)$ for some realization $w$ of $W$. Examples of linear functionals are, but are not limited to, the point-evaluation functional or (weighted) averages of $f_B(\cdot,w)$. The linear functional $\ell(f_B(\cdot,w))$ is estimated below using the plug-in sieve estimate estimator $\ell(\widehat f_B(\cdot,w))$. This subsection provides a limit theory for this plug-in estimator. 

As mentioned in the introduction, an important example of a linear functional of $f_B$ is 
the density of the potential outcome $Y^{s}=\mathbf g(w)'x+A'x$, where $s=(x',w')'$. Indeed, the density of  $Y^s$ can be written as\footnote{This relation follows immediately by the well known property $f_{A_0+A_1'x}(c)=\int_{\mathbb R^{d-1}} f_A\big(c-u'x,u\big)du$.}
\begin{align*}
f_{ Y}(y,s)=\int_{\mathbb R^{d-1}} f_B\big((y-x'b_1,b_1),w\big)\, db_1.
\end{align*}
Also the  weighted averages of potential outcomes $\int_\mathbb R  \omega(y) f_Y\big(y,s\big)dy$
for some function $\omega$ is a linear functional of $f_B(\cdot,w)$. In particular, we obtain the distribution of the potential outcomes which is relevant also in the context of quantile treatment effects.

The asymptotic distribution result below requires the following additional notation and assumptions. We introduce the sieve variance
\begin{align}\label{sieve:var}
\textsl{v}_K(w)&=\ell\big(q^K(\cdot -\mathbf g(w))\big)' Q^{-1/2}\,\Sigma\,Q^{-1/2}\ell\big(q^K(\cdot -\mathbf g(w))\big)
\end{align}
where 
\begin{align*}
\Sigma=\int_\R\int_\R  R(s) P^{-1}\E\Big[p^K(X)\rho(s)\rho(-t)p^K(X)'\Big] P^{-1}R(-t)' d\nu(s)d\nu(t),
\end{align*}
using the notation $\rho(t)=\exp(it(Y-g(S)))-h(X,t;g)$ and the matrix valued function $R(t)=Q^{-1/2}\int_{\R^{d-1}} (\mathcal F q^K)(t, tx)p^K(x)'dx$. 
We replace the sieve variance $\textsl{v}_K$ by the estimator
\begin{align}\label{sieve:var:est}
\widehat{\textsl{v}}_K(w)&=\ell\big(q^K(\cdot -\widehat{\mathbf g}(w))\big)' Q^{-1/2}\, \widehat \Sigma\, Q^{-1/2}\ell\big(q^K(\cdot -\widehat{\mathbf g}(w))\big),
\end{align}
where the matrix $\Sigma$ is replaced by
\begin{align*}
\widehat \Sigma=\int_\R\int_\R  R(s) \widehat P^{-1} \frac{1}{n}\sum_{j=1}^n p^K(X_j)\widehat \rho_j(s)\widehat \rho_j(-t) p^K(X_j)'\widehat P^{-1}R(-t)'d\nu(s)d\nu(t),
\end{align*}
with  $\widehat \rho_j(t)=\exp(it(Y_j-\widehat g(S_j)))-\widehat h(X_j,t;\widehat g)$.
In order to derive the asymptotic distribution of our estimator we require addition assumptions. Below, we denote the inverse Fourier transform by $(\mathcal F^{-1}\phi)(u)=(2\pi)^{-d}\int_{\R^d}\exp(-itu)\phi(t)dt$.
\begin{ass}\label{A:inf}
 (i)  The minimal eigenvalue of $\lambda_K^{-1}\Sigma$ is uniformly bounded away from zero. 
 (ii) It holds $C_n^2K^2=o(n\lambda_K)$. 
 (iii) For all $w\in\mathcal W$: $\sqrt{n}\ell\big(\Pi_K f_A(\cdot -\mathbf g(w))-f_A(\cdot -\mathbf g(w))\big)=o\big(\sqrt{\textsl{v}_K(w)}\big)$, 
 $\sqrt n \ell\big([\mathcal F^{-1}(\gamma'p^K- h)](\cdot -\mathbf g(w))\big) =o\big(\sqrt{\textsl{v}_K(w)}\big)$, and
 $\sqrt n \big\|\widehat{\mathbf g}(w)-\mathbf g(w)\big\|=o_p\big(\sqrt{\textsl{v}_K(w)}\big)$.
 (iv) The sieve space $\mathcal A_K$ is linear:  $\mathcal{A} _K=\big\{\phi(\cdot)=\beta' q^K(\cdot):\,\beta\in\R^K\big\}$ and $q_l$, $l\geq 1$, are continuously differentiable. 
(v) It holds $\int_\R \| R(t)\|^2d\nu(t)=O(1)$. 
\end{ass}
Assumption \ref{A:inf} (i) implies a lower bound on the sieve variance which we require to achieve asymptotic distribution results of the estimator. 
This condition implies that the sieve variance attains the lower bound $\textsl{v}_K(w)\geq C_\Sigma\, \|\ell\big(q^K(\cdot -\mathbf g(w))\big)' Q^{-1/2}\|^2/\lambda_K$ for some constant $C_\Sigma>0$. 
For instance, when $\ell(\cdot)$ is the point evaluation functional  and $Q$ is a diagonal matrix with polynomial decay of order  $-2\alpha/d$ we obtain $\textsl{v}_K(w)\geq C_\Sigma\,  K^{(2\alpha+d)/d}$ provided that $\lambda_K$ is bounded from below and $\|q^K(a -\mathbf g(w))\|^2\geq K$ which holds at most points $a\in\mathcal A$ (see \cite{belloni2012}). Consequently the lower bound of the sieve variance corresponds to the mildly ill-posed case in \cite[p. 1053]{chen2013}.  
Assumption \ref{A:inf} (ii) imposes a stronger rate requirement on $K$ than the one required for consistency. 
 Assumption \ref{A:inf} (iii) specifies a pointwise sieve approximation error and has the interpretation of an undersmoothing condition, which is required to ensure that the approximation bias is asymptotically negligible. 
 Assumption \ref{A:inf} (iv) restricts the sieve space to be linear and, in particular, that constraints on density estimation, such as positivity, are not binding. If such constraints are binding asymptotically then such shape restriction can lead to non-normal distributions.
 Assumption \ref{A:inf} (v) is automatically satisfied if the basis under consideration is  given by Hermite functions.
 
 \begin{ass}\label{A:inf:var}
  It holds $\zeta_n=o(1)$ where
  $\zeta_n=\sqrt{K
\log(n)/(n\tau_K\lambda_K)}+  K^{1/2-\rho/(d-1)}$ and $\|\gamma'p^K-
h\|_\nu=O(K^{-\rho/(d-1)})$.
 \end{ass}
 Assumption \ref{A:inf:var} strengthens the rate conditions imposed in Assumption \ref{A:inf} and is required for consistent estimation of the sieve variance $\textsl{v}_K(w)$.
 The next result establishes the asymptotic distribution of the estimator $\ell(\widehat f_B(\cdot,w)) $.
\begin{theorem}\label{thm:inference}
  Let Assumptions \ref{ass:ind}--\ref{Ass:bas}, \ref{Ass:lin} (ii)--(iv), and  \ref{A:inf} be satisfied. Then,  for any $w\in\mathcal W$:
    \begin{align*}
      \sqrt{n/\textsl{v}_K(w)}\,\Big(\ell\big(\widehat f_B(\cdot,w)\big) - \ell\big(f_B(\cdot,w)\big)\Big)\stackrel{d}{\rightarrow}\mathcal N(0,1).
    \end{align*}
If, in addition Assumption \ref{A:inf:var} holds, then
      \begin{align*}
      \sqrt{n/\widehat{\textsl{v}}_K(w)}\,\Big(\ell\big(\widehat f_B(\cdot,w)\big) - \ell\big(f_B(\cdot,w)\big)\Big)\stackrel{d}{\rightarrow}\mathcal N(0,1).
    \end{align*}
\end{theorem}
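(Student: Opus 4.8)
The plan is to prove the first convergence by linearising $\ell(\widehat f_B(\cdot,w))$ around its population counterpart and applying the Lindeberg--Feller central limit theorem, and then to deduce the second from the first together with $\widehat{\textsl{v}}_K(w)/\textsl{v}_K(w)\convinp 1$. I work throughout on the linear sieve space of Assumption~\ref{A:inf}~$(iv)$, so that $\ell(\widehat f_B(\cdot,w))=\ell(q^K(\cdot-\widehat{\mathbf g}(w)))'Q^{-1}\widehat m$ with $\widehat m=\int_{\R^{d-1}}\int_\R(\mathcal F q^K)(-t,-tx)\,\widehat h(x,t;\widehat g)\,d\nu(t)\,dx$, and abbreviate $c_K(w)=Q^{-1/2}\ell(q^K(\cdot-\mathbf g(w)))$.

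First I would decompose $\ell(\widehat f_B(\cdot,w))-\ell(f_B(\cdot,w))$ into a leading i.i.d.\ term, a sieve-bias term, and remainders due to the preliminary estimators $\widehat h$ and $\widehat g$. Writing $\widehat h(x,t;\widehat g)=p^K(x)'\widehat\gamma(t)$ and expanding $\exp(it(Y_j-\widehat g(S_j)))$ around $\exp(it(Y_j-g(S_j)))=h(X_j,t;g)+\rho_j(t)$, and using $\int_{\R^{d-1}}(\mathcal F q^K)(-t,-tx)p^K(x)'dx=Q^{1/2}R(-t)$ together with $\widehat P\approx P$, the $\rho_j$-component of $\widehat m$ contributes the average $n^{-1}\sum_{j=1}^n\psi_j$ with
\[
\psi_j=c_K(w)'\int_\R R(-t)\,P^{-1}\rho_j(t)\,p^K(X_j)\,d\nu(t);
\]
these are i.i.d., mean zero since $\E[\rho_j(t)\mid X_j]=0$, and have variance $\textsl{v}_K(w)$ as in \eqref{sieve:var}. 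Because $|\rho_j(t)|\le 2$, Assumption~\ref{A:inf}~$(v)$, Assumption~\ref{Ass:bas}~$(iii)$--$(iv)$ and Cauchy--Schwarz yield $|\psi_j|\lesssim\|c_K(w)\|\sqrt K/\lambda_K$, while Assumption~\ref{A:inf}~$(i)$ gives $\textsl{v}_K(w)\gtrsim\|c_K(w)\|^2/\lambda_K$, so $\max_{j\le n}|\psi_j|/\sqrt{n\,\textsl{v}_K(w)}\lesssim\sqrt{K/(n\lambda_K)}=o(1)$ by Assumption~\ref{Ass:bas}~$(iv)$; the Lindeberg condition follows and $\sqrt{n/\textsl{v}_K(w)}\,n^{-1}\sum_j\psi_j\convind\mathcal N(0,1)$.

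It then remains to show the other contributions are $o_p(\sqrt{\textsl{v}_K(w)/n})$. The sieve bias splits into the error of approximating $f_A$ by $\varPi_Kf_A$ and the error of approximating $h(x,t;g)$ in $x$ by $p^K(x)'\gamma(t)$; moving between the norms $\|\cdot\|_{\R^d}$, $\|\cdot\|_\nu$ and the functional via Assumption~\ref{Ass:bas}~$(v)$ and $\|c_K(w)\|$, these are killed by the first two undersmoothing conditions of Assumption~\ref{A:inf}~$(iii)$ (the second stated through $\mathcal F^{-1}(\gamma'p^K-h)$). Replacing $\widehat P$ by $P$ is handled by an operator-norm concentration bound for $\widehat P-P$ under Assumption~\ref{Ass:bas}~$(iv)$. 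The mean-zero fluctuation $n^{-1}\sum_j\delta_j(t)p^K(X_j)$ with $\delta_j(t)=h(X_j,t;g)-p^K(X_j)'\gamma(t)$ is orthogonal in expectation and has variance of smaller order than $\textsl{v}_K(w)/n$ by the rate on $\|\gamma'p^K-h\|_\nu$. The effect of $\widehat g$ inside $\exp(it(Y_j-\widehat g(S_j)))$ is linearised through $|e^{ia}-e^{ib}|\le|a-b|$ and bounded by an empirical-process argument over $\mathcal G$, using the entropy bound of Assumption~\ref{Ass:lin}~$(iv)$, the rate $\|\widehat g-g\|_\infty^2=O_p(K/n)$ of Assumption~\ref{Ass:lin}~$(iii)$, and the strengthened rate $C_n^2K^2=o(n\lambda_K)$ of Assumption~\ref{A:inf}~$(ii)$. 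Finally the effect of $\widehat{\mathbf g}(w)$ inside $q^K(\cdot-\widehat{\mathbf g}(w))$ is removed by a Taylor expansion using continuous differentiability of the $q_l$ (Assumption~\ref{A:inf}~$(iv)$) and $\sqrt n\|\widehat{\mathbf g}(w)-\mathbf g(w)\|=o_p(\sqrt{\textsl{v}_K(w)})$ from Assumptions~\ref{Ass:lin}~$(iii)$ and \ref{A:inf}~$(iii)$. Combining with the CLT above and Slutsky gives the first display.

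For the second display it suffices to prove $\widehat{\textsl{v}}_K(w)/\textsl{v}_K(w)\convinp 1$. The factor $\ell(q^K(\cdot-\widehat{\mathbf g}(w)))$ is compared with $\ell(q^K(\cdot-\mathbf g(w)))$ through continuity of the $q_l$ and consistency of $\widehat{\mathbf g}(w)$; for $\widehat\Sigma$ one substitutes the limits of $\widehat h$, $\widehat g$ and $\widehat P$ inside the double $\nu\times\nu$ integral, where the extra rate $\zeta_n=o(1)$ of Assumption~\ref{A:inf:var} makes $\widehat h(\cdot,\cdot;\widehat g)$ converge to $h(\cdot,\cdot;g)$ in a $\nu$-weighted sense uniformly enough, relative to the $\tau_K,\lambda_K$ scaling, for the quadratic form to converge. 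Slutsky then upgrades the first display to the second. The main obstacle is the remainder analysis of the preceding paragraph: controlling the errors from the two nuisance estimators $\widehat h$ and $\widehat g$ simultaneously while tracking the ill-posedness factor $\tau_K$ and the near-degeneracy factor $\lambda_K$, in particular the empirical-process term generated by the nonlinear dependence on $\widehat g$, together with the uniform-in-$t$ bookkeeping needed to make $\widehat{\textsl{v}}_K(w)$ consistent.
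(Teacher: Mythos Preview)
Your proposal is correct and follows essentially the same architecture as the paper's proof: the same five-way decomposition into the leading i.i.d.\ term, the two sieve-bias pieces handled via Assumption~\ref{A:inf}~$(iii)$, the empirical-process control of the $\widehat g$-perturbation via Assumption~\ref{Ass:lin}~$(iv)$ and~\ref{A:inf}~$(ii)$, and the Taylor expansion in $\widehat{\mathbf g}(w)$, followed by Lindeberg--Feller and then Slutsky with Lemma~\ref{lem_cons_var} for the feasible version. One small difference worth noting is that you verify Lindeberg through the almost-sure bound $|\psi_j|\lesssim\|c_K(w)\|\sqrt K/\lambda_K$ and $K\log n=o(n\lambda_K)$, whereas the paper bounds fourth moments and invokes $K^2=o(n\lambda_K)$ from Assumption~\ref{A:inf}~$(ii)$; both are valid under the stated assumptions.
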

A direct implication of the previous results concerns inference on functionals of the VRS density $f_{B_1}(\cdot,w)$.
The next result is based on plug-in series estimator  \eqref{VRS:estimator} using tensor product Hermite functions.
  In this case, the sieve variance simplifies to
 \begin{align*}
\textsl{v}_{K_1}(w)&=
\ell\big(q^{K_1}(\cdot -\mathbf g(w))\big)' \Sigma_1\ell\big(q^{K_1}(\cdot -\mathbf g(w))\big)
\end{align*}
 where the matrix $\Sigma_1$ is coincides with $\Sigma$ except that $K$ replaced by $K_1$ and $R(t)$ replaced by $\int_{\R^{d-1}} b_{K_0}(t)\widetilde q^{K_1}(tx)p^{K_1}(x)'dx$, where the function $b_{K_0}$ is given in Remark \ref{remark:ind}.
Now if $K_0=O(1)$ a lower bound for the sieve variance is $\textsl{v}_{K_1}(w)\geq C_{\Sigma_1} \,\|\ell\big(q^{K_1}(\cdot -\mathbf g(w))\big)\|^2/\lambda_{K_1}$ for some constant $C_{\Sigma_1}>0$. When $ \lambda_{K_1}$ is uniformly bounded away from zero this corresponds to the usual lower bound in well posed estimation problems, see \cite{newey1997} or \cite{belloni2012}. 
An estimator $\widehat{\textsl{v}}_{K_1}(w)$ of the sieve variance $\textsl{v}_{K_1}(w)$  is obtained by replacing the covariance matrix $\Sigma_1$ by $\widehat \Sigma_1$ which is analog to the definition of $\widehat \Sigma$. The next result is an immediate consequence of Theorem \ref{thm:inference} and hence its proof is omitted.
\begin{corollary}\label{coro:inference}
  Let Assumptions \ref{ass:ind}, \ref{large:supp}, \ref {Ass:bas} (i)--(iv), \ref{Ass:lin} (ii)--(iv), \ref{A:inf}, and \ref{A:inf:var} be satisfied where $q^K(t, u)=q^{K_0}(t)\otimes q^{K_1}(u)$ are tensor product Hermite functions with $K_0=O(1)$. Then,  for any $w\in\mathcal W$:
    \begin{align*}
      \sqrt{n/\widehat{\textsl{v}}_{K_1}(w)}\,\Big(\ell\big(\widehat f_{B_1}(\cdot,w)\big) - \ell\big(f_{B_1}(\cdot,w)\big)\Big)\stackrel{d}{\rightarrow}\mathcal N(0,1).
    \end{align*}
\end{corollary}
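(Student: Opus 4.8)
The plan is to deduce the statement directly from Theorem~\ref{thm:inference} by realizing the VRS functional as a particular linear functional of the full VRC density and then specializing the sieve variance to the tensor Hermite setup.

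\emph{Step 1 (rewriting the target).} Using the marginal density identity $f_{B_1}(b_1,w)=\int_\R f_B(b_0,b_1,w)\,db_0$, I would introduce the linear functional $\ell^\dagger(\phi):=\ell\big(\int_\R\phi(b_0,\cdot)\,db_0\big)$, so that $\ell\big(f_{B_1}(\cdot,w)\big)=\ell^\dagger\big(f_B(\cdot,w)\big)$; it suffices to treat $\ell^\dagger$ as acting only on the span of $f_A$ and the sieve elements, which is all the argument requires. The key bookkeeping step is to verify that the closed-form estimators are compatible with this identity, i.e.\ $\ell\big(\widehat f_{B_1}(\cdot,w)\big)=\ell^\dagger\big(\widehat f_B(\cdot,w)\big)$. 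Applying $\int_\R(\cdot)\,db_0$ to the double series least squares estimator \eqref{est:denst:hermite} with $q^K=q^{K_0}\otimes q^{K_1}$: translation invariance of Lebesgue measure gives $\int_\R q^{K_0}(b_0-\widehat g_0(w))\,db_0=\int_\R q^{K_0}(a)\,da$, removing the dependence on $\widehat g_0(w)$, and combining this with $Q^{-1}=Q_0^{-1}\otimes\text{I}_{K_1}$ (Lemma~\ref{lem:vrs:ident}) and the factorization $\widetilde q^K(-t,-tx)=\widetilde q^{K_0}(-t)\otimes\widetilde q^{K_1}(-tx)$ contracts the $K_0$-block into the scalar weight $b_{K_0}(t)$, reproducing exactly \eqref{VRS:estimator} (the choice of sieve dimension used to form $\widehat h$, $K$ versus $K_1$, is immaterial because $K_0=O(1)$).

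\emph{Step 2 (verifying the hypotheses).} All hypotheses of Theorem~\ref{thm:inference} except Assumption~\ref{Ass:bas}$(v)$ are assumed in the corollary, so I would show Assumption~\ref{Ass:bas}$(v)$ is implied by $K_0=O(1)$. By Lemma~\ref{lem:vrs:ident}, $Q=Q_0\otimes\text{I}_{K_1}$ with $Q_0$ a fixed positive definite matrix of dimension $K_0$, so taking $\tau_K:=\lambda_{\min}(Q_0)\sim1$ gives $\lambda_{\max}(\tau_KQ^{-1})=O(1)$; and the stability bound $\|\mathcal F(\varPi_Kf_A-f_A)\|_\nu^2=O(\tau_K\|\varPi_Kf_A-f_A\|_{\R^d}^2)$ reduces (since $\tau_K\sim1$) to $\|\mathcal F g\|_\nu^2\lesssim\|g\|_{\R^d}^2$, which holds for every $g\in L^2(\R^d)$ once $\widetilde\nu$ is bounded (as for the log-normal weighting of Example~\ref{Exmp:LN}), after the substitution $y=tx$ in the definition of $\|\cdot\|_\nu$. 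With $\tau_K\sim1$, the factors $(\tau_K\lambda_K)^{-1}$ and $\tau_K$ entering Assumptions~\ref{A:inf} and~\ref{A:inf:var} collapse to $\lambda_{K_1}^{-1}$ and $1$, so those assumptions become the well-posed conditions they are stated as for $K_1$; in particular Assumption~\ref{A:inf}$(i)$ together with $K_0=O(1)$ yields the lower bound $\textsl{v}_{K_1}(w)\ge C_{\Sigma_1}\|\ell(q^{K_1}(\cdot-\mathbf g(w)))\|^2/\lambda_{K_1}$, and Assumption~\ref{A:inf}$(iii)$ gives the undersmoothing needed to make the sieve bias $o(\sqrt{\textsl{v}_{K_1}(w)})$.

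\emph{Step 3 (identifying the variance and concluding).} Finally I would carry out the Kronecker-product computation that identifies the general sieve variance \eqref{sieve:var} formed with $\ell^\dagger$ with $\textsl{v}_{K_1}(w)=\ell(q^{K_1}(\cdot-\mathbf g(w)))'\Sigma_1\ell(q^{K_1}(\cdot-\mathbf g(w)))$. Using $Q^{-1/2}=Q_0^{-1/2}\otimes\text{I}_{K_1}$, $(\mathcal F q^K)(t,tx)=(\mathcal F q^{K_0})(t)\otimes(\mathcal F q^{K_1})(tx)$, and $\ell^\dagger\big(q^K(\cdot-\mathbf g(w))\big)=\big(\int_\R q^{K_0}(a)\,da\big)\otimes\ell\big(q^{K_1}(\cdot-\mathbf g(w))\big)$, the $K_0$-blocks in $\ell^\dagger\big(q^K(\cdot-\mathbf g(w))\big)'Q^{-1/2}$ and in $R(t)$ both contract to the scalar $b_{K_0}(t)$, so $\Sigma$ collapses to $\Sigma_1$ with $R(t)$ replaced by $\int_{\R^{d-1}}b_{K_0}(t)\widetilde q^{K_1}(tx)p^{K_1}(x)'dx$; the same substitution applied to $\widehat\Sigma$ and $\widehat{\mathbf g}(w)$ gives $\widehat{\textsl{v}}_{K_1}(w)$. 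Invoking Theorem~\ref{thm:inference} for $\ell^\dagger$ (the infeasible version, and the feasible version under Assumption~\ref{A:inf:var}) then yields both normal limits, and the stated conclusion follows. The main obstacle I anticipate is precisely this variance bookkeeping in Steps~1 and~3 — checking that the $K_0$-direction genuinely contracts to $b_{K_0}(t)$ simultaneously in the outer functional, inside $R(t)$, and inside $\Sigma$ and their estimated analogues — together with confirming in Step~2 that the lower bound on $\textsl{v}_{K_1}(w)$ needs nothing beyond $K_0=O(1)$ and Assumption~\ref{A:inf}$(i)$.
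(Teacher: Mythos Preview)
Your proposal is correct and follows the same route the paper takes: the paper states that the corollary ``is an immediate consequence of Theorem~\ref{thm:inference} and hence its proof is omitted,'' and the text preceding the corollary already records the tensor-product variance simplification $\textsl{v}_{K_1}(w)=\ell(q^{K_1}(\cdot-\mathbf g(w)))'\Sigma_1\,\ell(q^{K_1}(\cdot-\mathbf g(w)))$ with $R(t)$ replaced by $\int_{\R^{d-1}} b_{K_0}(t)\widetilde q^{K_1}(tx)p^{K_1}(x)'dx$. Your Steps~1--3 simply spell out the marginalization-to-$\ell^\dagger$ reduction, the observation that $K_0=O(1)$ makes $\tau_K\sim1$ (so the missing Assumption~\ref{Ass:bas}$(v)$ is harmless), and the Kronecker bookkeeping that the paper leaves implicit; this is exactly the argument the paper has in mind.
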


\subsection{Bootstrap Uniform Confidence Bands}\label{sec:bootstrap:conf}
This subsection provides a  bootstrap procedure to construct uniform confidence bands for $f_B(\cdot,w)$ and establishes asymptotic validity of it. The multiplier bootstrap procedure is as follows. Let $(\varepsilon_1,\ldots,\varepsilon_n)$  be a bootstrap sequence of i.i.d. random variables drawn independently of the data $\{(Y_1,X_1,W_1), \ldots, (Y_n,X_n,W_n)\}$, with $\E[\varepsilon_j] = 0$, $\E[\varepsilon_j^2] = 1$, $\E[|\varepsilon_j|^{3}] < \infty$ for all $1\leq j\leq n$.
Common choices of distributions for $\varepsilon_j$ include the standard Normal, Rademacher, and the two-point distribution of \cite{mammen1993}. Further, $\mathbb{P}^*$ denotes the probability distribution of the bootstrap innovations $(\varepsilon_1,\ldots,\varepsilon_n)$ conditional on the data. For any $w\in\mathcal W$,  we introduce the bootstrap process
\begin{align*}
\mathbb Z^*(b,w) = \frac{q^K(b-\widehat{\mathbf g}(w))' Q^{-1/2}}{\sqrt{\widehat{\textsl{v}}_K(b,w)}}\left(\frac{1}{\sqrt{n}}\sum_{j=1}^n\int_\R  R(t)\widehat \rho_j(t)d\nu(t)
 \widehat P^{-1} p^K(X_j)\, \varepsilon_j
\right).
\end{align*}
Here, we use the notation $\textsl{v}_K(b,w)$ and $\widehat{\textsl{v}}_K(b,w)$ for the sieve variance and its estimator given in \eqref{sieve:var} and \eqref{sieve:var:est} in case of point-evaluation functionals.
Let $\mathcal C$ be  a closed subset of $\mathbb R^d$. 
Let $\Delta$ be the standard deviation semimetric on $\mathcal{C}$ of the Gaussian Process $\mathbb Z(b,w) = q^K(b-\mathbf g(w))'Q^{-1/2}\mathcal Z/\sqrt{\textsl{v}_K(b,w)}$ with $\mathcal Z\sim \mathcal N(0,\Sigma)$ defined as $\Delta_w(b_1,b_2) = (\E[(\mathbb Z(b_1,w) - \mathbb Z(b_2,w))^2])^{1/2}$, see \textit{e.g.} \cite[Appendix A.2]{Vaart2000}.
 To do so, we introduce the notation $N(\mathcal{C},\Delta,\epsilon)$ for the $\epsilon$-entropy of $\mathcal{C}$ with respect to norm $\Delta$. 
\begin{ass}\label{Ass:uniform}
  (i) $\mathcal{C}$ is compact and $(\mathcal{C},\Delta)$ is separable for each $n\geq 1$. (ii) There exists a sequence of finite positive integers $c_n$ such that
$ 1 + \int_0^{\infty}\sqrt{\log N(\mathcal{C},\Delta_w,\epsilon)}d\epsilon = O (c_n)$.
(iii) There exists a sequence of positive integers $r_n$ with $r_n = o(1)$ such that $K^{5/2}\lambda_K^{-2}= o(r_n^3\sqrt{n})$, $r_n c_n=O(1)$,  and
\begin{equation*}
    \lambda_K^{-1}K\sqrt{\frac{\log(n)}{n}}+\sqrt{\zeta_n}\Big(c_n+ K^{(-\rho)/(d-1)}+\sup_{b\in\mathcal C}\sqrt{\frac{n}{\textsl{v}_K(b,w)}}|\Pi_K f_B(b,w)-f_B(b,w)|\Big)=o(r_n).
   \end{equation*}
   (iv) It holds $\sup_{b\in\mathcal{C}}\left(\|q^K\big(b-\mathbf g(w)\big)'Q^{-1/2}\|^2/ \textsl{v}_K(b,w)\right) =O(1)$. 
\end{ass}
Assumption \ref{Ass:uniform} is similar to 
\cite[Assumption 6]{ChenChristensen2015} who establish asymptotic validity of uniform confidence bands in nonparametric instrumental variable estimation. 
Assumption \ref{Ass:uniform} (ii) is a mild regularity assumption, see also \cite[Remark 4.2]{ChenChristensen2015} for sufficient conditions. Assumption \ref{Ass:uniform} (iii) strengthens the rate conditions imposed on the dimension parameter $K$ and imposes a uniform sieve approximation error. 

The next theorem establishes the validity of the bootstrap for constructing uniform confidence bands for the VRC density $f_B(\cdot,w)$.
The proof of this result is based on strong approximation of a series process by a Gaussian process, and uses an anti-concentration inequality for the supremum of the approximating Gaussian process obtained in \cite{chernozhukov2014}. For sieve minimum distance estimation in nonparametric instrumental variable estimation this is also exploited by \cite{ChenChristensen2015}.
\begin{theorem}\label{thm:bands}
  Let the assumptions of Theorem \ref{thm:inference} and Assumption \ref{Ass:uniform} hold.
  Then for all $w\in\mathcal W$: 
  \begin{equation*}
      \sup_{s\in\mathbb{R}}\left|\Pr\left(\sup_{b\in\mathcal{C}}\left|\sqrt{\frac{n}{\widehat{\textsl{v}}_K(b,w)}}\Big(\widehat f_B(b,w)-f_B(b,w)\Big)\right|\leq s\right) - \Pr^*\left(\sup_{b\in\mathcal{C}}\left|\mathbb Z^*(b,w)\right| \leq s \right)\right| = o_p(1).
  \end{equation*}
\end{theorem}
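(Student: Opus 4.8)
\emph{Proof sketch.} The plan is to adapt the strong-approximation argument of \cite{ChenChristensen2015} for sieve nonparametric instrumental variables to the weighted sieve minimum distance estimator of Section~\ref{sec_inference}. Write the studentized supremum statistic as
\[
T_n := \sup_{b\in\mathcal C}\left|\sqrt{n/\widehat{\textsl{v}}_K(b,w)}\,\big(\widehat f_B(b,w)-f_B(b,w)\big)\right|,
\]
and recall the Gaussian process $\mathbb Z(b,w)=q^K(b-\mathbf g(w))'Q^{-1/2}\mathcal Z/\sqrt{\textsl{v}_K(b,w)}$ with $\mathcal Z\sim\mathcal N(0,\Sigma)$; set $T_n^G:=\sup_{b\in\mathcal C}|\mathbb Z(b,w)|$, whose law is deterministic. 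By the triangle inequality it suffices to show (a) $\sup_{s\in\R}|\Pr(T_n\le s)-\Pr(T_n^G\le s)|=o(1)$ and (b) $\sup_{s\in\R}|\Pr^*(\sup_{b\in\mathcal C}|\mathbb Z^*(b,w)|\le s)-\Pr(T_n^G\le s)|=o_p(1)$. Both parts combine the same two ingredients: a Yurinskii-type Gaussian coupling for the $K$-dimensional linear term (conditional on the data in part (b)), and the anti-concentration inequality for suprema of Gaussian processes of \cite{chernozhukov2014}, used together with Dudley's entropy bound applied via Assumption~\ref{Ass:uniform}(ii).

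For part (a) I first linearize. Following the expansion used to prove Theorem~\ref{thm:inference}, one has, uniformly in $b\in\mathcal C$,
\[
\sqrt{\tfrac{n}{\textsl{v}_K(b,w)}}\big(\widehat f_B(b,w)-f_B(b,w)\big)=\frac{q^K(b-\mathbf g(w))'Q^{-1/2}}{\sqrt{\textsl{v}_K(b,w)}}\,\frac{1}{\sqrt n}\sum_{j=1}^n\xi_j+\mathrm{rem}_n(b),
\]
where $\xi_j:=\int_\R R(t)\rho_j(t)\,d\nu(t)\,P^{-1}p^K(X_j)$ and $\sup_{b\in\mathcal C}|\mathrm{rem}_n(b)|=o_p(r_n)$. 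The remainder collects: the sieve bias $\sqrt{n/\textsl{v}_K(b,w)}\,|\varPi_K f_B(b,w)-f_B(b,w)|$, negligible by the undersmoothing term in Assumption~\ref{Ass:uniform}(iii); the effect of replacing $\mathbf g(w)$ by $\widehat{\mathbf g}(w)$ in the argument $b-\widehat{\mathbf g}(w)$, negligible by Assumptions~\ref{Ass:lin}(iii), \ref{A:inf}(iii) and continuous differentiability of the $q_l$; the error from replacing $h$ by $\widehat h$ and $P$ by $\widehat P$ in $\widehat f_B$, controlled by Assumptions~\ref{Ass:bas}(iv), \ref{Ass:lin}(i) and the rate $\zeta_n=o(1)$; and the studentization error $\sup_{b\in\mathcal C}|\widehat{\textsl{v}}_K(b,w)/\textsl{v}_K(b,w)-1|=o_p(1)$, which follows from Assumption~\ref{A:inf:var} together with the lower bound $\textsl{v}_K(b,w)\gtrsim\|q^K(b-\mathbf g(w))'Q^{-1/2}\|^2/\lambda_K$ implied by Assumptions~\ref{A:inf}(i) and \ref{Ass:uniform}(iv). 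The $\xi_j$ are i.i.d.\ mean zero with covariance exactly $\Sigma$, and $\E\|\xi_j\|^3=O(K^{3/2}\lambda_K^{-2})$ using $\sup_{x}\|p^K(x)\|^2\lesssim K$, $\lambda_{\max}(\lambda_K P^{-1})=O(1)$ from Assumption~\ref{Ass:bas}(iv) and $\int_\R\|R(t)\|^2 d\nu(t)=O(1)$ from Assumption~\ref{A:inf}(v); hence Yurinskii coupling yields $\mathcal Z\sim\mathcal N(0,\Sigma)$ with $\|n^{-1/2}\sum_j\xi_j-\mathcal Z\|=o_p(r_n)$ under Assumption~\ref{Ass:uniform}(iii). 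Combining, $|T_n-T_n^G|=o_p(r_n)$; since Dudley's bound with Assumption~\ref{Ass:uniform}(ii) gives $\E\sup_{b\in\mathcal C}|\mathbb Z(b,w)|=O(c_n)$, the anti-concentration inequality of \cite{chernozhukov2014} yields $\sup_s\Pr(|T_n^G-s|\le\epsilon)=O(\epsilon\,c_n)$ as $\epsilon\to0$; taking $\epsilon=o(r_n)$ and using $r_nc_n=O(1)$ gives (a).

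For part (b) I argue conditionally on the data. Given the data, $\mathbb Z^*(b,w)$ is linear in the i.i.d.\ multipliers $\varepsilon_j$, and the conditional covariance of $n^{-1/2}\sum_j\int_\R R(t)\widehat\rho_j(t)\,d\nu(t)\,\widehat P^{-1}p^K(X_j)\varepsilon_j$ equals $\widehat\Sigma$; by Assumption~\ref{A:inf:var} (which is exactly the condition ensuring consistency of the sieve-variance estimator) one has $\|\widehat\Sigma-\Sigma\|=o_p(\lambda_K)$, while $q^K(b-\widehat{\mathbf g}(w))'Q^{-1/2}\to q^K(b-\mathbf g(w))'Q^{-1/2}$ uniformly by Assumption~\ref{A:inf}(iii). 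Since $\E[|\varepsilon_j|^3]<\infty$, a conditional Yurinskii coupling (with the same third-moment bookkeeping as in part (a)) produces a centered Gaussian vector with covariance $\widehat\Sigma$ within $o_p(r_n)$ of $n^{-1/2}\sum_j(\cdots)\varepsilon_j$; the bound $\|\widehat\Sigma-\Sigma\|=o_p(\lambda_K)$ together with the lower bound on $\textsl{v}_K(b,w)$ then upgrades this to $\sup_{b\in\mathcal C}|\mathbb Z^*(b,w)-\widetilde{\mathbb Z}(b,w)|=o_p(r_n)$ for a copy $\widetilde{\mathbb Z}$ of $\mathbb Z$, and a second application of the anti-concentration inequality of \cite{chernozhukov2014}, exactly as in part (a), yields (b). The triangle inequality combining (a) and (b) completes the proof. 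The main obstacle is the uniform-in-$b$ Gaussian coupling for the growing-dimension series process: the interaction between the sieve dimension $K$ and the vanishing eigenvalues (encoded by $\lambda_K$, and by $\tau_K$ through $Q^{-1/2}$ inside $R$) must be tracked sharply, which is precisely why the rate $K^{5/2}\lambda_K^{-2}=o(r_n^3\sqrt n)$ in Assumption~\ref{Ass:uniform}(iii) is imposed and why the boundedness $\int_\R\|R(t)\|^2 d\nu(t)=O(1)$ of Assumption~\ref{A:inf}(v) is needed to keep $\E\|\xi_j\|^3$ at order $K^{3/2}\lambda_K^{-2}$. A secondary difficulty is showing, uniformly over $\mathcal C$ and on both the empirical and the bootstrap side, that substituting the estimated influence function (built from $\widehat g,\widehat h,\widehat P,\widehat{\textsl{v}}_K$) for the infeasible one (built from $g,h,P,\textsl{v}_K$) is asymptotically negligible; this is handled by the consistency rates collected in Assumptions~\ref{Ass:lin}(iii), \ref{A:inf}(iii) and \ref{A:inf:var}.
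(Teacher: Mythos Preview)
Your proposal is correct and follows essentially the same approach as the paper: both linearize the studentized process, apply Yurinskii's coupling to the $K$-dimensional influence vector (and its conditional analogue on the bootstrap side), and then invoke the anti-concentration inequality of \cite{chernozhukov2014} together with the Dudley entropy bound of Assumption~\ref{Ass:uniform}(ii) to translate the $o_p(r_n)$ sup-norm coupling into convergence of the distribution functions. The paper organizes the argument into the same three steps (uniform linearization, Gaussian coupling for $\widehat{\mathbb Z}$, conditional Gaussian coupling for $\mathbb Z^*$) and relies on the auxiliary Lemmas~\ref{lem_cons_var} and~\ref{lem:gaussian:rate} for the studentization and $\|\widehat\Sigma-\Sigma\|$ bounds you invoke.
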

 Theorem \ref{thm:bands} establishes consistency of the  sieve score bootstrap procedure for estimating the critical values of the uniform sieve t-statistic process for the VRC model. The result also contributes to the literature on ordinary RC models, as up to now, only asymptotic validity of pointwise confidence intervals is established
 and complements the results of \cite{dunker2017} who proposed tests for qualitative features of the ordinary RC's density.
\section{Simulation Studies and an Empirical Illustration}\label{sec_MC}
This section provides a finite sample analysis of the proposed estimator. 
Subsection \ref{s_Monte_Carlo_simulation} presents the finite sample performance of the proposed estimator in Monte Carlo simulations. Subsection \ref{s_Application} applies the procedure to analyze heterogeneity in income elasticity of the willingness to pay for rent in an empirical illustration. 
\subsection{Simulation Studies}\label{s_Monte_Carlo_simulation} 
This subsection presents the  finite-sample
performance of the estimator of the varying random slope in a Monte Carlo
simulation study. The experiments use a sample size of $1000$ and $1000$
Monte Carlo replications in each setting.
In each experiment, i.i.d. draws of regressors $(X,W)$ are generated from
\begin{equation*}
\begin{pmatrix}
X \\ 
W%
\end{pmatrix}
\sim
\mathcal N
\left(\begin{pmatrix}
 0 \\ 
 0
\end{pmatrix}
,\begin{pmatrix}
\sigma^2 & 0 \\ 
0 & 1%
\end{pmatrix}\right)
\end{equation*}
 where the variance $\sigma^2$ is varied in the experiments. Realizations of $A$ are generated independently of $(X,W)$ as follows:
  The random slope parameter $A_1$ is drawn either by a mixture of normal distributions, i.e., $\mathcal{N}\left( -1.5,1\right) $ and $\mathcal{N}\left( 1.5,0.5\right) $ with weights $0.5$, or by the  Gamma distribution $\Gamma(3,1)$. In each case, the random intercept is generated independently of the random slope by $A_0\sim\mathcal{N}( 0,1)$. 
Realizations of the dependent variable $Y$ are obtained by
\begin{equation}
Y= A_{0}+XB_{1},  \label{sim_lin}
\end{equation}%
with varying random coefficients
\begin{equation}
B_1=g_1(W)+A_1 \label{sim_quad}
\end{equation}
where in the experiments either $g_1(w)=\sin(w)$ or $g_1(w)=\exp(|w|)-1$.

In this simulation study, a linear sieve space with Hermite functions is used and no additional constraints are imposed. Consequently, the resulting estimator of the density of $B_1$ is of closed form as given in \eqref{VRS:estimator}. 
Numerical integration is used to compute the integrals in equation \eqref{VRS:estimator} based on the Adaptive Gauss-Kronrod quadrature (using the \textit{pracma} package in R). 
We keep the dimension for the random intercept fixed with $K_0=1$. 
The conditional characteristic function $h$ is estimated via series least squares as in equation \eqref{est:h} where the basis functions coincide with the Hermite functions of dimension $K_1$. 
For the estimation of the varying coefficient function $g_1$ we use quadratic B-spline bases with interior knots placed evenly. More precisely, we use two interior knots when using the function $g_1(w)=\sin(w)$ and four  knots in the case of $g_1(w)=\exp(|w|)-1$. 
\begin{table}[ht]
\renewcommand{\arraystretch}{1.25}
\centering
    {\small    
     \begin{tabular}{|c|c||ccc|}
        \hline
 Varying Coefficient & St. Dev. of $X$ & \multicolumn{3}{c|}{$\text{MISE}(\widehat f_{B_1})$ for sieve dim.}\\
           $g_1(w)$			&     $\sigma$ & $K_1=4$ & $K_1=5$ & $K_1=6$\\\hline\hline
            $\sin(w)$		& 1/2		& $\mathbf{0.0373}$	& 0.0642												& 0.3094\\
              &		$\sqrt{1/2}$   	& 0.0184											& $\mathbf{0.0167}$		& 0.0318\\
                								&		1   	& 0.0140 										& $\mathbf{0.0075}$	& 0.0092\\            
                  							&		2   	& 0.0129											& 0.0069												& $\mathbf{0.0056}$ \\\hline
        $\exp(|w|)-1$&1/2  		& $\mathbf{0.0267}$& 0.0672 												& 0.3110	\\
                	&$\sqrt{1/2}$   	& $\mathbf{0.0156}$	&0.0189													& 0.0336	\\
                								&		1   	& 0.0126											& $\mathbf{0.0088}$		& 0.0106 \\
                								&		2   	& 0.0125											& 0.0091												& $\mathbf{0.0070}$ \\
                \hline
        \end{tabular}
        }
       \caption{{\small  Monte Carlo Results for the $\text{MISE}(\widehat f_{B_1})$ for varying values of $\sigma$ and different functions of $g_1$. Bold letters show values of $\text{MISE}(\widehat f_{B_1})$ minimized w.r.t. $K_1$.}}
       \label{table:mise}
        \end{table}
        As weighting measure $\nu$, we choose the log-normal distribution as motivated in Example \ref{Exmp:LN}, more precisely, $\nu$ is given by $\textsl{Lognormal}(0,\sigma_\nu^2)$ where $\sigma_\nu=1/4$.

For the implementation of the estimator it is required to choose the dimension parameter $K_1$, given that we have fixed the dimension $K_0=1$.
In the Monte Carlo experiments we choose these parameters to minimize the mean integrated squared error.
Table \ref{table:mise} reports the mean integrated squared error of the estimator $\widehat f_{B_1}(\cdot,w)$ of $f_{B_1}(\cdot,w)$ when $w=0$ which is given by
\begin{align*}
\text{MISE}(\widehat f_{B_1})=\E \int_{-5}^5 \big|\widehat f_{B_1}(b,0) - f_{B_1}(b,0)\big|^2db
\end{align*} 
where the expectation is over the mean of all simulations. 
\begin{figure}[ht!]
	\centering
		\includegraphics[width=15cm]{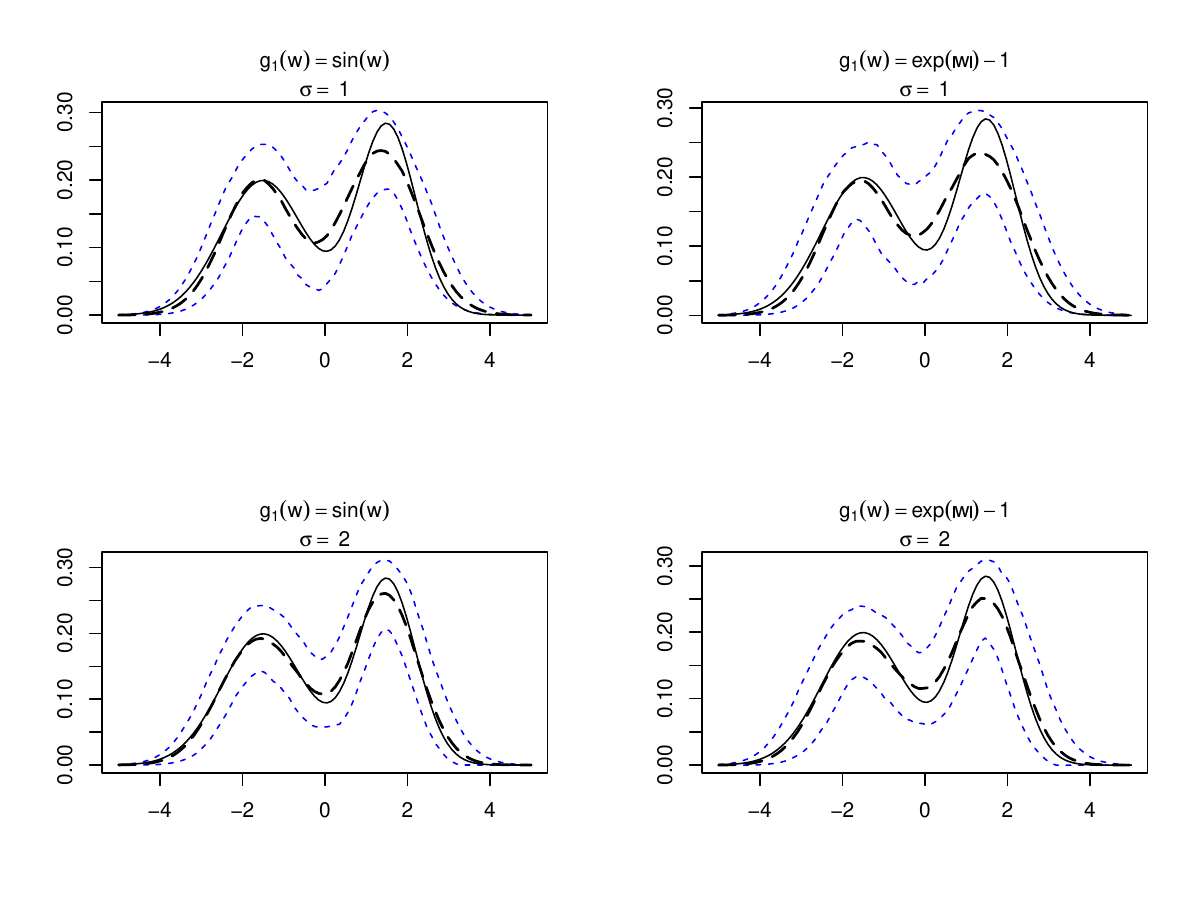}
		\vskip -1cm
	\caption{{\small The first column shows the median of the estimators $\widehat f_{B_1}(\cdot,w)$, with their pointwise 95\% confidence intervals when $g_1(w)=\sin(w)$. The second column is equivalent but uses $g_1(w)=\exp(|w|)-1$. The solid lines depict the true density $f_{B_1}(\cdot|w)$ where $w=0$. }}
	\label{dens-g1}
\end{figure}

Table \ref{table:mise} depicts values of $\text{MISE}(\widehat f_{B_1})$ for different values of $\sigma^2$ and  different functions $g_1$. In each case, the $\text{MISE}(\widehat f_{B_1})$ is provided for different sieve dimensions $K_1$, where the minimized value (w.r.t.  the sieve dimension $K_1$) is shown in bold letters. 
From Table \ref{table:mise} we see, not surprisingly, that the $\text{MISE}(\widehat f_{B_1})$ decreases as the variance of $X$ becomes larger. In particular, the optimal choice of the dimension parameter $K_1$ increases as the variance $\sigma^2$ becomes larger. This is in line with the rate of convergence as derived in Theorem \ref{thm:rate:fdb1}, see also the discussion thereafter.
The MISE when $g_1(w)=\exp(|w|)-1$ is larger as in the other case which is due to irregularity of the function $g_1$ at zero.

\begin{figure}[ht]
	\centering
		\includegraphics[width=15cm]{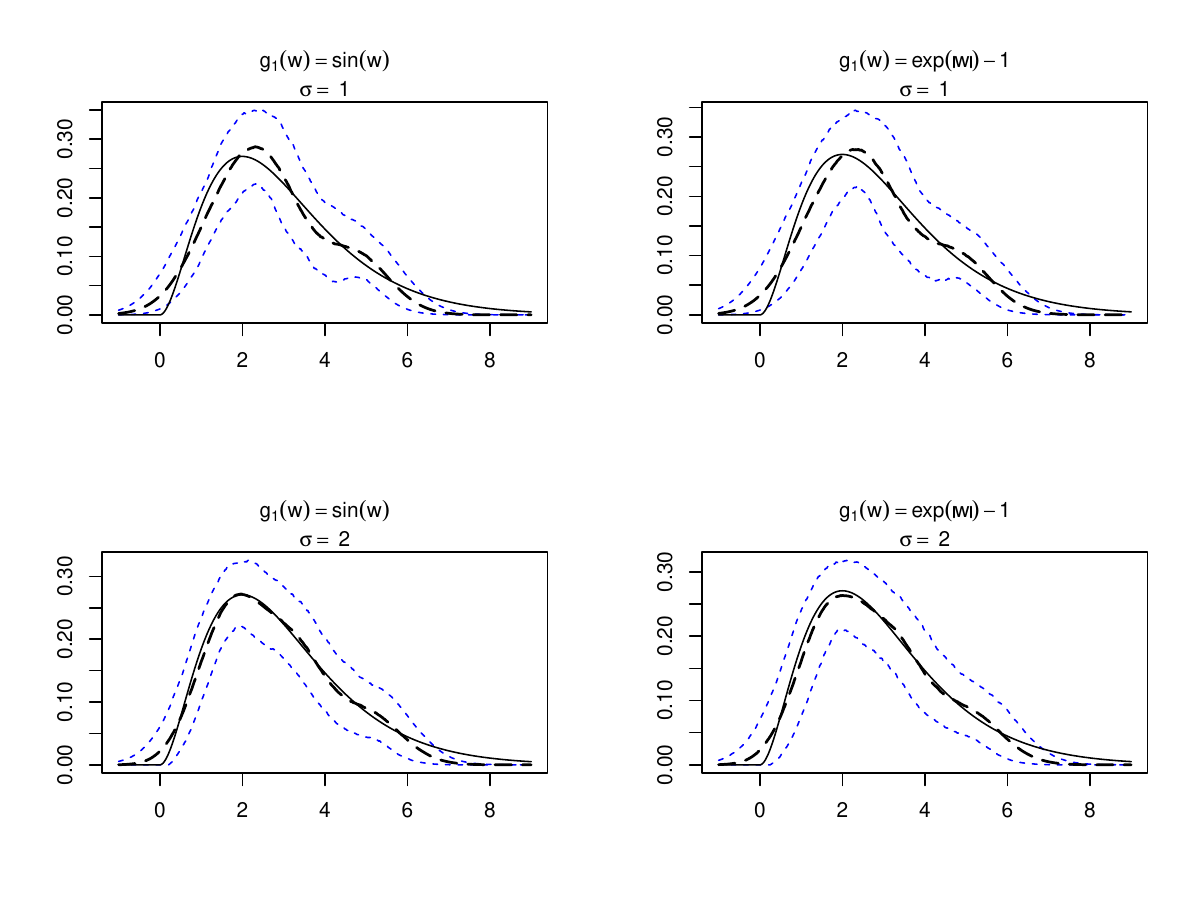}
		\vskip -1cm
	\caption{{\small As in Figure \ref{dens-g1} but  where the true density, depicted as solid line, is the density of $\Gamma(3,1)$.}}\label{dens-g2}
\end{figure}
Figure \ref{dens-g1} shows the estimation results when the true distribution of $B_1$ is given by an equally weighted mixture of $\mathcal{N}\left( -1.5,2\right) $ and $\mathcal{N}\left( 1.5,1\right) $. The solid line depicts the true density $f_{B_1}(\cdot,w)$ with $w=0$, the thick dashed line depicts the median of the estimators, and the thin dashed lines show the $95\%$ percent pointwise confidence intervals. 
The number of Hermite basis functions $K_1$ is chosen to minimize the  $\text{MISE}(\widehat f_{B_1})$ as shown in Table \ref{table:mise}, i.e., $K_1=5$ when $\sigma^2=1$ and $K_1=6$ when $\sigma^2=2$. We see that as the variance $\sigma^2$ increases (lower panel of the figure) the median of the estimators is closer to the true density. 
Although there is no positivity constraint imposed, the closed form median of the estimator together with their $95\%$ confidence intervals are non-negative between $-5$ and $5$. 
As the variance of $X$  increases the pointwise confidence intervals become more narrow, which is in  line with the pointwise asymptotic theory. This indicates the difficulty of estimating random coefficients in the case where $X$ has light tails. Nevertheless, we see that the estimator performs well even if $X$ has a small variance and is far from heavy tailed. Figure \ref{dens-g1} also shows that the procedure is robust even against irregularities of the varying coefficient function, i.e.,  when $g_1$ coincides with $g_1(w)=\exp(|w|)-1$.        
        
Figure \ref{dens-g2} depicts the estimator of $f_{B_1}$ in the case when $A_1$ is generated by the Gamma distribution $\Gamma(3,1)$. For the implementation of the estimator we use the same choice of tuning parameters as described above in the normal mixture case. Again we find the $95\%$ confidence interval and the median are more accurate when the variance of $X$ is increased from $1$ to $2$ and $g_1$ coincides with the sine function. In all cases, we see that the true density function lies outside of the confidence intervals for $b_1\in[7,9]$. This bias is due to a larger variance of $A_1$, i.e., $\Var(A_1)=3$, which implies that higher order Hermite functions are required to fully accurately capture the finite sample support of $A_1$. 

We finally present estimation results when not only the random slope (again we consider $A_1\sim\Gamma(3,1)$) but also  the random intercept is not normally distributed but $A_0\sim \mathcal U(0,1)$. We use the same implementation as above but normalize the VRS density estimator to integrate to one. Since we only use the Hermite basis function of order zero to account for the random slope the estimator is misspecified in this direction. The estimation results are shown in Figure \ref{dens-g3}. From this figure we see that misspecification of the density of $A_0$ has  only a minor effect on the accuracy of the VRS density estimator after normalization. This is in contrast to misspecification of the functional form of varying coefficient functions $g_l$ which can lead to severe nonlinear biases, see Example \ref{example:MC}. 

\begin{figure}[ht]
	\centering
		\includegraphics[width=15cm]{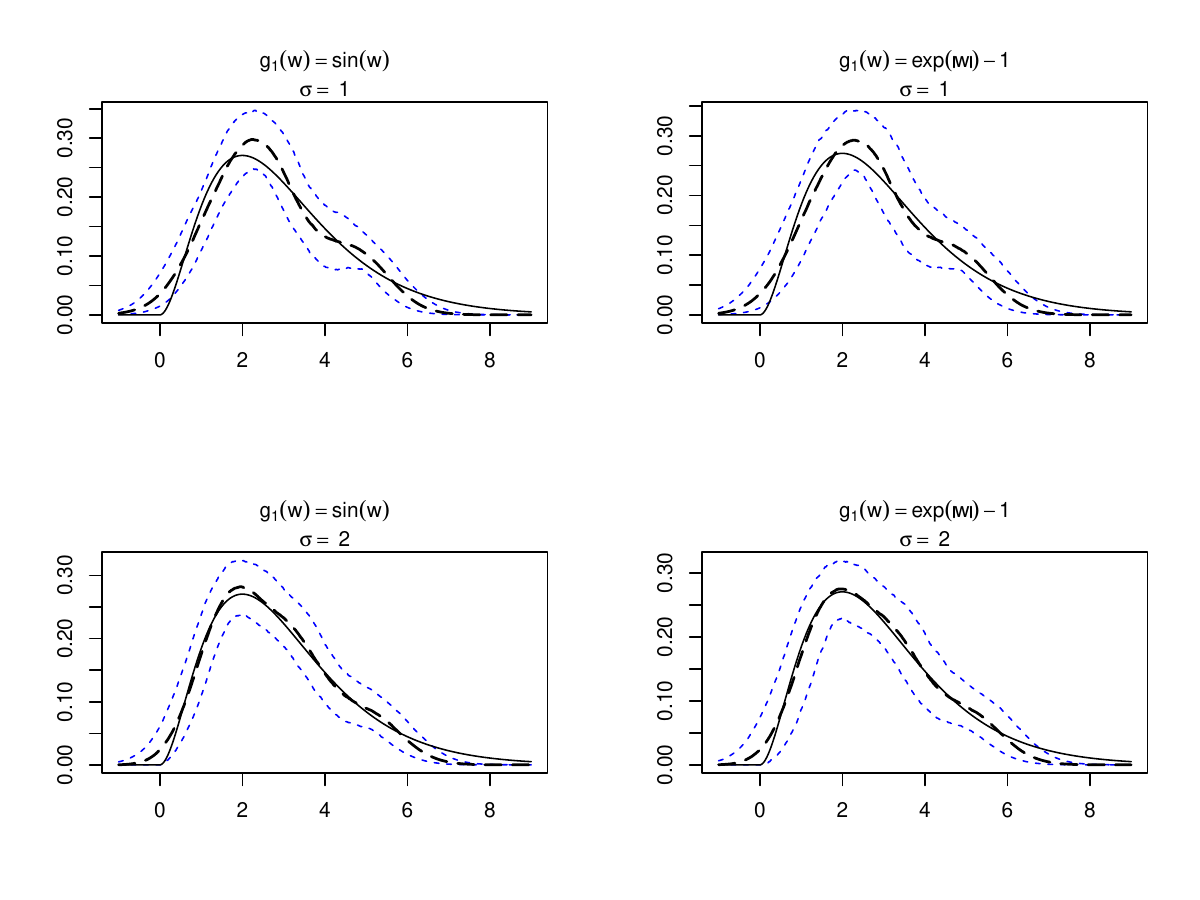}
		\vskip -1cm
	\caption{{\small As in Figure \ref{dens-g1} but  where the true density, depicted as solid line, is the density of $\Gamma(3,1)$ and $A_0\sim \mathcal U(0,1)$. Estimators are normalized to integrate to one.}}\label{dens-g3}
\end{figure}

\subsection{An Empirical Illustration}   
 \label{s_Application} 
In this subsection, the methodology is applied to analyze heterogeneity in income elasticity of demand for housing. Heterogeneity plays an important role in classical consumer demand and might be driven by unobserved heterogenous preferences. In the empirical illustration we  use data from the German Socio-Economic Panel (SOEP). While the SOEP is a longitudinal survey we restrict ourselves to the year 2013. We only consider individuals who do not have missing observations in rent, income, and size of the apartment, which results in a sample of size $n=7230$.

We are interested in assessing the heterogenous effect of household income on 
households' willingness to pay for rent. 
Formally, let us consider the empirical VRC model
\begin{align*}
Y=B_1X+g_0(W)+A_0, 
\end{align*}
and
\begin{align*}
B_1=g_1(W)+A_1,
\end{align*}
where $Y$ denotes the log monthly rent, $X$ denotes  the logarithm of the household net income per month, and $W$ is the logarithm of the size of the housing unit in square meters.\footnote{As stated in \cite{harrison1978}, rental prices reflect the market's current valuation of housing attributes, while housing values reflect expectations about future as well as present housing conditions. Hence, conceptually it is more appropriate to use rental prices when estimating hedonic functions for housing demand.}
The empirical VRC model thus imposes functional forms rather than letting the conditional distribution of unobserved heterogeneity given housing characteristics unrestricted. 
The following table
provides summary statistics of the relevant variables.
\begin{table}[ht]
\centering%
 {\small
\begin{tabular}{|c|ccccccc|}
\hline
& Min. & 1st Qu. & Median & Mean & 3rd Qu. & Max. & St. Dev. \\ \hline\hline
\text{$Y$: log rent} & 2.485    & 5.858   & 6.120   & 6.113  &  6.389  & 8.517 & 0.444 \\ 
\text{$X$: log hh. income}& 5.193 & 7.162 & 7.550 & 7.520 & 7.901 & 10.130  & 0.569\\
\text{$W$: log size housing}& 2.303 & 4.060 & 4.263 & 4.253 & 4.477 & 5.886 & 0.371
\\ \hline
\end{tabular}}
\end{table}

The interpretation of $B_1$ is that of a heterogeneous elasticity. Independence of the heterogeneous income  elasticity of demand and income itself might be difficult to justify if no additional covariates are included to explain $B_1$. We compute the variance of $B_1$ from  the empirical analog of $\E[(Y-g_1(W))^2 X]-\E[(Y-g_1(W))X]^2$ which yields the value of $0.0431$, where $g_1$ is estimated using B-splines as described below. The log size of housing $W$ explains much of the variation in  $B_1$, i.e., if $g_1\equiv 0$ then  variance of $B_1$ is given by $0.5899$. The small variance does not prevent estimating the density of $B_1$ using global basis functions, such as Hermite functions, because we can transform the model such that $B_1$ has , e.g., variance one and then back--transform the density function of $B_1$. 
 Here, $g_1$ is replaced by a B-spline estimator as explained below. 
\begin{figure}[ht]
	\centering
		\includegraphics[width=10cm]{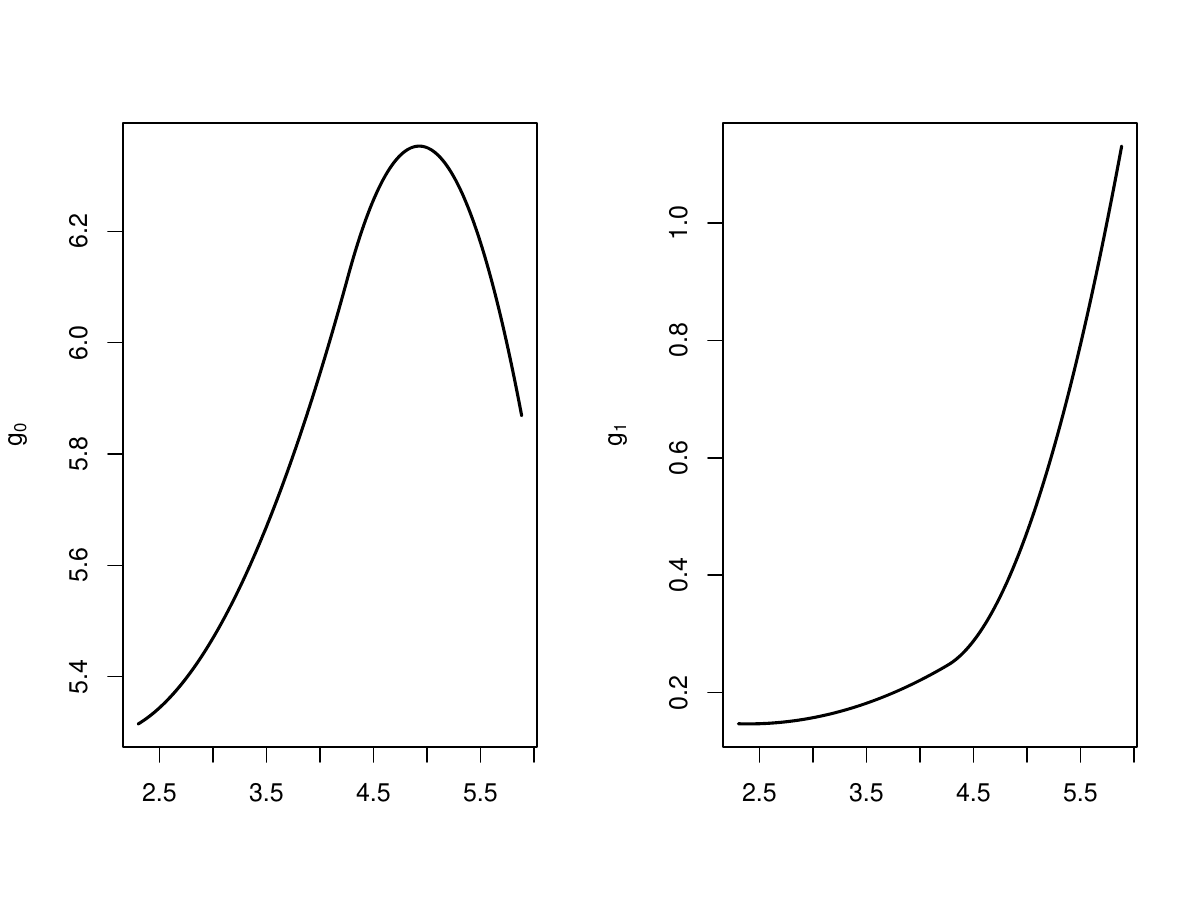}
		\vskip -1cm
	\caption{{\small Left: B-spline estimator of $g_0$. Right: B-spline estimator of $g_1$.  }}
	\label{g-est}
\end{figure}

The estimator $\widehat f_{B_1}$ is implemented as described in the previous section. The number of Hermite functions used is $K_0=1$ and $K_1=7$. The weighting measure $\nu$ is again given by $\textsl{Lognormal}(0,\sigma_\nu^2)$ with $\sigma_\nu=1/4$, as in the Monte Carlo section. For estimation of the functions $g_0$ and $g_1$ we use again quadratic B-spline bases functions with  three interior knots and follow Example \ref{exmp:series:vc}.
Figure \ref{g-est} depicts the B-spline estimators for the varying coefficient functions $g_0$ and $g_1$. We see that both estimators are nonlinear on the support of $W$. 

  For the bootstrap uniform confidence bands, we consider one representative sample and generate the bootstrap innovations $\varepsilon$ according to the two-point distribution suggested by \cite{mammen1993}, i.e., $\varepsilon$ equals $(1-\sqrt 5)/2$ with probability $(1+\sqrt 5)/(2\sqrt 5)$ and $(1+\sqrt 5)/2$ with probability $1-(1+\sqrt 5)/(2\sqrt 5)$.
Based on the estimator we generate the bootstrap process $\mathbb Z^*(\cdot|w)$ as described in Subsection \ref{sec:bootstrap:conf}. The results are based on $1000$ bootstrap iterations.

\begin{figure}[ht!]
	\centering
		\includegraphics[width=11cm]{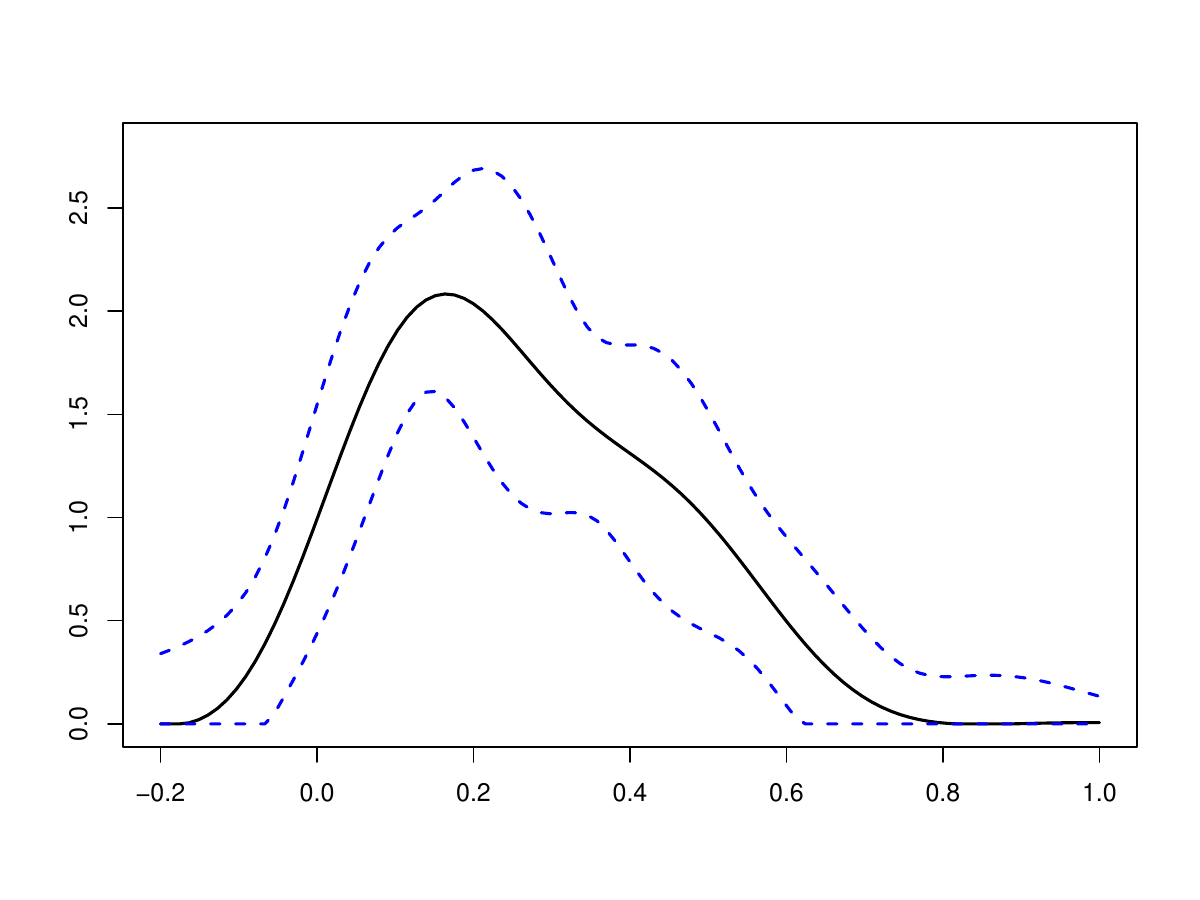}
		\vskip -1cm
	\caption{{\small Solid line depicts the sieve estimator of $f_{B_1}$ based on $K_1=7$ Hermite functions. Dotted lines depict the $95\%$  uniform confidence bands based on $1000$ bootstrap iterations. }}\label{housing-density}
\end{figure}
Figure \ref{housing-density} depicts the estimator for the density of $B_1$ evaluated at the mean of $W$ which is $w=4.253$. Note that $B_1$ can be directly interpreted as heterogenous marginal effect. 
From this figure we see that the estimated density has support between $-0.2$ and $0.8$. The uniform confidence bands show that the support is significantly positive (at $0.05$ nominal level) only at $-0.05$ and $0.6$. The estimated density is clearly not symmetric. 
We also see that the density is positively skewed and is more heavy tailed on the right hand side. This is reasonable as one would expect the response of a marginal increase of income to be skewed. 
 It is also interesting to see that the $95\%$ uniform confidence sets are bounded away from zero.

\section{Conclusion}\label{sec_Conclusion} 
This paper analyzes heterogeneity in  VRC models. This model generalizes ordinary RC models by including nonlinearities  in observed characteristics, which might stem, for instance, from measurement errors or control function residuals. 
A novel  estimator of the VRC density based on weighted sieve minimum distance is proposed. Under semiparametric restrictions on the random intercept, our estimator of the VRS density is not affected by the ill-posedness that is associated with the nonparametric estimation of the joint VRC density.
We establish novel inference results, such as uniform confidence bands,  to adress uncertainty in VRC density estimation which goes beyond what has been shown in ordinary RC models.  We find that finite sample estimation results are surprisingly stable when the sieve space is spanned by Hermite functions. This also advocates the use of the proposed methodology in the context of ordinary RC models. Finally, the methodology is applied to estimate the density of heterogeneous income elasticity of demand for housing, which is shown to be highly skewed. The proposed estimator can also be extended to include nonlinear index functions as in \cite{lewbel2017unobserved}. Yet the analysis of its asymptotic properties is left to future research.

\appendix
\section{Appendix}
Throughout the proofs, we will use $C > 0$ to denote a generic finite constant that may be different in different uses. 
Further, for ease of notation we write $\sum_j$ for $\sum_{j=1}^n$ and $\int$ for $\int_{\R^d}$ or $\int_{\R^{d-1}}$. Recall that $\|\cdot\|$ denotes the usual Euclidean norm, while for a matrix $A$, $\|A\|$ is the operator norm. 
 Recall the notation $P=\E[ p^K(X)p^K(X)']$ and $\widehat P= n^{-1}\sum_jp^K(X_j)p^K(X_j)'$. We use the notation $a_n\lesssim b_n$ to denote $a_n\leq C b_n$ for all $n\geq 1$.  
 
 \begin{proof}[\textsc{Proof of Lemma \ref{thm:ident:me}.}] 
The VRC model $(\ref{mod:gen}$--$\ref{mod:gen:rc})$ yields by Assumption \ref{ass:ind} (ii) the conditional moment restriction $\E[Y|X,W]= g_0(W)+\sum_{l=1}^{d-1} g_l(W)X_l$. 
 The varying coefficients functions $g_l$, $0\leq l\leq d-1$, are identified through this conditional moment restriction by  Assumption \ref{ass:ind} (iii) . 
Further, we obtain
\begin{align*}
\E[\exp(it(Y-g(S)))|X=x]=\E[\exp(it(A_0+A_1' X))|X=x].
\end{align*}
Since $X$ is independent of $A$ (see Assumption Assumption \ref{ass:ind} (i)) we can rewrite this equation using the notation of the Fourier transform for any $x$ in the support of $X$ as
\begin{align*}
h(x,t;g)=(\mathcal F f_A)(t, tx).
\end{align*} 
By the large support condition imposed on $X$ in Assumption \ref{ass:ind} (i) we can make use of Fourier inversion to obtain
\begin{align*}
f_A(a)&=\frac{1}{(2\pi)^{d}}\int \exp(-i a'u) (\mathcal F f_A)(u)du\\
&=\frac{1}{(2\pi)^{d}}\int \int |t|^{d-1}\exp\big(-i t(1,x')a\big) (\mathcal F f_A)(t,tx)dt\, dx\\
&=\frac{1}{(2\pi)^{d}} \int \int|t|^{d-1}\exp\big(-i t(1,x')a\big)h(x,t;g)dt\, dx,
\end{align*}
where the integral on the right hand side is finite due to Assumption \ref{ass:ind} (ii).
This shows identification of the RC density $f_A$ of $A$. Now identification of the VRC density of $B^w= \mathbf g(w)+A$ follows immediately by employing the relationship $f_{B}(b,w)= f_A(b-\mathbf g(w))$.
\end{proof}

\begin{proof}[\textsc{Proof of Lemma \ref{lem:vrs:ident}.}]
From the formula of the double series least squares estimator with $Q$ given in \eqref{def:mat:Q} and from basic properties of the Kronecker product we infer
\begin{align*}
Q&=(2\pi)^{d/2}\int\int\Big(\widetilde q^{K_0}(-t)\otimes \widetilde q^{K_1}(-tx)\Big)\Big(\widetilde q^{K_0}(t)\otimes \widetilde q^{K_1}(tx)\Big)'d\nu(t)dx\\
&=(2\pi)^{d/2}\int\int\widetilde q^{K_0}(-t)\widetilde q^{K_0}(t)'\otimes \widetilde q^{K_1}(-tx) \widetilde q^{K_1}(tx)'d\nu(t)dx\\
&=(2\pi)^{d/2}\int |t|^{1-d}\widetilde q^{K_0}(-t)\widetilde q^{K_0}(t)'d\nu(t)\otimes \int\widetilde q^{K_1}(-u) \widetilde q^{K_1}(u)'du\\
&=(2\pi)^{d/2}\int  |t|^{1-d} \widetilde q^{K_0}(-t) \widetilde q^{K_0}(t)'d\nu(t)\otimes \text{I}_{K_1}
\end{align*}
using that $q^{K_1}$ is a vector of Hermite functions which are orthonormal in $L^2(\mathbb R^{d-1})$. 
\end{proof}

\begin{proof}[\textsc{Proof of Proposition \ref{prop:eig}.}] Proof of (i). 
 Recall the definition $\widetilde\nu(t)=|t|^{1-d} \nu(t)$. For some constant $0<c<1$, for all $n\geq 1$, and any $a\in \mathbb R^K$ we have due to Parseval's Formula:
\begin{align*}
\| a\|^2
&= \int \int|a'\,q^K(t,u)|^2du \,dt\\
&= \int \int|a'\,(\mathcal F q^K)(t,u)|^2du \,dt\\
&=\int\int |a'\,(\mathcal F q^K)(t,tx)|^2\,|t|^{d-1} \1\{\widetilde\nu(t)\geq \tau_K\}dx\,dt \\
&\quad + \int \int|a'\,(\mathcal F q^K)(t,u)|^2 \1\{\widetilde\nu(t)< \tau_K\}du \,dt \\
&\leq  \tau_K^{-1} \int\int|a'\,(\mathcal F q^K)(t,tx)|^2\,|t|^{d-1}dx \, d\widetilde\nu (t) 
+  c \int \int|a'\,(\mathcal F q^K)(t,u)|^2du \,dt \\
&=  \tau_K^{-1} \int\int|a'\,(\mathcal F q^K)(t,tx)|^2dx\,d\nu (t) 
+  c\, \|a\|^2.
\end{align*}
Consequently, we obtain $\tau_K \text{I}_K\lesssim  Q$.\\

Proof of (ii). 
Using the series expansion of $f$ given by  $f=\sum_{k\geq 1} \langle f, q_k\rangle_{\R^d} q_k$ we obtain by the Cauchy-Schwarz inequality
\begin{align*}
 \int &\int\big|[\mathcal F(\Pi_K f-f)](t,tx)\big|^2dx\,d\nu(t)
 = \int \int \Big|\sum_{k\geq K} \langle f, q_k\rangle_{\R^d} (\mathcal F q_k)(t,tx)\Big|^2 dx\,d\nu(t)\\
&\leq \Big(\sum_{k\geq K} \langle f, q_k\rangle_{\R^d}^2\Big) \, \Big(\sum_{k\geq K}\int \int  |(\mathcal F q_k)(t,tx)|^2 dx\,d\nu(t)\Big)\\
&= \int(\Pi_K f-f)^2(b)\, db \,
\sum_{k\geq K}\int \int  |(\mathcal F q_k)(t,u)|^2 du\,d\widetilde\nu(t)\\
&= \int(\Pi_K f-f)^2(b)\, db \,
\sum_{k\geq K}\int  |(\mathcal F q_k)(t)|^2 d\widetilde\nu(t),
\end{align*}
by the unitary property of the Fourier transform, which completes the proof.
\end{proof}

For the following proofs we require additional notation. Introduce the vector 
\begin{align}\label{def:psi}
\psi^K(Y_j,S_j;\phi,t)=\big(\exp(it(Y_j-\phi(S_j))) - \exp(it (Y_j-g(S_j)))\big)\widetilde p^K(X_j)
\end{align}
with $k$--th entry denoted by $\psi_k(Y_j,S_j;\phi,t)$ and $\widetilde p^K(\cdot):=P^{-1/2}p^K(\cdot)$.
We also introduce the classes of function $\mathcal G=\{\phi=\beta'p_d^K:\|\phi-g\|_\infty\leq \sqrt{K/n}\}$,   $\mathcal F_k^R=\{\text{Re}(\psi_k(\cdot;\phi,\cdot)):\,\phi\in \mathcal G\}$,
and $\mathcal F_k^I=\{\text{Im}(\psi_k(\cdot;\phi,\cdot)):\,\phi\in \mathcal G\}$.
Further, $N_{[\,]}(\mathcal F,\|\cdot\|_{\nu,2}, \epsilon)$
denotes the $\|\cdot\|_{\nu,2}:=\sqrt{\int\|\cdot\|_{YS}^2\, d\nu} $ covering number with bracketing of a set of function $\mathcal F$. 
Define the envelope function $\Psi_k(\cdot)=\sup_{\phi\in\mathcal G}|\psi_k(\cdot;\phi,\cdot)|$, which satisfies
\begin{align}\label{bound:env}
\max_{1\leq k\leq K}\E\int|\Psi_k(Y, S;t)|^2 d\nu(t)&\leq \max_{1\leq k\leq K}\E\big[\sup_{\phi\in\mathcal G}\big|\big(\phi(S)-g(S)\big)\widetilde p_k(X)\big|^2\big]\int t^2d\nu(t)\nonumber \\
&\leq \sup_{\phi\in\mathcal G}\|\phi-g\|_\infty^2\max_{1\leq k\leq K}\E[\widetilde p_k^2(X)]\int t^2d\nu(t)\nonumber\\
&\lesssim K/n,
\end{align} 
using $\int t^2d\nu(t)\lesssim 1$ by Assumption \ref{Ass:bas} (ii). 
This upper bound is used in the following proofs. 
\begin{proof}[\textsc{Proof of Theorem \ref{thm:rate:fdb}.}]
The proof is based on the decomposition
\begin{align}\label{main:ineq}
\int\big|\widehat f_B(b,w)-f_B(b,w)\big|^2db
&\lesssim \int\big|\widehat f_A(a)-f_A(a)\big|^2da\nonumber\\
&\quad + \int \big|f_{A}\big(a-\widehat {\mathbf g}(w)\big)-f_{A}\big(a-\mathbf g(w)\big)\big|^2da\nonumber\\
&\lesssim \int\big|\widehat f_A(a)-\Pi_K f_A(a)\big|^2da+\int\big|\Pi_K f_A(a)-f_A(a)\big|^2da\nonumber\\
&\quad +\int \big|f_{A}\big(a-\widehat {\mathbf g}(w)\big)-f_{A}\big(a-\mathbf g(w)\big)\big|^2da.
\end{align}
Consider the first summand on the right hand side.  We have
\begin{align}\label{ineq:ill:posed}
\sup_{f\in\mathcal A_K}\set{\frac{\int f^2(b)db }{\int\int|( \mathcal{F} f)(t,tx)|^2d\nu(t)\,dx}}\lesssim \tau_K^{-1},
\end{align}
which is a consequence of the upper bounds imposed in Assumption \ref{Ass:bas}, that is, $\lambda_{\max}(\tau _K Q^{-1})\lesssim 1$ and $\lambda_{\max}\big(\int q^K(a)q^K(a)'da\big)\lesssim 1$, since for any $f(\cdot)=\beta'q^K(\cdot)$ it holds
\begin{align*}
\int\int|( \mathcal{F} f)(t,tx)|^2d\nu(t)\,dx
&=\beta'\, \int\int(\mathcal F q^K)(t,tx) (\mathcal F q^K)(t,tx)' dx\,d\nu (t)\, \beta\\
&\gtrsim\tau_K\|\beta\|^2\\
&\gtrsim\tau_K\int f^2(b)db.
\end{align*}
The definition of the estimator $\widehat f_A$ implies
\begin{multline*}
\int\int \Big|\widehat h(x,t;\widehat g) -(\mathcal{F}
\widehat f_A)(t, tx)\Big|^2d\nu(t)\,dx\\
\leq\int\int\Big|\widehat h(x,t;\widehat g) -(\mathcal{F}
\Pi_K f_A)(t, tx)\Big|^2d\nu(t)\,dx.
\end{multline*}
This inequality, the upper bound \eqref{ineq:ill:posed}, and the definition of the estimator $\widehat h$ yield
\begin{align*}
&\tau_K\int\big|\widehat f_A(a)-\Pi_K f_A(a)\big|^2da\lesssim
\int\int\Big|(\mathcal{F}
\widehat f_A)(t, tx) -(\mathcal{F}
\Pi_K f_A)(t, tx)\Big|^2d\nu(t)\,dx\\
&\lesssim \int\int\Big|\widehat h(x,t;\widehat g) -(\mathcal{F}
\Pi_K f_A)(t, tx)\Big|^2d\nu(t)\,dx\\
&\lesssim \underbrace{\int\int\Big|(\mathcal{F}f_A -\mathcal{F}
\Pi_K f_A)(t, tx)\Big|^2d\nu(t)\,dx}_{I}\\
&+\underbrace{\int\int\Big|p^K(x)'\widehat P^{-1}\frac{1}{n}\sum_j\exp\big(it(Y_j-g(S_j))\big)p^K(X_j)-p^K(x)'\gamma(t)\Big|^2d\nu(t)\,dx}_{II}\\
&+\underbrace{\int\int\Big|p^K(x)'\widehat P^{-1}\frac{1}{n}\sum_j\exp(it Y_j)p^K(X_j)\big(\exp\big(it \,\widehat g(S_j)\big) -\exp\big(it \, g(S_j)\big)\big)\Big|^2d\nu(t)\,dx}_{III}\\
&+\underbrace{\int\int\Big|p^K(x)'\gamma(t) - h(x,t;g)\Big|^2d\nu(t)\,dx}_{IV}.
\end{align*}
Due to the sieve approximation error of $f_A$ in Assumption \ref{Ass:lin} (i) it holds
\begin{align*}
I=\|\mathcal{F}
\Pi_K f_A-\mathcal{F}f_A\|_\nu^2=O\Big(\tau_K\|
\Pi_K f_A-f_A\|_{\R^d}^2\Big)=\tau_K K^{-2\zeta/d}.
\end{align*}
In the following, we make use of $\|\widehat P^{-1} - P^{-1}\| = O_p(\lambda_K^{-1}\sqrt{K\log(n)/n})$, see \cite[Lemma 6.2]{belloni2012} or \cite[Lemma 2.1]{chen2015}. 
By Assumption \ref{Ass:bas} (iii), the eigenvalues of $\int p^K(x)p^K(x)'dx$ are bounded from above and  thus
\begin{align*}
II
&\leq \Big\|\int p^K(x)p^K(x)'dx\Big\|\,\big\|\widehat P^{-1}P^{1/2}\big\|^2\\
&\quad\times\int\Big\|n^{-1}\sum_j\Big(\exp\big(it(Y_j-g(S_j))\big)-p^K(X_j)'\gamma(t)\Big)\widetilde p^K(X_j)\Big\|^2d\nu(t)\\
&\lesssim \lambda_K^{-1}\int\Big\|\E\Big[\big(h(X,t;g)-p^K(X)'\gamma(t)\big)\widetilde p^K(X)\Big]\Big\|^2d\nu(t)+O_p(K/(n\lambda_K))\\
&\lesssim \lambda_K^{-1}\int\E\big|h(X,t;g)-p^K(X)'\gamma(t)\big|^2d\nu(t)+O_p(K/(n\lambda_K))\\
&=O_p\big( \lambda_K^{-1}K^{-2\rho/(d-1)}+K/(n\lambda_K)\big),
\end{align*}
making use of the sieve approximation condition imposed on Assumption \ref{Ass:lin} (i). 
Consider $III$. Due to Assumption \ref{Ass:lin} (iii) we may assume $\widehat g\in\mathcal G$, which implies
\begin{align*}
III
&\lesssim \int\int\sup_{\phi\in \mathcal G}\Big|p^K(x)'\widehat P^{-1} P^{1/2} n^{-1}\sum_j\psi^K(Y_j,S_j;\phi,t)\Big|^2d\nu(t)dx
\end{align*}
by using the definition of $\psi^K$ as given in \eqref{def:psi}. 
Applying Theorem 2.14.5 of \cite{Vaart2000} together with the upper bound for the envelope function \eqref{bound:env} yields
\begin{align*}
&\sum_{k=1}^K\int\E\sup_{\phi\in \mathcal G}\Big|n^{-1}\sum_j\psi_k(Y_j,S_j;\phi,t)-\E\psi_k(Y,S;\phi,t)\Big|^2d\nu(t)\\
&\lesssim \frac{1}{n}\sum_{k=1}^K\Big(\int\E\sup_{\phi\in \mathcal G}\Big|n^{-1/2}\sum_j\psi_k(Y_j,S_j;\phi,t)-\E\psi_k(Y,S;\phi,t)\Big|d\nu(t)\sqrt{\frac{K}{n}}+\sqrt{\frac{K}{n}}\Big)^{2}\\
&\lesssim \frac{1}{n}\sum_{k=1}^K\Big(\int_0^1\sqrt{1+\log N_{[\,]}(\mathcal F_k^R, \|\cdot\|_{\nu,2}, \epsilon)}d\epsilon\,\sqrt{\frac{K}{n}}\\
&\qquad\qquad\qquad\qquad+\int_0^1\sqrt{1+\log N_{[\,]}(\mathcal F_k^I, \|\cdot\|_{\nu,2}, \epsilon)}d\epsilon\,\sqrt{\frac{K}{n}}+\sqrt{\frac{K}{n}}\Big)^{2}
\end{align*}
where the second upper bound is due to the last display of Theorem 2.14.2 of \cite{Vaart2000}.
The upper bound of the envelope function in inequality \eqref{bound:env} (implying local uniform $\|\cdot\|_{\nu,2}$ continuity of $\psi_k(\cdot;\phi,\cdot)$ with respect to $\phi\in\mathcal G$)
 together with Lemma 4.2 of \cite{Chen07} yields
\begin{align}\label{ineq:brack}
\log N_{ [\,]}(\mathcal F_k^R , \|\cdot\|_{\nu,2}, \epsilon)\leq \log N\big(\mathcal G, \|\cdot\|_S, \epsilon/4\big).
\end{align}
Using Assumption \ref{Ass:lin} (iv) we  thus obtain the rate $K/n$. 
Further, from the inequality $\lambda_{\max}(P^{-1/2}\int p^K(x)p^K(x)'dxP^{-1/2})\lesssim \lambda_K^{-1}$ we infer
\begin{align*}
\int\int&\sup_{\phi\in \mathcal G}\Big|p^K(x)'P^{-1/2}\E\psi^K(Y,S;\phi,t)\Big|^2d\nu(t)dx\\
&\lesssim\lambda_K^{-1}\int\sup_{\phi\in \mathcal G}\E\Big|\exp(it(Y-\phi(S))) - \exp(it (Y-g(S)))\Big|^2d\nu(t)\\
&\lesssim
\lambda_K^{-1} \sup_{\phi\in \mathcal G}\|\phi-g\|_S^2 \int t^2d\nu(t) \\
&=o_p\big(K/(n\lambda_K)\big),
\end{align*}
using $\E |\exp(it(Y-\phi(S))) - \exp(it (Y-g(S)))|^2\leq t^2\|\phi-g\|_S^2$ and $\int t^2d\nu(t)\lesssim 1$ by Assumption \ref{Ass:bas} (ii). 
Moreover, by Assumption \ref{Ass:lin} we have the sieve approximation bias
\begin{align*}
IV=\|h-\gamma'p^K\|_\nu^2
\lesssim K^{-2\rho/(d-1)}.
\end{align*}
In what follows, $Df$ denotes the Jacobian matrix of a function $f$. 
Finally, we consider 
\begin{align*}
\int \big|f_{A}&(b-\widehat {\mathbf g}(w))-f_{A}(b-{\mathbf g}(w))\big|^2db\\
&=\int\Big\|\int_0^1 D f_{A}\Big(b-u \,\widehat {\mathbf g}(w)+(u-1)\, \mathbf g(w)\Big)du\Big\|^2db\, \big\|\widehat{\mathbf g}(w)-\mathbf g(w)\big\|^2
\end{align*}
Continuity of $Df_A$ and consistency of $\widehat {\mathbf g}$ implies 
\begin{align*}
\int\Big\|\int_0^1 D f_{A}\Big(b-u \,\widehat {\mathbf g}(w)+(u-1)\, \mathbf g(w)\Big)du\Big\|^2db&=\int\|D f_{A}(a)\|^2da+o_p(1)\\
&\lesssim 1+o_p(1).
\end{align*}
 The result follows due to the rate restriction imposed in Assumption \ref{Ass:lin} (iii).
\end{proof}

\begin{proof}[\textsc{Proof of Theorem \ref{thm:rate:fdb1}.}]
We make use of the notation $\beta=\int q^{K}(a) f_{A_1}(a)da$ where  $q^K(t, u)=q^{K_0}(t)\otimes q^{K_1}(u)$.  In light of the main decomposition \eqref{main:ineq} in the proof of Theorem \ref{thm:rate:fdb} it is sufficient to consider
\begin{align*}
&\int|\widehat f_{A_1}(a)-f_{A_1}(a)|^2da\\
&\lesssim
\Big\|\int\int b_{K_0}(t)\widetilde q^{K_1}(-tx)\widehat
h(x,t; \widehat g) d\nu(t)\,dx- \int \beta'(q^{K_0}(a_0)\otimes\text{I}_{K_1})da_0\Big\|^2\\
&\quad +\int\Big|\int \beta'(q^{K_0}(a_0)\otimes q^{K_1}(a_1))-f_{A}(a_0,a_1)da_0\Big|^2da_1\\
&\lesssim
\Big\|Q^{-1}\int\int \big(\widetilde q^{K_0}(-t)\otimes\widetilde q^{K_1}(-tx)\big)\widehat
h(x,t; \widehat g) d\nu(t)\,dx-\beta\Big\|^2\\
&\quad +\int\big|\beta' (q^{K_0}(a_0)\otimes q^{K_1}(a_1))-f_{A}(a_0,a_1)\big|^2d(a_0,a_1),
\end{align*}
using the notation $b_{K_0}(t)=\int q^{K_0}(a)'Q_0^{-1}\widetilde q^{K_0}(-t) da$ and Lemma \ref{lem:vrs:ident}, that is, $Q^{-1}=Q_0^{-1}\otimes\text{I}_{K_1}$. 
Since $\int|\beta' q^{K}(a)-f_{A}(a)|^2da=O(K_1^{-2\zeta/(d-1)})$ due to condition $K_0=O(1)$, we only need to bound the first term on the right hand side. We further observe
\begin{align*}
\beta&=\int q^{K}(a) f_{A}(a)\,d(a)\\
&=Q^{-1}\int\int\widetilde q^{K}(-t,-tx)
(\mathcal F f_A)(t,tx)d\nu(t)\,dx\\
&=Q^{-1} \int\int\widetilde q^{K}(-t,-tx)
h(x,t; g) d\nu(t)\,dx.
\end{align*}
Therefore, it is sufficient to show
\begin{align*}
\Big\|Q_1^{-1} \int\int \widetilde q^{K}(-t,-tx)\big(\widehat h(x,t; \widehat g)-h(x,t; g)\big) d\nu(t)\,dx\Big\|^2=O_p\big(K_1/(n\lambda_{K_1})\big).
\end{align*}
This upper bound follows immediately from the proof of Theorem \ref{thm:rate:fdb} by using that $K_1$ is the dimension of basis functions used for the estimator $\widehat h$ and that $K_0=O(1)$,  which completes the proof. 
\end{proof}

\begin{proof}[\textsc{Proof of Theorem \ref{thm:inference}.}]
To simplify notation, let $\mathbf s:=Q^{-1/2}\ell\big(q^K(\cdot - \mathbf g(w))\big)$.
 Making use of Assumption \ref{A:inf} (i), we obtain the following lower bound for the sieve variance
  \begin{align*}
\textsl{v}_K(w)=\mathbf s'\,\Sigma\, \mathbf s\gtrsim \lambda_K^{-1}\|\mathbf s\|^2
 \end{align*}
which is used throughout this proof. 
 The proof is based on  the decomposition
 \begin{align*}
\ell\big(\widehat f_B(\cdot,w)\big)&-\ell\big(f_B(\cdot,w)\big)
=\underbrace{\ell\big(\widehat f_{A}(\cdot- {\widehat {\mathbf g}}(w))\big)-\ell\big(\widehat f_{A}(\cdot-{{\mathbf g}}(w))\big)}_{I}\\
    &+\underbrace{\mathbf s'Q^{-1/2}\int\int (\mathcal F q^K)(-t,-tx)
\Big(\widehat h(x,t; \widehat g)-\widehat h(x,t; g)\Big) d\nu(t) dx}_{II}\\
 &+ \underbrace{\mathbf s'Q^{-1/2}\int\int (\mathcal F q^K)(-t, -tx)\Big(\widehat
h(x,t; g)-p^K(x)'\gamma(t)\Big) d\nu(t)dx}_{III}\\
&+ \underbrace{\mathbf s'Q^{-1/2}\int\int (\mathcal F q^K)(-t, -tx)\Big(p^K(x)'\gamma(t)- h(x,t; g)\Big) d\nu(t)dx}_{IV}\\
  &+\underbrace{\mathbf s'Q^{-1/2}\int\int (\mathcal F q^K)(-t, -tx)
(\mathcal F f_A)(t,tx)  d\nu(t)dx-\ell\big(f_B(\cdot,w)\big)}_{V},
 \end{align*}
 where we evaluate each summand separately in the following.
By  Assumption \ref{A:inf} (iv) the basis functions $q_l$ are continuously differentiable and thus
\begin{align*}
\sqrt{n}\, I
&\leq \sqrt{n}\, \Big\|\ell\Big(\int_0^1 D\widehat f_{A}\big(\cdot-u \, {\widehat {\mathbf g}}(w)+(u-1)\,  {\mathbf g}(w)\big)du\Big)\Big\|\,\big\|{\widehat {\mathbf g}}(w)-{\mathbf g}(w)\big\|\\
&=o_p\Big(\sqrt{\textsl{v}_K(w)}\Big)
\end{align*}
using that $n \|\widehat{\mathbf g}(w)-\mathbf g(w)\|^2=o_p\big(\textsl{v}_K(w)\big)$, consistency of $\widehat f_A$, and Assumption \ref{Ass:lin} (ii). 
To bound $II$, make use of the definition of $\psi^K$ as given in \eqref{def:psi} to obtain
\begin{align*}
&\sqrt{n}\, II\\
&=\mathbf s'\int R(t) \widehat P^{-1}\frac{1}{\sqrt n}\sum_j\exp\big(itY_j)\Big(\exp\big(-it\widehat g(S_j)\big)-\exp\big(-it g(S_j)\big)\Big)p^K(X_j) d\nu(t)\\
 &\lesssim \int\sup_{\phi\in \mathcal G}\Big|\mathbf s' R(t) \widehat P^{-1} P^{1/2}\frac{1}{\sqrt n}\sum_j\psi^K(Y_j,S_j;\phi,t)\Big|d\nu(t)\\
 &\lesssim\Big(\int\Big\|\mathbf s' R(t)P^{-1/2}\Big\|^2d\nu(t)\Big)^{1/2}\\
 &\qquad\times\Big(\int\sup_{\phi\in \mathcal G}\Big\|\frac{1}{\sqrt n}\sum_j\Big(\psi^K(Y_j,S_j;\phi,t)-\E\psi^K(Y,S;\phi,t)\Big)\Big\|^2d\nu(t)\Big)^{1/2}\\
 &\quad +\sqrt{n}\int\sup_{\phi\in \mathcal G}\Big|\mathbf s' R(t)P^{-1/2} \E\psi^K(Y,S;\phi,t)\Big|d\nu(t)+o_p(1).
\end{align*}
Assumption $\int \| R(t)\|^2d\nu(t)=O(1)$ implies
\begin{align*}
\int\Big\|\mathbf s' R(t)P^{-1/2} \Big\|^2d\nu(t) &\leq \|\mathbf s\|^2 \int \| R(t)\|^2d\nu(t)\|P^{-1/2}\|^2\\
&\lesssim \textsl{v}_K(w).
\end{align*}
Making use of the upper bound \eqref{bound:env} for the envelope function of $\mathcal F^R_k$ and $\mathcal F^I_k$ and applying Theorem 2.14.5 of \cite{Vaart2000} yields
\begin{align*}
&\sum_{l=1}^K\int\E\sup_{\phi\in \mathcal G}\Big|n^{-1/2}\sum_j\psi_{l}(Y_j,S_j;\phi,t)-\E\psi_{l}(Y,S;\phi,t)\Big|^2d\nu(t)\\
&\lesssim \sum_{l=1}^K\Big(\int\E\sup_{\phi\in \mathcal G}\Big|n^{-1/2}\sum_j\psi_{l}(Y_j,S_j;\phi,t)-\E\psi_{l}(Y,S;\phi,t)\Big|d\nu(t)\sqrt{K/n}\\
&\qquad\qquad\qquad\qquad\qquad\qquad+\sqrt{K/n}\Big)^{2}\\
&\lesssim \frac{K}{n}\sum_{k=1}^K\Big(\int_0^1\sqrt{1+\log N_{[\,]}(\mathcal F_k^R, \|\cdot\|_{\nu,2}, \epsilon)}d\epsilon\\
&\qquad\qquad\qquad\qquad\qquad\qquad+\int_0^1\sqrt{1+\log N_{[\,]}(\mathcal F_k^I, \|\cdot\|_{\nu,2}, \epsilon)}d\epsilon+1\Big)^{2}\\
&\lesssim C_n^2 K^2/n
\end{align*}
 due to inequality \eqref{ineq:brack} and Assumption \ref{Ass:lin} (iv). Further, we have
\begin{align*}
\int\E\sup_{\phi\in \mathcal G}&\Big|\mathbf s'R(t)P^{-1/2}\E\psi^K(Y,S;\phi,t)\Big|d\nu(t)\\
&= \sup_{\phi\in \mathcal G}\|\phi-g\|_S \|\mathbf s\| \Big(\int \| R(t)\|^2d\nu(t)\Big)^{1/2}\|P^{-1/2}\|\\
&\lesssim  \sqrt{\textsl{v}_K(w)K/n}.
\end{align*}
Consequently, the rate restriction imposed on $K$ implies $\sqrt n II=o(\sqrt{\textsl{v}_K(w)})$. 
Consider $III$. Note that $\widehat P\gamma(t)=n^{-1}\sum_j h(X_j,t,g)p^K(X_j)$ and consequently we obtain by the definition of $R(t)$ that
\begin{align*}
\sqrt{n}\, III
&=\mathbf s'\int  R(t)\,  \widehat P^{-1}\frac{1}{\sqrt n}\sum_{j}\Big(\exp(it(Y_{j}-g(S_j)))p^K(X_{j}) -\widehat P\gamma(t)\Big)d\nu(t)\\
&\quad+o_p(1)\\
&=\mathbf s'\int  R(t)\,  P^{-1}\frac{1}{\sqrt n}\sum_{j}\rho_j(t)p^K(X_{j}) d\nu(t)+o_p(1),
\end{align*}
where $\rho_j(t)=\exp(it(Y_{j}-g(S_j)))-h(X_{j},t,g)$.
Consequently, we obtain
\begin{align*}
 \sqrt {n/\textsl{v}_K(w)}\,III
  &=\sum_j\underbrace{\big(n\, \textsl{v}_K(w)\big)^{-1/2} \mathbf s'  \int R(t)\rho_j(t)d\nu(t) P^{-1} p^K(X_{j})}_{\zeta_j}+o_p(1).
 \end{align*}
 We show $\sum_j\zeta_j\stackrel{d}{\rightarrow}\mathcal N(0,1)$ by the Lindeberg-Feller theorem. 
We see below that $\zeta_{j}$, $1\leq j \leq n$ satisfy Lindeberg's condition. It holds $\E[\zeta_{j}]=0$ and  $n\E[\zeta_{j}^2]=1$.
Using the lower bound for the sieve variance, for $\varepsilon>0$ we observe
\begin{align*}
 \sum_j& \E [\zeta_{j}^2\1\{|\zeta_{j}|>\varepsilon\}]
\leq n\varepsilon^2\E|\zeta_{j}/\varepsilon|^4\\
&\leq \textsl{v}_K(w)^{-2}n^{-1} \varepsilon^{-2}\|\mathbf s\|^4
\E\big\|\int R(t)\rho(t)d\nu(t) P^{-1} p^K(X)\big\|^4\\
&\lesssim n^{-1} \lambda_K^2\Big(\int\|R(t)\|^2d\nu(t)\Big)^2 \E\Big[\Big(\int|\rho(t)|^2d\nu(t)\Big)^2\big\|P^{-1} p^K(X)\big\|^2\Big]\\
&\qquad\times\sup_x\big\|P^{-1} p^K(x)\big\|^2\\
&\lesssim n^{-1} \lambda_K^{-1} K^2\\
&=o(1),
\end{align*}
by the rate condition $K^2=o(n\lambda_K)$ imposed in Assumption \ref{A:inf}. 
Consider $IV$. Using the notation $\check q^K(t, tx)= Q^{-1/2}(\mathcal F q^K)(t, tx)$ and linearity of the Fourier transform  we obtain
\begin{align*}
IV&=\mathbf s'Q^{-1/2}\int\int (\mathcal F q^K)(t, tx)\Big(p^K(x)'\gamma(t)- h(x,t; g)\Big) d\nu(t)dx\\
&=\sum_{l=1}^K \ell\Big((\mathcal F^{-1}\check q_l)(\cdot -\mathbf g(w))\Big)\langle\check q_l,\gamma'p^K- h\rangle_\nu \\
&\lesssim \ell\big([\mathcal F^{-1}(\gamma'p^K- h)](\cdot -\mathbf g(w))\big)\\
&=o\big(\sqrt{\textsl{v}_K(w)/n}\big),
\end{align*}
by Assumption \ref{A:inf} (iii). 
Consider $V$. By Assumption \ref{A:inf} (iv) the sieve space $\mathcal A_K$ is linear and thus, it holds
\begin{align*}
V&=\mathbf s'Q^{-1/2}\int\int(\mathcal F q^K)(t, tx)
(\mathcal F f_A)(t,tx) d\nu(t)\,dx-\ell\big(f_B(\cdot,w)\big)\\
&=\mathbf s'\underbrace{Q^{-1/2}\int \int (\mathcal F q^K)(t, tx)
[\mathcal F (f_A-\Pi_K f_A)](t,tx) d\nu(t)\,dx}_{=0}\\
&\quad+\ell\big(\Pi_K f_A(\cdot -\mathbf g(w))-f_A(\cdot -\mathbf g(w))\big)\\
&=o\big(\sqrt{\textsl{v}_K(w)/n}\big),
\end{align*}
by Assumption \ref{A:inf} (iii). Due to Lemma \ref{lem_cons_var}, equation \eqref{pw:cov:est},  the asymptotic distribution result remains valid as $\textsl{v}_K(w)$ is replaced by $\widehat{\textsl{v}}_K(w)$, which completes the proof.
\end{proof}

\begin{proof}[{\textsc{Proof of Theorem \ref{thm:bands}.}}]
Due to the \cite[Proof of Theorem 4.1]{ChenChristensen2015} it is sufficient to show
\begin{align*}
\left|\sqrt{n/\widehat{\textsl{v}}_K(b,w)}\big(  \widehat f_B(b,w) - f_B(b,w)\big) - \mathbb{Z}(b,w)\right|=o_p(r_n)
\end{align*}
since then the result follows by the anti-concentration inequality of \cite[Theorem 2.1]{chernozhukov2014}. 
Along the proof the inequality $  \textsl{v}_K(b,w)\gtrsim \lambda_K^{-1}\|Q^{-1/2}q^K(b-\mathbf g(w))\|^2$. Let $Z^n = \{(Y_1,X_1,W_1), \ldots, (Y_n,X_n,W_n)\}$.
We may assume that $\mathbf g(w)$ is known. Otherwise, consider some subset $\mathcal C'\subset\mathcal C$ where by employing consistency of  the vector valued function $\widehat{\mathbf g}$ we may assume that  $\set{b-\widehat{\mathbf g}(w): b\in\mathcal C}\subset \set{b-\mathbf g(w): b\in\mathcal C'}$. For simplicity of notation we assume in the following that $\mathbf g(w)=0$.
\\
\textbf{Step 1.} We start by showing that $\sqrt{n/\widehat{\textsl{v}}_K(b,w)}\big( \widehat f_B(b,w) - f_B(b,w)\big)$ can be uniformly approximated by the process
\begin{align*}
\widehat{\mathbb{Z}}(b,w) = \frac{q^K(b)' Q^{-1/2}}{\sqrt{\textsl{v}_K(b,w)}}\left(\frac{1}{\sqrt{n}}\sum_j\int  R(t) P^{-1} p^K(X_j) \rho_j(t)d\nu(t)\right),
\end{align*}
using the notation $\rho_j(t)=\exp(it(Y_j-g(S_j)))-h(X_j,t,g)$.
We observe
{\small \begin{align*}
  &\left|\sqrt{n/\widehat{\textsl{v}}_K(b,w)}\big(  \widehat f_B(b,w) - f_B(b,w)\big) - \widehat{\mathbb{Z}}(b,w)\right| \\
  &\leq 
  \underbrace{\left|\frac{\sqrt{n}q^K(b)'Q^{-1}}{\sqrt{ {\textsl{v}}_K(b,w)}}\int\int (\mathcal F q^K)(t, tx)\big(\widehat
h(x,t, g)-p^K(x)'\gamma(t)\big) d\nu(t)dx - \widehat{\mathbb{Z}}(b,w)\right|}_{I(b)}\\
&  + \left|\sqrt{\frac{ {\textsl{v}}_K(b,w)}{\widehat{\textsl{v}}_K(b,w)}} - 1\right| \Bigg(\underbrace{
\left|\frac{\sqrt{n}q^K(b)'Q^{-1}}{\sqrt{ {\textsl{v}}_K(b,w)}}\int\int (\mathcal F q^K)(t, tx)\big(\widehat
h(x,t, g)-p^K(x)'\gamma(t)\big) d\nu(t)dx \right|}_{II(b)}\\
&\qquad+  \underbrace{
\left|\frac{\sqrt{n}q^K (b)'Q^{-1}}{\sqrt{ {\textsl{v}}_K(b,w)}}\int\int (\mathcal F q^K)(t,tx)
\big(\widehat h(x,t, \widehat g)-\widehat h(x,t, g)\big) d\nu(t) dx\right|
 }_{III(b)}\\
 &\qquad+\underbrace{\left|\frac{\sqrt{n}q^K (b)'Q^{-1}}{\sqrt{ {\textsl{v}}_K(b,w)}}\int\int (\mathcal F q^K)(t, tx)\big(p^K(x)'\gamma(t) -  h(x,t, g)\big) d\nu(t)dx\right|}_{IV(b)}\\
  &\qquad+\underbrace{\left|\frac{\sqrt{n}}{\sqrt{ {\textsl{v}}_K(b,w)}}\Big(q^K (b)'Q^{-1}\int\int (\mathcal F q^K)(t, tx)h(x,t, g)d\nu(t)dx-f_B(b,w)\Big)\right|}_{V(b)}\Bigg).
\end{align*}}
We have $\| P^{-1/2}\widehat P P^{-1/2} - \text{I}_K\| = O_p(\sqrt{K\log(n)/(n\lambda_K)})$, see \cite[Lemma 2.1]{chen2015}. Further, we obtain
\begin{align*}
\sup_{b\in\mathcal{C}}I(b)&=\sup_{b\in\mathcal{C}}\left|\frac{q^K(b)'Q^{-1/2}}{\sqrt{\textsl{v}_K(b,w)}}\frac{1}{\sqrt{n}}\sum_j\int  R(t)\rho_j(t)d\nu(t) \big(\widehat P^{-1}-P^{-1}\big) p^K(X_j)\right|\\
&\leq \lambda_K\Big\|\frac{1}{\sqrt{n}}\sum_j \int  R(t)\rho_j(t)d\nu(t)\widehat P^{-1} P^{1/2}\big(P^{-1/2}\widehat P P^{-1/2}-\text{I}_K\big)\widetilde p^K(X_j)\Big\| \\
 &=O_p(\lambda_K^{-1} K\sqrt{\log(n)/n}).
\end{align*}
Define the process $\mathbb Z(b,w)=q^K(b)'Q^{-1/2}\mathcal Z/\sqrt{\textsl{v}_K(b,w)}$. We have
\begin{align*}
 \sup_{b\in\mathcal{C}} II(b) & \leq   \sup_{b\in\mathcal{C}}I(b)+ \sup_{b\in\mathcal{C}}|\widehat{\mathbb Z}(b,w)|\\
  & =  O_p\Big(\lambda_K^{-1}K\sqrt{\log(n)/n}\Big) + \sup_{b\in\mathcal{C}}\big|\widehat{\mathbb Z}(b,w) - \mathbb Z(b,w)\big| + \sup_{b\in\mathcal{C}}|\mathbb Z(b,w)|\\
  & =   O_p\Big(\lambda_K^{-1}K\sqrt{\log(n)/n}\Big) + o_p(r_n) + \sup_{b\in\mathcal{C}}|\mathbb Z(b,w)|\\
  & =  o_p(r_n) + O_p(c_n).
\end{align*}
where the third bound is due to step 2 below and the last equality is because of the condition $K^{5/2}=o(\lambda_K^2 r_n^3\sqrt{n})$ and by \cite[Lemma G.5]{ChenChristensen2015}, which is valid under our assumptions and which implies $\sup_{b\in\mathcal{C}}|\mathbb{Z}(b,w)| = O_p(c_n)$.
Consider  $III(b)$. Using the definition of $\psi^K$ as given in \eqref{def:psi} we obtain
\begin{align*}
  \sup_{b\in\mathcal{C}}&\frac{\sqrt{n}\left|III(b) \right|}{\sqrt{{\textsl{v}}_K(b,w)}}\\
&=\sup_{b\in\mathcal{C}}\frac{\|q^K (b)'Q^{-1/2}\|}{\sqrt{{\textsl{v}}_K(b,w)}}\int\sup_{\phi\in \mathcal G}\Big\|R(t) P^{-1} \frac{1}{\sqrt n}\sum_j\psi^K(Y_j,S_j;\phi,t)\Big\|d\nu(t)\\
&\quad+o_p(\lambda_K^{-1}K\sqrt{\log(n)/n})\\
&=o_p(\lambda_K^{-1}K\sqrt{\log(n)/n})
\end{align*}
following the proof of Theorem \ref{thm:inference}. Moreover, we observe
\begin{align*}
&\sup_{b\in\mathcal{C}}\frac{\sqrt{n}\left|IV(b) \right|}{\sqrt{ {\textsl{v}}_K(b,w)}}\\
 &\leq 
\sup_{b\in\mathcal{C}}\frac{\sqrt{n}\|q^K (b)'Q^{-1/2}\|}{\sqrt{ {\textsl{v}}_K(b,w)}} \Big\|Q^{-1/2}\int\int (\mathcal F q^K)(t, tx)\big(p^K(x)'\gamma(t) - h(x,t; g)\big) d\nu(t)dx\Big\|\\
&\lesssim \|\gamma'p^K-
h\|_\nu\\
&\lesssim  O(K^{-\rho/(d-1)})
\end{align*}
by Assumption \ref{Ass:lin} (i). 
 For the last summand we note
\begin{align*}
 \sup_{b\in\mathcal{C}}\frac{\sqrt{n}\left|V(b) \right|}{\sqrt{ {\textsl{v}}_K(b,w)}} &\leq \sup_{b\in\mathcal{C}}\frac{\sqrt{n}}{\sqrt{{\textsl{v}}_K(b,w)}}\big|\Pi_K f_B(b,w)-f_B(b,w)\big|.
\end{align*}
Consequently, Lemma \ref{lem_cons_var}, i.e.,  $\sup_{b\in\mathcal{C}}\left|\sqrt{{\textsl{v}}_K(b,w)/\widehat{\textsl{v}}_K(b,w)} - 1\right| = O_p(\sqrt{\zeta_n})$ and
the rate requirement in Assumption \ref{Ass:uniform} (iii) imply
\begin{align*}
\left|\sqrt{n/\widehat{\textsl{v}}_K(b,w)}\big(  \widehat f_B(b,w) - f_B(b,w)\big) - \widehat{\mathbb{Z}}(b,w)\right|=o_p(r_n).
\end{align*}

\noindent\textbf{Step 2.} 
Assumption $\int \| R(t)\|^2d\nu(t)=O(1)$ implies
\begin{align*}
\sum_j\E&\left\|\frac{1}{\sqrt n}\int  R(t) P^{-1} p^K(X_j) \rho_j(t) d\nu(t)\right\|^3\\
&\lesssim\frac{1}{\sqrt n}\Big(\int \| R(t)\|^2 d\nu(t)\Big)^{3/2}\, \E \| P^{-1} p^K(X)\|^3\\
&\lesssim \frac{K^{3/2}}{\sqrt n\lambda_K^{2}}.
\end{align*}
 Further,  recall that $r_n$ is a sequence satisfying
 \begin{align*}
 \frac{K^{5/2}}{\lambda_K^2 r_n^3\sqrt{n}}=o(1).
 \end{align*}
Hence we may apply Yurinskii's coupling (\cite[Theorem 10]{2002Pollard}) and consequently,  there exists a sequence of $\mathcal N(0,\Sigma)$ distributed random vectors $\mathcal Z$ such that
\begin{equation}\label{proof:Th:3:4:step:2_1}
  \left\|\frac{1}{\sqrt{n}}\int  R(t) P^{-1} p^K(X_j) \rho_j(t) d\nu(t) - \mathcal Z\right\| = o_p(r_n).
\end{equation}
Recall the definition $\mathbb Z(b,w)=q^K(b)'Q^{-1/2}\mathcal Z/\sqrt{\textsl{v}_K(b,w)}$, which is a centered Gaussian process with covariance function 
\begin{align*}
\E[\mathbb Z(b_1,w)\mathbb Z(b_2,w)] = q^K(b_1)'Q^{-1/2}\, \Sigma\, Q^{-1/2} q^K(b_2)\Big/\sqrt{ \textsl{v}_K(b_1,w)\textsl{v}_K(b_2,w)}.
\end{align*} 
Hence, by equation \eqref{proof:Th:3:4:step:2_1} we have
\begin{equation}\label{eq_step_2_UCB_exogenous}
  \sup_{b\in\mathcal{C}}\left|\widehat{\mathbb Z}(b,w) - \mathbb Z(b,w)\right| = o_p(r_n).
\end{equation}

\noindent\textbf{Step 3.} In this step we approximate the bootstrap process by a Gaussian process. Under the bootstrap distribution $\mathbb{P}^*$ each term $\int  R(t) P^{-1} p^K(X_j) \widehat\rho_j(t) d\nu(t)\varepsilon_j$ has mean zero for all $1\leq j\leq n$. Moreover, we have
\begin{equation*}
\frac{1}{n}\sum_j\E\left[\left.\int\int R(s) \widehat P^{-1} p^K(X_j)\widehat \rho_j(s)\varepsilon_j^2\widehat \rho_j(-t) p^K(X_j)'\widehat P^{-1}R(-t)'d\nu(s)d\nu(t)\right|Z^n\right]
= \widehat \Sigma.
\end{equation*}
Since $\E[|\varepsilon_j|^3|Z^n] < \infty$ uniformly in $j$, we have
\begin{align*}
\sum_j\E&\left[\left.\left\|\frac{1}{\sqrt{n}}\int  R(t) P^{-1} p^K(X_j) \widehat\rho_j(t) d\nu(t)\varepsilon_j\right\|^3\right|Z^n\right]\\
 &\lesssim \frac{1}{\sqrt{n}}\Big(\int \| R(t)\|^2 d\nu(t)\Big)^{3/2}\,\E\|P^{-1}p^K(X)\|^2\sup_x\|p^K(x)\| \\
&= O\left(\frac{K^{3/2}}{\sqrt{n}\lambda_K^2}\right).
\end{align*}
Again using \cite[Theorem 10]{2002Pollard}, conditional on the data $Z^n$, implies existence of  a  $\mathcal N(0,\widehat \Sigma)$ distributed random vectors $\mathcal Z^*$ such that
\begin{equation*}
  \left\|\frac{1}{\sqrt{n}}\sum_i \int  R(t) P^{-1} p^K(X_j) \widehat\rho_j(t) d\nu(t) - \mathcal Z^*\right\| = o_{p^*}(r_n)
\end{equation*}
wpa1. Therefore, 
\begin{align*}
  \sup_{a\in\mathcal{C}}\left|\mathbb{Z}^*(b,w) - \frac{q^K(b)'Q^{-1/2}}{\sqrt{\widehat{\textsl{v}}_K(b,w)}}\mathcal Z^*\right| = o_{p^*}(r_n)
\end{align*}
wpa1. Define a centered Gaussian process $\widetilde{\mathbb{Z}}(\cdot,w)$ under $\mathbb{P}^*$ as
\begin{align*}
  \widetilde{\mathbb Z}(b,w) = q^K(b)'Q^{-1/2}\Sigma^{1/2}\widehat \Sigma^{-1/2}\mathcal Z^*/\sqrt{ {\textsl{v}}_K(b,w)}
\end{align*}
which has the same covariance function as $\mathbb{Z}(b,w)$. By Lemma \ref{lem:gaussian:rate} below we have:
\begin{align*}
  \sup_{b\in\mathcal{C}}\left|\frac{q^K(b)'Q^{-1/2}}{\sqrt{\widehat{\textsl{v}}_K(b,w)}}\mathcal Z^* - \widetilde{\mathbb Z}(b,w)\right| = o_{p^*}(r_n)
\end{align*}
wpa1. This and the previous rate of convergence imply that
\begin{equation*}
  \sup_{b\in\mathcal{C}}\left|\mathbb Z^*(b,w) - \widetilde{\mathbb Z}(b,w)\right| = o_{p^*}(r_n)
\end{equation*}
wpa1, which completes the proof. 
\end{proof}

\section{Technical Assertions}

For the next result and the proof of it, recall the notation $\mathbf s=Q^{-1/2}\ell\big(q^K(\cdot - \mathbf g(w))\big)$ and let $\widehat{\mathbf s}:=Q^{-1/2} \ell\big(q^K(\cdot -\widehat{\mathbf g}(w))\big)$. 
\begin{lemma}\label{lem_cons_var}
 Let Assumptions \ref{A:inf}--\ref{Ass:uniform} be satisfied. Then,
\begin{align}
\Big|\sqrt{\frac{\widehat{\textsl{v}}_K(w)}{\textsl{v}_K(w)}} -1 \Big|&=o_p(1),\label{pw:cov:est}\\
\sup_{b\in\mathcal C}\Big|\sqrt{\frac{\widehat{\textsl{v}}_K(b,w)}{\textsl{v}_K(b,w)}} -1 \Big|&=O_p\Big(  \sqrt{(n\lambda_K\tau_K)^{-1/2}K^{1/2}
\log(n)+  K^{1/2-\rho/(d-1)}}\Big).\label{un:cov:est}
\end{align}
\end{lemma}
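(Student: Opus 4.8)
The plan is to deduce both displays from a single algebraic identity and then estimate each piece. By continuity of $x\mapsto\sqrt x$ at $x=1$, \eqref{pw:cov:est} reduces to $\widehat{\textsl{v}}_K(w)/\textsl{v}_K(w)\to_p1$, and \eqref{un:cov:est} to a quantified bound on $\sup_{b\in\mathcal C}|\widehat{\textsl{v}}_K(b,w)/\textsl{v}_K(b,w)-1|$. Write, with $\mathbf s=Q^{-1/2}\ell\big(q^K(\cdot-\mathbf g(w))\big)$ and $\widehat{\mathbf s}=Q^{-1/2}\ell\big(q^K(\cdot-\widehat{\mathbf g}(w))\big)$,
\[
\widehat{\textsl{v}}_K(w)-\textsl{v}_K(w)=(\widehat{\mathbf s}-\mathbf s)'\widehat\Sigma\,\widehat{\mathbf s}+\mathbf s'\widehat\Sigma\,(\widehat{\mathbf s}-\mathbf s)+\mathbf s'(\widehat\Sigma-\Sigma)\mathbf s .
\]
Using the lower bound $\textsl{v}_K(w)\gtrsim\lambda_K^{-1}\|\mathbf s\|^2$ implied by Assumption \ref{A:inf} $(i)$ (resp.\ $\textsl{v}_K(b,w)\gtrsim\lambda_K^{-1}\|q^K(b-\mathbf g(w))'Q^{-1/2}\|^2$ and Assumption \ref{Ass:uniform} $(iv)$ for the uniform version), the statement follows once we show (a) $\|\widehat{\mathbf s}-\mathbf s\|$ is negligible relative to $\|\mathbf s\|$ after the appropriate normalisation, together with a crude bound $\lambda_{\max}(\widehat\Sigma)=O_p\big(\lambda_{\max}(\Sigma)\big)$, and (b) $\lambda_K\|\widehat\Sigma-\Sigma\|=o_p(1)$, with the explicit rate retained in the uniform case.

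\textbf{The $\widehat{\mathbf s}$-term.} Since the $q_l$ are continuously differentiable (Assumption \ref{A:inf} $(iv)$), a mean-value expansion gives
\[
\widehat{\mathbf s}-\mathbf s=Q^{-1/2}\ell\Big(\int_0^1(Dq^K)\big(\cdot-\mathbf g(w)+u(\mathbf g(w)-\widehat{\mathbf g}(w))\big)\,du\Big)\big(\mathbf g(w)-\widehat{\mathbf g}(w)\big).
\]
Boundedness of the Hermite derivatives on a neighbourhood of $\mathbf g(w)$ and consistency of $\widehat{\mathbf g}(w)$ control the first factor, and then $\sqrt n\|\widehat{\mathbf g}(w)-\mathbf g(w)\|=o_p\big(\sqrt{\textsl{v}_K(w)}\big)$ from Assumption \ref{A:inf} $(iii)$ together with $K=o(n\lambda_K)$ (Assumption \ref{A:inf} $(ii)$) yield $\|\widehat{\mathbf s}-\mathbf s\|=o_p(\|\mathbf s\|)$, hence negligibility of the two cross terms. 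For \eqref{un:cov:est} one instead invokes Assumption \ref{Ass:uniform} $(iv)$ and the rate $\|\widehat{\mathbf g}(w)-\mathbf g(w)\|^2=O_p\big(K/(n\tau_K\lambda_K)\big)$ of Assumption \ref{Ass:lin} $(iii)$; this is the source of the $\tau_K$-dependent term in the stated uniform rate.

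\textbf{The $\widehat\Sigma$-term (main obstacle).} Decompose $\widehat\Sigma-\Sigma$ into the effects of (i) replacing $\widehat P^{-1}$ by $P^{-1}$, (ii) replacing $\widehat\rho_j$ by $\rho_j$, and (iii) replacing the empirical average $n^{-1}\sum_j$ by the population expectation. For (i) use $\|\widehat P^{-1}-P^{-1}\|=O_p\big(\lambda_K^{-1}\sqrt{K\log(n)/n}\big)$ (as in the proof of Theorem \ref{thm:rate:fdb}), $\int\|R(t)\|^2d\nu(t)=O(1)$ (Assumption \ref{A:inf} $(v)$) and $|\rho_j(t)|\le2$. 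For (ii), $\widehat\rho_j(t)-\rho_j(t)$ splits into $\exp(it(Y_j-\widehat g(S_j)))-\exp(it(Y_j-g(S_j)))$, bounded by $|t|\,\|\widehat g-g\|_\infty$ with $\|\widehat g-g\|_\infty^2=O_p(K/n)$ (Assumption \ref{Ass:lin} $(iii)$), and $\widehat h(X_j,t;\widehat g)-h(X_j,t;g)$, whose $\|\cdot\|_\nu$-rate is precisely the quantity bounded by $II+III+IV$ inside the proof of Theorem \ref{thm:rate:fdb} and is $O_p(\zeta_n)$ under Assumption \ref{A:inf:var}; combined with $\|P^{-1}\|\lesssim\lambda_K^{-1}$ this produces the $K^{1/2-\rho/(d-1)}$-type contribution. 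For (iii) apply a matrix law of large numbers (matrix Bernstein) to the i.i.d.\ centred matrices $\iint R(s)P^{-1}\big(p^K(X_j)\rho_j(s)\rho_j(-t)p^K(X_j)'-\E[p^K(X)\rho(s)\rho(-t)p^K(X)']\big)P^{-1}R(-t)'d\nu(s)d\nu(t)$, whose operator norm is $\lesssim\lambda_K^{-2}\sup_x\|p^K(x)\|^2\lesssim\lambda_K^{-2}K$, giving a fluctuation of order $\lambda_K^{-2}K\sqrt{\log(n)/n}$ up to logarithmic factors. Summing the three pieces and multiplying by $\lambda_K$ gives $\lambda_K\|\widehat\Sigma-\Sigma\|$ of the order of $\zeta_n$ (up to $\log(n)$ factors), which is $o_p(1)$ by Assumption \ref{A:inf:var} and proves \eqref{pw:cov:est}; keeping the explicit $\log(n)$ factors from (i) and (iii) yields exactly the bound in \eqref{un:cov:est}, the uniformity over $b\in\mathcal C$ being automatic since $\widehat\Sigma-\Sigma$ does not depend on $b$ and Assumption \ref{Ass:uniform} $(iv)$ converts the $b$-dependence into an $O(1)$ factor. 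The delicate point throughout is the bookkeeping of the double $d\nu(s)\,d\nu(t)$ integration against the $\lambda_K^{-1}$-blow-ups of $P^{-1}$ and $\widehat P^{-1}$, and checking that Assumptions \ref{A:inf} $(ii)$ and \ref{A:inf:var} are exactly strong enough to absorb every remainder.
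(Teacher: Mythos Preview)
Your strategy is the paper's: the same algebraic decomposition of $\widehat{\textsl v}_K(w)-\textsl v_K(w)$, the same mean-value argument for $\widehat{\mathbf s}-\mathbf s$, and the same three-way split of $\widehat\Sigma-\Sigma$ into the $\widehat P$ vs.\ $P$ effect, the $\widehat\rho_j$ vs.\ $\rho_j$ effect, and the sampling fluctuation. The one genuine technical difference is in how the fluctuation piece is handled. The paper does \emph{not} bound the operator norm $\|\widehat\Sigma-\Sigma\|$; instead it exploits that $\Sigma$ and its empirical counterpart are Gram matrices $\Sigma=\E[vv']$, $\widetilde\Sigma=n^{-1}\sum_j v_jv_j'$ with $v_j=\int R(t)P^{-1}p^K(X_j)\rho_j(t)\,d\nu(t)$, so that $\mathbf s'(\widetilde\Sigma-\Sigma)\mathbf s$ is a centred i.i.d.\ average of the \emph{scalars} $|\mathbf s'v_j|^2$, and a plain second-moment bound gives $\mathbf s'(\widetilde\Sigma-\Sigma)\mathbf s=O_p\big(\|\mathbf s\|^2\sqrt{K/(n\lambda_K)}\big)$. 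Likewise, the $\widehat\rho_j$ vs.\ $\rho_j$ correction is handled by writing $\mathbf s'(\widehat\Sigma-\widetilde\Sigma)\mathbf s$ as a difference of two empirical squares and applying Cauchy--Schwarz, which is where the $K^{1/2-\rho/(d-1)}$ term (rather than its square) arises.

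Your operator-norm route via matrix Bernstein is valid for \eqref{pw:cov:est}, but it costs you: after normalising by $\textsl v_K(w)\gtrsim\lambda_K^{-1}\|\mathbf s\|^2$, matrix Bernstein leaves a term of order $\lambda_K^{-1}K\sqrt{\log(n)/n}$ rather than the paper's $\sqrt{K\lambda_K/n}$. Under Assumption~\ref{Ass:uniform}~(iii) this is still $o(1)$, so \eqref{pw:cov:est} goes through. But your claim that retaining the $\log(n)$ factors ``yields exactly the bound in \eqref{un:cov:est}'' is too quick: $\lambda_K^{-1}K\sqrt{\log(n)/n}$ does not reduce to $\sqrt{(n\lambda_K\tau_K)^{-1/2}K^{1/2}\log(n)}$ without further assumptions, and the $\tau_K$ in the stated rate comes solely from the $\widehat{\mathbf s}$-term, not from the covariance estimation. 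If you want the precise rate in \eqref{un:cov:est}, switch to the quadratic-form argument; if you only need $o_p(1)$ (which is all that Theorems~\ref{thm:inference} and~\ref{thm:bands} actually use), your sketch is fine.
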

\begin{proof}
Proof of \eqref{pw:cov:est}. We make use of the decomposition 
\begin{align}\label{dec:cov:est}
\widehat{\textsl{v}}_K(w) &-\textsl{v}_K(w)= 
\mathbf s'\,\big(\widehat\Sigma-\Sigma\big)\mathbf s
+\big(\widehat{\mathbf s}-\mathbf s\big)'\,\widehat\Sigma\,\big(\widehat{\mathbf s}+\mathbf s\big).
\end{align}
We make use of the notation 
\begin{align*}
\widetilde \Sigma=\int_\R\int_\R  R(s) P^{-1} \frac{1}{n}\sum_{j=1}^n p^K(X_j)\rho_j(s)\rho_j(-t) p^K(X_j)'P^{-1}R(-t)'d\nu(s)d\nu(t)
\end{align*}
 and we may replace $\widehat P$ by $P$ in the definition of $\widehat \Sigma$. Also recall the definition $\widehat \rho_j(t)=\exp(it(Y_j-\widehat g(S_j)))-\widehat h(X_j,t,\widehat g)$. 
 We make use of of the following decomposition
 \begin{align*}
 &\widehat \rho_j(s)\overline{\widehat \rho_j(t)}-\rho_j(s)\overline{\rho_j(t)}\\
& = \big(\widehat \rho_j(s)-\rho_j(s)\big)\big(\overline{\widehat \rho_j(t)}-\overline{\rho_j(t)}\big)
 +\rho_j(s)\big(\overline{\widehat \rho_j(t)}-\overline{\rho_j(t)}\big)+\overline{\rho_j(t)}\big(\widehat \rho_j(s)-\rho_j(s)\big)
 \end{align*}
 and hence calculate
{\small \begin{align*}
&|\mathbf s'(\widehat\Sigma-\widetilde \Sigma)\mathbf s|\\
&\leq \Big|\mathbf s'\int\int  R(s) P^{-1}\frac{1}{n}\sum_jp^K(X_j)\Big(\widehat \rho_j(s)\overline{\widehat \rho_j(t)}-\rho_j(s)\overline{\rho_j(t)}\Big)p^K(X_j)' P^{-1}\overline{R(t)}' d\nu(s)d\nu(t)\mathbf s\Big|\\
&\leq \underbrace{n^{-1}\sum_j \Big|\mathbf s'\int  R(t) P^{-1}\big(\widehat \rho_j(t)-\rho_j(t)\big) d\nu(t) p^K(X_j) \Big|^2}_{I}\\
&+ 2\underbrace{\Big|\mathbf s'\int\int  R(s) P^{-1}\frac{1}{n}\sum_jp^K(X_j)\Big(\widehat \rho_j(s)-\rho_j(s)\Big)\overline{\rho_j(t)}p^K(X_j)' P^{-1}\overline{R(t)}' d\nu(s)d\nu(t)\mathbf s\Big|}_{II}.
\end{align*}}
For the first summand we evaluate using the definition of $\psi^K$ in equation \eqref{def:psi} and the Cauchy-Schwarz inequality that 
\begin{align*}
I&\lesssim
\int \E\big\|\mathbf s'R(t) P^{-1}p^K(X)\big\|^2d\nu(t)\,\|\widehat g- g\|_\infty^2\int t^2d\nu(t)\\
&+\int\Big\|\mathbf s'R(t) P^{-1/2}\Big\|^2d\nu(t)\\
&\quad\times\Big(\int\sup_{\phi\in \mathcal G}\Big\|n^{-1}\sum_j\psi^K(Y_j,S_j;\phi,t)-\E\psi^K(Y,S;\phi,t)\Big\|^2d\nu(t)\\
&\qquad\qquad+\int \sup_{\phi\in \mathcal G}\big\|\E\psi^K(Y,S;\phi,t)\big\|^2d\nu(t)\\
&\qquad\qquad+\sup_x\|p^K(x)\|^2\, n^{-1}\sum_j\int\big|p^K(X_j)'\gamma(t)- h(X_j,t, g)\big|^2d\nu(t)\Big)\\
&+o_p(K/(n\lambda_K))\\
&=O_p\Big(\|\mathbf s\|^2\lambda_K^{-1}\big(n^{-1}K+ n^{-1}K\log(n)
+ K^{1-2\rho/(d-1)}\big)\Big),
\end{align*}
following the  proof of Theorem \ref{thm:inference} and using that 
\begin{align*}
\int\E\big\|\mathbf s'R(t) P^{-1}p^K(X)\big\|^2d\nu(t)
&=\mathbf s'\int R(t)P^{-1}R(-t)d\nu(t) \,\mathbf s\\
&\leq\lambda_K^{-1} \|\mathbf s\|^2.
\end{align*}
Again following the  proof of Theorem \ref{thm:inference} and making use of the Cauchy-Schwarz inequality yields
\begin{align*}
II &\leq\sqrt{I}\times\sqrt{n^{-1}\sum_j \Big|\mathbf s'\int  R(t) P^{-1}\rho_j(t)d\nu(t) p^K(X_j) d\nu(t)\Big|^2}\\
&\leq \sqrt{I}\times\sqrt{n^{-1}\sum_j \int \Big|\mathbf s' R(t) P^{-1} p^K(X_j) \Big|^2d\nu(t)}\\
&\leq \sqrt{I}\times O_p\big(\lambda_K^{-1/2} \|\mathbf s\|\big)\\
&=O_p\Big( \|\mathbf s\|^2\lambda_K^{-1} (n^{-1/2}K^{1/2}\log(n)
+  K^{1/2-\rho/(d-1)})\Big)
\end{align*}
using the upper bound of $I$. 
Finally, we obtain
\begin{align*}
\mathbf s'(\widetilde\Sigma-\Sigma)\mathbf s=O_p\Big( \|\mathbf s\|^2\sqrt{K/(n \lambda_K)}\Big)
\end{align*}
which is due to the following calculation
\begin{align*}
&\E\|\widetilde\Sigma-\Sigma\|^2\\
&= \E\Big\|\int\int  R(s) P^{-1}\frac{1}{n}\sum_j\Big(p^K(X_j)\rho_j(s)\overline{\rho_j(t)}p^K(X_j)'- \E\big[p^K(X)\rho(s)\overline{\rho(t)}p^K(X)'\big]\Big)\\
&\qquad\times P^{-1}\overline{R(t)}' d\nu(s)d\nu(t)\Big\|^2\\
& \leq n^{-1}\E\Big\|\int R(t) P^{-1}p^K(X)\rho(t)d\nu(t)\Big\|^4\\
& \leq n^{-1}\int\big\| R(t) P^{-1/2}\big\|^4d\nu(t)\E\big\|P^{-1/2}p^K(X)\|^2\\
& \lesssim n^{-1}K/\lambda_K.
\end{align*}
For the second summand on the right hand side of \eqref{dec:cov:est} we observe
\begin{align*}
\big(\widehat{\mathbf s}-\mathbf s\big)'\,\widehat\Sigma\,\big(\widehat{\mathbf s}+\mathbf s\big)&
=\big(\widehat{\mathbf s}-\mathbf s\big)'\,\widehat\Sigma\,\big(\widehat{\mathbf s}-\mathbf s\big)
+2\big(\widehat{\mathbf s}-\mathbf s\big)'\,\widehat\Sigma\,\mathbf s
\end{align*}
In light of the upper bounds for $I$ and $II$ it is sufficient to bound $\big(\widehat{\mathbf s}-\mathbf s\big)'\,\Sigma\,\big(\widehat{\mathbf s}-\mathbf s\big)$  and $\big(\widehat{\mathbf s}-\mathbf s\big)'\,\Sigma\,\mathbf s$. 
Note that
\begin{align*}
\big(\widehat{\mathbf s}-\mathbf s\big)'\,&\Sigma\,\big(\widehat{\mathbf s}-\mathbf s\big)\leq\|\widehat{\mathbf s}-\mathbf s\|^2\\
&\leq \tau_K^{-1}\|\ell\big(q^K(\cdot -\widehat{\mathbf g}(w))-q^K(\cdot -\mathbf g(w))\big)\|^2\\
&\leq \tau_K^{-1}\Big\|\ell\Big(\int_0^1Dq^K(\cdot -u\widehat{\mathbf g}(w)+(u-1)\mathbf g(w))du\Big)\Big\|^2 \big\|\widehat{\mathbf g}(w)-\mathbf g(w)\big\|^2\\
&=O_p\big(K/(n\tau_K)\big).
\end{align*}
and similarly,
\begin{align*}
\big(\widehat{\mathbf s}-\mathbf s\big)'\,\Sigma\,\mathbf s&\leq\|\widehat{\mathbf s}-\mathbf s\| \|\mathbf s\|\\
&=O_p\big(\|\mathbf s\|\sqrt{K/(n\tau_K)}\big).
\end{align*}
Consequently, the previous inequalities together with the bound $\|\mathbf s\|^2/\textsl{v}_K(w)\leq \lambda_K$ (due to Assumption \ref{A:inf} (i)) and $\lambda_K^{-1/2}\lesssim \sqrt{\textsl{v}_K(w)}$ imply
\begin{align*}
\frac{\widehat{\textsl{v}}_K(w)}{\textsl{v}_K(w)} -1 
=O_p\Big(n^{-1/2}K^{1/2}\log(n)
+  K^{1/2-\rho/(d-1)}+\sqrt{K/(n\tau_K\lambda_K)}\Big),
\end{align*}
which, due to the rate condition imposed in Assumption \ref{A:inf:var} implies bound \eqref{pw:cov:est}. 
The result \eqref{un:cov:est} follows analogously.
\end{proof}

\begin{lemma}\label{lem:gaussian:rate}
  Let Assumptions \ref{A:inf}--\ref{Ass:uniform} be satisfied.
  Then,
  \begin{align*}
   \sup_{b\in\mathcal{C}}\left|\frac{q^K(b)'Q^{-1/2}}{\sqrt{\widehat{\textsl{v}}_K(b,w)}}\mathcal Z^* - \widetilde{\mathbb Z}(b,w)\right|  = o_{p^*}(r_n)
  \end{align*}
  with probability approaching one.
\end{lemma}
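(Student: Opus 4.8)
The plan is to argue conditionally on the data $Z^n$ and on an event $E_n$ with $\mathbb P(E_n)\to 1$ on which the conclusions of Lemma \ref{lem_cons_var} hold and on which $\|\widehat\Sigma-\Sigma\|=o_p(\lambda_{\min}(\Sigma))$, so that $c\,\Sigma\preceq\widehat\Sigma\preceq C\,\Sigma$. After the reduction $\mathbf g(w)=0$ already made in the proof of Theorem \ref{thm:bands}, and replacing $\widehat{\mathbf s}_b$ by $\mathbf s_b$ up to negligible error as in the proof of Lemma \ref{lem_cons_var}, I write $\mathbf s_b:=Q^{-1/2}q^K(b)$ and $\mathbf r_b:=\mathbf s_b/\sqrt{\textsl{v}_K(b,w)}$, so that $\textsl{v}_K(b,w)=\mathbf s_b'\Sigma\mathbf s_b$, $\widehat{\textsl{v}}_K(b,w)=\mathbf s_b'\widehat\Sigma\mathbf s_b$ and $\mathbf r_b'\Sigma\mathbf r_b=1$. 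Substituting $\mathcal Z^*=\widehat\Sigma^{1/2}\xi$ with $\xi\sim\mathcal N(0,\text{I}_K)$ conditionally on $Z^n$, and using $(\text{I}_K-\Sigma^{1/2}\widehat\Sigma^{-1/2})\mathcal Z^*=(\widehat\Sigma^{1/2}-\Sigma^{1/2})\xi$, I would split
\begin{align*}
\frac{q^K(b)'Q^{-1/2}}{\sqrt{\widehat{\textsl{v}}_K(b,w)}}\mathcal Z^*-\widetilde{\mathbb Z}(b,w)
&=\Big(\sqrt{\tfrac{\textsl{v}_K(b,w)}{\widehat{\textsl{v}}_K(b,w)}}-1\Big)\mathbf r_b'\mathcal Z^*
+\mathbf r_b'\big(\widehat\Sigma^{1/2}-\Sigma^{1/2}\big)\xi,
\end{align*}
and bound the supremum over $b\in\mathcal C$ of each summand by a Gaussian maximal inequality.

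For the first summand, the scalar prefactor is $O_p(\varsigma_n)$ uniformly over $\mathcal C$, where $\varsigma_n=o(1)$ denotes the rate on the right-hand side of \eqref{un:cov:est}. Conditionally on $Z^n$, $b\mapsto\mathbf r_b'\mathcal Z^*$ is a centered Gaussian process with pointwise variance $\widehat{\textsl{v}}_K(b,w)/\textsl{v}_K(b,w)=1+O_p(\varsigma_n)$ uniformly and increment variance $(\mathbf r_{b_1}-\mathbf r_{b_2})'\widehat\Sigma(\mathbf r_{b_1}-\mathbf r_{b_2})\le C\,\Delta_w(b_1,b_2)^2$ on $E_n$; therefore, by Dudley's entropy integral and the Borell--TIS concentration inequality (see \textit{e.g.} \cite[Appendix A.2]{Vaart2000}), together with the entropy bound of Assumption \ref{Ass:uniform} $(ii)$ transferred from $\Delta_w$ to the intrinsic metric via $c\,\Sigma\preceq\widehat\Sigma\preceq C\,\Sigma$, one gets $\sup_{b\in\mathcal C}|\mathbf r_b'\mathcal Z^*|=O_{p^*}(c_n)$ wpa1. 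Hence the first summand is $O_{p^*}(\varsigma_n c_n)$ wpa1, which is $o_{p^*}(r_n)$ by the requirement $\sqrt{\zeta_n}\,c_n=o(r_n)$ in Assumption \ref{Ass:uniform} $(iii)$, since $\varsigma_n\lesssim\sqrt{\zeta_n}$ up to a logarithmic factor.

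For the second summand, put $\delta_n:=\|(\widehat\Sigma^{1/2}-\Sigma^{1/2})\Sigma^{-1/2}\|$. Writing $\mathbf r_b=\Sigma^{-1/2}\Sigma^{1/2}\mathbf r_b$ gives $\big\|(\widehat\Sigma^{1/2}-\Sigma^{1/2})\mathbf r_b\big\|\le\delta_n\|\Sigma^{1/2}\mathbf r_b\|=\delta_n$ and, for increments, $\big\|(\widehat\Sigma^{1/2}-\Sigma^{1/2})(\mathbf r_{b_1}-\mathbf r_{b_2})\big\|\le\delta_n\|\Sigma^{1/2}(\mathbf r_{b_1}-\mathbf r_{b_2})\|=\delta_n\Delta_w(b_1,b_2)$, so the same maximal inequality yields $\sup_{b\in\mathcal C}|\mathbf r_b'(\widehat\Sigma^{1/2}-\Sigma^{1/2})\xi|=O_{p^*}(\delta_n c_n)$ wpa1. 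It remains to show $\delta_n=O_p(\varsigma_n)$, and \emph{this is the main obstacle}: using the matrix square-root perturbation bound $\|\widehat\Sigma^{1/2}-\Sigma^{1/2}\|\le\|\widehat\Sigma-\Sigma\|\big/\big(\lambda_{\min}(\widehat\Sigma)^{1/2}+\lambda_{\min}(\Sigma)^{1/2}\big)$ together with $\|\Sigma^{-1/2}\|=\lambda_{\min}(\Sigma)^{-1/2}$, one obtains $\delta_n\lesssim\|\widehat\Sigma-\Sigma\|/\lambda_{\min}(\Sigma)$ on $E_n$; since $\lambda_{\min}(\Sigma)\gtrsim\lambda_K^{-1}$ by Assumption \ref{A:inf} $(i)$, it suffices to establish the \emph{operator-norm} bound $\|\widehat\Sigma-\Sigma\|=O_p(\lambda_K^{-1}\varsigma_n^{2})$. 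This is obtained by rerunning the estimates in the proof of Lemma \ref{lem_cons_var} with a generic unit vector in place of $\mathbf s$: there the bounds on $I$, $II$ and on $\widetilde\Sigma-\Sigma$ factor as $\|\mathbf s\|^2\lambda_K^{-1}$ times a rate independent of $\mathbf s$, and $\widetilde\Sigma-\Sigma$ is already controlled in operator norm, so no new idea is needed — only a uniform-over-directions bookkeeping. This gives $\delta_n=O_p(\varsigma_n^{2})=o_p(\varsigma_n)$, hence the second summand is $o_{p^*}(\varsigma_n c_n)=o_{p^*}(r_n)$, and combining it with the bound on the first summand completes the proof.
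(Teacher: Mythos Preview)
Your proposal is correct and follows essentially the same route as the paper. Both arguments split the difference into a ``variance ratio'' piece and a ``square-root perturbation'' piece; your identity $(\text{I}_K-\Sigma^{1/2}\widehat\Sigma^{-1/2})\mathcal Z^*=(\widehat\Sigma^{1/2}-\Sigma^{1/2})\xi$ makes explicit what the paper leaves implicit, and your control of $\sup_{b}|\mathbf r_b'(\widehat\Sigma^{1/2}-\Sigma^{1/2})\xi|$ via the semimetric bound $\widetilde\Delta_w\le\delta_n\Delta_w$ plus Dudley is exactly the paper's treatment of its term $I$ (invoking the analogue of \cite[Lemma G.5]{ChenChristensen2015}). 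The only cosmetic difference is that the paper's term $II$ is the supremum of $\widetilde{\mathbb Z}(\cdot,w)$, which has covariance $\Sigma$ and hence the \emph{same} distribution as $\mathbb Z(\cdot,w)$, whereas your first summand uses $\mathbf r_b'\mathcal Z^*$ with covariance $\widehat\Sigma$ and therefore needs the additional sandwich $c\Sigma\preceq\widehat\Sigma\preceq C\Sigma$ to transfer the entropy bound --- a harmless extra step. Your observation that the quadratic-form estimates in the proof of Lemma \ref{lem_cons_var} are uniform in the direction $\mathbf s$ (the random quantities there do not depend on $\mathbf s$, only the factor $\|\mathbf s\|^2$ does) and hence deliver the required operator-norm bound $\|\widehat\Sigma-\Sigma\|=O_p(\lambda_K^{-1}\varsigma_n^2)$ is also what the paper relies on when it writes ``due to the proof of Lemma \ref{lem_cons_var}'' for $\|\Sigma^{-1/2}\widehat\Sigma^{1/2}-\text{I}_K\|=O_p(\sqrt{\eta_n})$.
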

\begin{proof}
  The proof of this lemma follows the proof of \cite[Lemma G.6]{ChenChristensen2015} and so we provide only the main parts where the two proofs differ. We make the decomposition
  \begin{align*}
    \sup_{b\in\mathcal{C}}&\left|\frac{q^K(b)'Q^{-1/2}}{\sqrt{\widehat{\textsl{v}}_K(b,w)}}\mathcal Z^* - \widetilde{\mathbb Z}(b,w)\right| \\
   & \leq \underbrace{\sup_{b\in\mathcal{C}}\left|\frac{q^K(b)'Q^{-1/2}\left(\text{I}_K - \Sigma^{1/2}\widehat \Sigma^{-1/2}\right)}{\sqrt{ {\textsl{v}}_K(b,w)}}\mathcal Z^*\right|\sup_{b\in\mathcal{C}}\sqrt{\frac{\textsl{v}_K(b,w)}{\widehat{\textsl{v}}_K(b,w)}}}_{I}\\
     &\quad + \underbrace{\sup_{b\in\mathcal{C}}\left|\sqrt{\frac{{\textsl{v}}_K(b,w)}{\widehat{\textsl{v}}_K(b,w)}} - 1\right| \sup_{b\in\mathcal{C}}\left|\frac{q^K(b)'Q^{-1/2}}{\sqrt{{\textsl{v}}_K(b,w)}}\Sigma^{1/2}\widehat \Sigma^{-1/2}\mathcal Z^*\right|}_{II}.
  \end{align*}
 Let $\widetilde \Delta_w$ denote the standard deviation semimetric on $\mathcal{C}$ associated with the Gaussian process (under $\mathbb{P}^*$) ${\textsl{v}}_K(b,w)^{-1/2}q^K(b)'Q^{-1/2}\big(\text{I}_K - \Sigma^{1/2}\widehat \Sigma^{-1/2}\big)\mathcal Z^*$ and defined as
 \begin{align*}
 \widetilde \Delta_w(b_1,b_2)^2 = \E^*\left[\left(\left(\frac{q^K(b_1)}{\sqrt{ \textsl{v}_K(b_1,w)}} - \frac{q^K(b_2)}{\sqrt{\textsl{v}_K(b_2,w)}}\right)'Q^{-1/2}\big(\text{I}_K  - \Sigma^{1/2}\widehat \Sigma^{-1/2}\big)\mathcal Z^*\right)^2\right].
  \end{align*}
  We observe $\widetilde \Delta_w(b_1,b_2) \leq \Delta_w(b_1,b_2)\|\Sigma^{-1/2}\widehat \Sigma^{1/2} - \text{I}_K\|$ and
  \begin{align*}
    \|\Sigma^{-1/2}\widehat \Sigma^{1/2} - \text{I}_K \| &\leq \|\widehat{\Sigma}^{1/2} - \Sigma^{1/2}\|\,  \|{\Sigma}^{-1/2}\|\\
    &\leq (\lambda_{\min}^{1/2}(\Sigma) + \lambda_{\min}^{1/2}(\widehat{\Sigma}))^{-1}\|\widehat{\Sigma} - \Sigma\|\\
   & = O_p(\sqrt{\eta_n}).
   \end{align*}
where the last bound is due to the proof of Lemma \ref{lem_cons_var}.
  Thus, following line by line the proof of \cite[Lemma G.6]{ChenChristensen2015} we obtain that $I  = o_{p^*}(r_n)$ under Assumption \ref{Ass:uniform}.
  Next, let us consider term $II$ which is the supremum of a Gaussian process with the same distribution (under $\mathbb{P}^*$) as $\mathbb{Z}(b,w)$. Therefore, by applying Lemma \ref{lem_cons_var} and \cite[Lemma G.5]{ChenChristensen2015} we obtain $II  = o_{p^*}(r_n)$.
\end{proof}

\bibliography{BiB}

\begin{thebibliography}{40}
\providecommand{\natexlab}[1]{#1}
\providecommand{\url}[1]{\texttt{#1}}
\expandafter\ifx\csname urlstyle\endcsname\relax
  \providecommand{\doi}[1]{doi: #1}\else
  \providecommand{\doi}{doi: \begingroup \urlstyle{rm}\Url}\fi

\bibitem[Ai and Chen(2003)]{AC03econometrica}
C.~Ai and X.~Chen.
\newblock Efficient estimation of models with conditional moment restrictions
  containing unknown functions.
\newblock \emph{Econometrica}, 71:\penalty0 1795--1843, 2003.

\bibitem[Banks et~al.(1997)Banks, Blundell, and Lewbel]{banks1997}
J.~Banks, R.~Blundell, and A.~Lewbel.
\newblock Quadratic engel curves and consumer demand.
\newblock \emph{Review of Economics and Statistics}, 79\penalty0 (4):\penalty0
  527--539, 1997.

\bibitem[Belloni et~al.(2015)Belloni, Chernozhukov, Chetverikov, and
  Kato]{belloni2012}
A.~Belloni, V.~Chernozhukov, D.~Chetverikov, and K.~Kato.
\newblock Some new asymptotic theory for least squares series: Pointwise and
  uniform results.
\newblock \emph{Journal of Econometrics}, 186\penalty0 (2):\penalty0 345--366,
  2015.

\bibitem[Ben-Moshe et~al.(2017)Ben-Moshe, D'Haultf{\oe}uille, and
  Lewbel]{ben2017identification}
D.~Ben-Moshe, X.~D'Haultf{\oe}uille, and A.~Lewbel.
\newblock Identification of additive and polynomial models of mismeasured
  regressors without instruments.
\newblock \emph{Journal of Econometrics}, 200\penalty0 (2):\penalty0 207--222,
  2017.

\bibitem[Beran(1993)]{beran1993}
R.~Beran.
\newblock Semiparametric random coefficient regression models.
\newblock \emph{Annals of the Institute of Statistical Mathematics},
  45\penalty0 (4):\penalty0 639--654, 1993.

\bibitem[Beran and Hall(1992)]{beran1992}
R.~Beran and P.~Hall.
\newblock Estimating coefficient distributions in random coefficient
  regressions.
\newblock \emph{The Annals of Statistics}, pages 1970--1984, 1992.

\bibitem[Beran et~al.(1996)Beran, Feuerverger, and Hall]{beran1996}
R.~Beran, A.~Feuerverger, and P.~Hall.
\newblock On nonparametric estimation of intercept and slope distributions in
  random coefficient regression.
\newblock \emph{The Annals of Statistics}, 24\penalty0 (6):\penalty0
  2569--2592, 1996.

\bibitem[Blundell et~al.(2007)Blundell, Chen, and
  Kristensen]{BCK07econometrica}
R.~Blundell, X.~Chen, and D.~Kristensen.
\newblock Semi-nonparametric iv estimation of shape-invariant engel curves.
\newblock \emph{Econometrica}, 75\penalty0 (6):\penalty0 1613--1669, 2007.

\bibitem[Bongioanni and Torrea(2009)]{bongioanni2009}
B.~Bongioanni and J.~L. Torrea.
\newblock What is a sobolev space for the laguerre function systems.
\newblock \emph{Studia Math}, 192\penalty0 (2):\penalty0 147--172, 2009.

\bibitem[Breunig and Hoderlein(2018)]{breunig2016}
C.~Breunig and S.~Hoderlein.
\newblock Specification testing in random coefficient models.
\newblock \emph{Quantitative Economics}, 9\penalty0 (3):\penalty0 1371--1417,
  2018.

\bibitem[Chen(2007)]{Chen07}
X.~Chen.
\newblock Large sample sieve estimation of semi-nonparametric models.
\newblock \emph{Handbook of Econometrics}, 2007.

\bibitem[Chen and Christensen(2015)]{chen2015}
X.~Chen and T.~M. Christensen.
\newblock Optimal uniform convergence rates and asymptotic normality for series
  estimators under weak dependence and weak conditions.
\newblock \emph{Journal of Econometrics}, 2015.

\bibitem[Chen and Christensen(2018)]{ChenChristensen2015}
X.~Chen and T.~M. Christensen.
\newblock Optimal sup-norm rates and uniform inference on nonlinear functionals
  of nonparametric iv regression.
\newblock \emph{Quantitative Economics}, 9\penalty0 (1):\penalty0 39--84, 2018.

\bibitem[Chen and Pouzo(2012)]{Chen08}
X.~Chen and D.~Pouzo.
\newblock Estimation of nonparametric conditional moment models with possibly
  nonsmooth generalized residuals.
\newblock \emph{Econometrica}, 80\penalty0 (1):\penalty0 277--321, 01 2012.

\bibitem[Chen and Pouzo(2015)]{chen2013}
X.~Chen and D.~Pouzo.
\newblock Sieve quasi likelihood ratio inference on semi/nonparametric
  conditional moment models.
\newblock \emph{Econometrica}, 83\penalty0 (3):\penalty0 1013--1079, 2015.

\bibitem[Chernozhukov et~al.(2014)Chernozhukov, Chetverikov, and
  Kato]{chernozhukov2014}
V.~Chernozhukov, D.~Chetverikov, and K.~Kato.
\newblock Anti-concentration and honest, adaptive confidence bands.
\newblock \emph{The Annals of Statistics}, 42\penalty0 (5):\penalty0
  1787--1818, 2014.

\bibitem[Coppejans and Gallant(2002)]{coppejans2002}
M.~Coppejans and A.~R. Gallant.
\newblock Cross-validated snp density estimates.
\newblock \emph{Journal of Econometrics}, 110\penalty0 (1):\penalty0 27--65,
  2002.

\bibitem[Dunker et~al.(2019)Dunker, Eckle, Proksch, and
  Schmidt-Hieber]{dunker2017}
F.~Dunker, K.~Eckle, K.~Proksch, and J.~Schmidt-Hieber.
\newblock Tests for qualitative features in the random coefficients model.
\newblock \emph{Electronic Journal of Statistics}, 13\penalty0 (2):\penalty0
  2257--2306, 2019.

\bibitem[Fan et~al.(2003)Fan, Yao, and Cai]{fan2003}
J.~Fan, Q.~Yao, and Z.~Cai.
\newblock Adaptive varying-coefficient linear models.
\newblock \emph{Journal of the Royal Statistical Society: Series B (Statistical
  Methodology)}, 65\penalty0 (1):\penalty0 57--80, 2003.

\bibitem[Fox et~al.(2011)Fox, Ryan, and Bajari]{fox2011simple}
J.~T. Fox, S.~P. Ryan, and P.~Bajari.
\newblock A simple estimator for the distribution of random coefficients.
\newblock \emph{Quantitative Economics}, 2\penalty0 (3):\penalty0 381--418,
  2011.

\bibitem[Fox et~al.(2016)Fox, il~Kim, and Yang]{fox2016simple}
J.~T. Fox, K.~il~Kim, and C.~Yang.
\newblock A simple nonparametric approach to estimating the distribution of
  random coefficients in structural models.
\newblock \emph{Journal of Econometrics}, 195\penalty0 (2):\penalty0 236--254,
  2016.

\bibitem[Gautier and Le~Pennec(2018)]{gautier2011}
E.~Gautier and E.~Le~Pennec.
\newblock Adaptive estimation in the nonparametric random coefficients binary
  choice model by needlet thresholding.
\newblock \emph{Electronic Journal of Statistics}, 12\penalty0 (1):\penalty0
  277--320, 2018.

\bibitem[Harrison and Rubinfeld(1978)]{harrison1978}
D.~Harrison and D.~L. Rubinfeld.
\newblock Hedonic housing prices and the demand for clean air.
\newblock \emph{Journal of environmental economics and management}, 5\penalty0
  (1):\penalty0 81--102, 1978.

\bibitem[Hausman et~al.(1991)Hausman, Newey, Ichimura, and Powell]{hausman1991}
J.~A. Hausman, W.~K. Newey, H.~Ichimura, and J.~L. Powell.
\newblock Identification and estimation of polynomial errors-in-variables
  models.
\newblock \emph{Journal of Econometrics}, 50\penalty0 (3):\penalty0 273--295,
  1991.

\bibitem[Hoderlein et~al.(2010)Hoderlein, Klemel{\"a}, and
  Mammen]{hoderlein2010}
S.~Hoderlein, J.~Klemel{\"a}, and E.~Mammen.
\newblock Analyzing the random coefficient model nonparametrically.
\newblock \emph{Econometric Theory}, 26\penalty0 (03):\penalty0 804--837, 2010.

\bibitem[Hoderlein et~al.(2017)Hoderlein, Holzmann, and Meister]{hoderlein2014}
S.~Hoderlein, H.~Holzmann, and A.~Meister.
\newblock The triangular model with random coefficients.
\newblock \emph{Journal of econometrics}, 201\penalty0 (1):\penalty0 144--169,
  2017.

\bibitem[Hohmann and Holzmann(2016)]{hohmann2016}
D.~Hohmann and H.~Holzmann.
\newblock Weighted angle radon transform: Convergence rates and efficient
  estimation.
\newblock \emph{Statistica Sinica}, pages 157--175, 2016.

\bibitem[Imbens and Newey(2009)]{imbens2009}
G.~W. Imbens and W.~K. Newey.
\newblock Identification and estimation of triangular simultaneous equations
  models without additivity.
\newblock \emph{Econometrica}, 77\penalty0 (5):\penalty0 1481--1512, 2009.

\bibitem[Lewbel and Pendakur(2017)]{lewbel2017unobserved}
A.~Lewbel and K.~Pendakur.
\newblock Unobserved preference heterogeneity in demand using generalized
  random coefficients.
\newblock \emph{Journal of Political Economy}, 125\penalty0 (4):\penalty0
  1100--1148, 2017.

\bibitem[Ma and Song(2015)]{ma2015}
S.~Ma and P.~X.-K. Song.
\newblock Varying index coefficient models.
\newblock \emph{Journal of the American Statistical Association}, 110\penalty0
  (509):\penalty0 341--356, 2015.

\bibitem[Mammen(1993)]{mammen1993}
E.~Mammen.
\newblock Bootstrap and wild bootstrap for high dimensional linear models.
\newblock \emph{The Annals of Statistics}, pages 255--285, 1993.

\bibitem[Masten(2018)]{masten2014}
M.~A. Masten.
\newblock Random coefficients on endogenous variables in simultaneous equations
  models.
\newblock \emph{The Review of Economic Studies}, 85\penalty0 (2):\penalty0
  1193--1250, 2018.

\bibitem[Newey(1997)]{newey1997}
W.~K. Newey.
\newblock Convergence rates and asymptotic normality for series estimators.
\newblock \emph{Journal of Econometrics}, 79\penalty0 (1):\penalty0 147--168,
  1997.

\bibitem[Newey and Powell(2003)]{NP03econometrica}
W.~K. Newey and J.~L. Powell.
\newblock Instrumental variable estimation of nonparametric models.
\newblock \emph{Econometrica}, 71:\penalty0 1565--1578, 2003.

\bibitem[Park et~al.(2015)Park, Mammen, Lee, and Lee]{park2015varying}
B.~U. Park, E.~Mammen, Y.~K. Lee, and E.~R. Lee.
\newblock Varying coefficient regression models: a review and new developments.
\newblock \emph{International Statistical Review}, 83\penalty0 (1):\penalty0
  36--64, 2015.

\bibitem[Pollard(2002)]{2002Pollard}
D.~Pollard.
\newblock \emph{A user's guide to measure theoretic probability}, volume~8.
\newblock Cambridge University Press, 2002.

\bibitem[Schennach(2007)]{schennach2007instrumental}
S.~M. Schennach.
\newblock Instrumental variable estimation of nonlinear errors-in-variables
  models.
\newblock \emph{Econometrica}, 75\penalty0 (1):\penalty0 201--239, 2007.

\bibitem[van~der Vaart and Wellner(2000)]{Vaart2000}
A.~van~der Vaart and J.~Wellner.
\newblock \emph{{Weak Convergence and Empirical Processes: With Applications to
  Statistics (Springer Series in Statistics)}}.
\newblock Springer, 2000.

\bibitem[Xia and Li(1999)]{xia1999}
Y.~Xia and W.~Li.
\newblock On single-index coefficient regression models.
\newblock \emph{Journal of the American Statistical Association}, 94\penalty0
  (448):\penalty0 1275--1285, 1999.

\bibitem[Xue and Wang(2012)]{xue2012}
L.~Xue and Q.~Wang.
\newblock Empirical likelihood for single-index varying-coefficient models.
\newblock \emph{Bernoulli}, 18\penalty0 (3):\penalty0 836--856, 2012.

\end{thebibliography}

\end{document}